\def\P{\mathbb{P}}
\def\R{\mathbb{R}}
\def\I{\mathbb{I}}
\def\eps{\varepsilon}
\def\1{\mathbf{1}}
\def\stocmode{0}
\def\arxivmode{0}
\def\fastmode{0}
\def\showauthornotes{0}
\def\showkeys{0}
\def\showdraftbox{1}
\def\showcolorlinks{0}
\def\usemicrotype{1}
\def\showfixme{1}
\newtheorem{theorem}{Theorem}[section]
\newtheorem*{theorem*}{Theorem}
\newtheorem{proposition}[theorem]{Proposition}
\newtheorem*{proposition*}{Proposition}
\newtheorem{lemma}[theorem]{Lemma}
\newtheorem*{lemma*}{Lemma}
\newtheorem{corollary}[theorem]{Corollary}
\newtheorem*{conjecture*}{Conjecture}
\newtheorem{fact}[theorem]{Fact}
\newtheorem*{fact*}{Fact}
\newtheorem*{exercise*}{Exercise}
\newtheorem*{hypothesis*}{Hypothesis}
\theoremstyle{definition}
\newtheorem{definition}[theorem]{Definition}
\newtheorem{exercise-easy}[theorem]{Exercise}
\newtheorem{exercise-med}[theorem]{Exercise}
\newtheorem{exercise-hard}[theorem]{Exercise$^\star$}
\newtheorem{claim}[theorem]{Claim}
\newtheorem*{claim*}{Claim}
\newtheorem{remark}[theorem]{Remark}
\newtheorem*{remark*}{Remark}
\newtheorem*{observation*}{Observation}
\let\mathbb\varmathbb
\definecolor{bleudefrance}{rgb}{0.01, 0.1, 1.0}
\definecolor{azure}{rgb}{0.0, 0.5, 1.0}
\newcommand{\savehyperref}[2]{\texorpdfstring{\hyperref[#1]{#2}}{#2}}
\newcommand{\Sref}[1]{\hyperref[#1]{\S\ref*{#1}}}
\newcommand{\mynotes}[1]{{\sffamily\small\color{teal}{#1}}\medskip}
\newcommand{\Authornote}[2]{{\sffamily\small\color{blue}{[#1: #2]}}\medskip}
\newcommand{\Authornotecolored}[3]{{\sffamily\small\color{#1}{[#2: #3]}}}
\newcommand{\Authorcomment}[2]{{\sffamily\small\color{gray}{[#1: #2]}}}
\newcommand{\Authorstartcomment}[1]{\sffamily\small\color{gray}[#1: }
\newcommand{\Authorfnote}[2]{\footnote{\color{red}{#1: #2}}}
\newcommand{\Authorfixme}[1]{\Authornote{#1}{\textbf{??}}}
\newcommand{\Authormarginmark}[1]{\marginpar{\textcolor{red}{\fbox{\Large #1:!}}}}
\newcommand{\myexplain}[1]{{\sffamily\small\color{red}{\noindent [Explanation:\medskip\newline \begin{quote}#1\hfill]\end{quote}}}\medskip}
\newcommand{\mynotes}[1]{}
\newcommand{\Authornote}[2]{}
\newcommand{\Authornotecolored}[3]{}
\newcommand{\Authorcomment}[2]{}
\newcommand{\Authorstartcomment}[1]{}
\newcommand{\Authorfnote}[2]{}
\newcommand{\Authorfixme}[1]{}
\newcommand{\Authormarginmark}[1]{}
\newcommand{\myexplain}[1]{}
\renewcommand{\myexplain}[1]{{\sffamily\small\color{red}{\noindent \begin{quote}{\bf Explanation:} \medskip\newline #1\end{quote}}}\medskip}
\newcommand{\Esymb}{\mathbb{E}}
\DeclareMathOperator*{\E}{\Esymb}
\newcommand{\textparen}[1]{\text{(#1)}}
\newcommand{\because}[1]{\textparen{because #1}}
\renewcommand{\because}[1]{\textparen{because #1}}
\newcommand\bdot\bullet
\renewcommand{\leq}{\leqslant}
\renewcommand{\geq}{\geqslant}
\let\epsilon=\varepsilon
\numberwithin{equation}{section}
\newcommand\MYcurrentlabel{xxx}
\newcommand{\MYstore}[2]{%
  \global\expandafter \def \csname MYMEMORY #1 \endcsname{#2}%
}
\newcommand{\MYload}[1]{%
  \csname MYMEMORY #1 \endcsname%
}
\newcommand{\MYnewlabel}[1]{%
  \renewcommand\MYcurrentlabel{#1}%
  \MYoldlabel{#1}%
}
\newcommand{\MYdummylabel}[1]{}
\newcommand{\torestate}[1]{%
  \let\MYoldlabel\label%
  \let\label\MYnewlabel%
  #1%
  \MYstore{\MYcurrentlabel}{#1}%
  \let\label\MYoldlabel%
}
\newcommand{\restatetheorem}[1]{%
  \let\MYoldlabel\label
  \let\label\MYdummylabel
  \begin{theorem*}[Restatement of \prettyref{#1}]
    \MYload{#1}
  \end{theorem*}
  \let\label\MYoldlabel
}
\newcommand{\restatelemma}[1]{%
  \let\MYoldlabel\label
  \let\label\MYdummylabel
  \begin{lemma*}[Restatement of \prettyref{#1}]
    \MYload{#1}
  \end{lemma*}
  \let\label\MYoldlabel
}
\newcommand{\restateprop}[1]{%
  \let\MYoldlabel\label
  \let\label\MYdummylabel
  \begin{proposition*}[Restatement of \prettyref{#1}]
    \MYload{#1}
  \end{proposition*}
  \let\label\MYoldlabel
}
\newcommand{\restatefact}[1]{%
  \let\MYoldlabel\label
  \let\label\MYdummylabel
  \begin{fact*}[Restatement of \prettyref{#1}]
    \MYload{#1}
  \end{fact*}
  \let\label\MYoldlabel
}
\newcommand{\restate}[1]{%
  \let\MYoldlabel\label
  \let\label\MYdummylabel
  \MYload{#1}
  \let\label\MYoldlabel
}
\newcommand{\addreferencesection}{
  \phantomsection
\ifnum\stocmode=0
  \addcontentsline{toc}{section}{References}
\else
  \addcontentsline{toc}{section}{References \hspace*{1in} --------- End of extended abstract ---------}
\fi

}
\let\origparagraph\paragraph
\renewcommand{\paragraph}[1]{\vspace*{-10pt}\origparagraph{#1.}}
\begin{document}
\title{Hashing-Based-Estimators for  Kernel Density \\in High Dimensions\footnote{A preliminary version of this paper appeared in FOCS'2017~\cite{charikar2017hashing}.}\\
}
\author{Moses Charikar\\Stanford University \\Department of Computer Science\\\texttt{moses@cs.stanford.edu} \and Paris Siminelakis\\
Stanford University\\Department of Electrical Engineering\\\texttt{psimin@stanford.edu}}
\maketitle
\thispagestyle{empty}
\begin{abstract}
Given a set of points $P\subset \R^{d}$ and a kernel $k$, the Kernel Density Estimate at a point $x\in\R^{d}$  is defined as $\mathrm{KDE}_{P}(x)=\frac{1}{|P|}\sum_{y\in P} k(x,y)$. We study the problem of designing a data structure that given a data set $P$ and a kernel function, returns \emph{approximations to the kernel density} of a query point in \emph{sublinear time}.  We introduce a class of unbiased estimators for kernel density  implemented through locality-sensitive hashing, and give general theorems bounding the variance of such estimators.
These estimators give rise to   efficient data structures for estimating the kernel density in high dimensions for a variety of commonly used kernels. Our work is the first to provide data-structures with theoretical guarantees that improve upon simple random sampling in high dimensions.
\end{abstract}
\newpage
\tableofcontents
\newpage
\setcounter{page}{1}

\section{Introduction}
A fundamental question in Statistics and Learning Theory is the following: \emph{
given a set of points $P\subset \R^{d}$ sampled from some unknown distribution $\mathcal{D}$ estimate the probability at an arbitrary point $x\in \R^{d}$.}
This problem is known as \emph{density estimation} and different ways to formalize it lead to very different statistical and computational tasks. In the past two decades the problem has attracted significant interest in theoretical computer science~\cite{chan2014efficient,acharya2017sample}. Some of the most important problems that have been studied are learning discrete distributions~\cite{kearns1994learnability}, learning mixture models~\cite{vempala2004spectral},  and more recently the topic of robust estimation in high dimensions~\cite{diakonikolas2016robust,lai2016agnostic}. 
In this paper, we focus on \emph{Kernel Density Estimation} (KDE), one of the most widely developed methods in non-parametric estimation.

\subsection{Kernel Density Estimation}

 In this approach, given a set of $n$  points $P$, starting from the empirical distribution $\tilde{\mu}(x)=\frac{1}{n}\sum_{y\in P}\delta_{y}$, one obtains a smooth distribution by ``convolving'' it with a kernel function $k$, whose smoothness is typically controlled by a parameter $\sigma>0$ called the \emph{bandwidth}. 
\begin{definition}[Kernel Density] Given a kernel function $k_{\sigma}:\R^{d}\times \R^{d}\to [0,1]$  and a dataset $P\subset \R^{d}$ we define the Kernel Density (KD) of $P$ at a point $x\in \R^{d}$ as:
\begin{equation}
\mathrm{KDE}_{P}(x):= \frac{1}{|P|} \sum_{y\in P} k_{\sigma}(x, y) 
\end{equation}
\end{definition}
This is a natural way to extend the function smoothly from a discrete set of points to the whole space that is independent of any particular parametric assumption on the underlying distribution of the data. Selecting the kernel and bandwidth are intensively studied subjects in the literature of non-parametric estimation~\cite{devroye2012combinatorial} for which there is still ongoing theoretical research~\cite{goldenshluger2011bandwidth}. The kernel function $k(x,y)$ is typically a function of only $x-y$ (shift invariant kernels) or just the euclidean distance $\|x-y\|$ (radial kernels). One of the most prominent functions is the \textbf{Gaussian kernel}
\begin{equation}
k_{\sigma}(x,y) = \exp\left(-\frac{\|x-y\|^{2}}{\sigma^{2}}\right)
\end{equation}

The importance of KDE lies in that it gives a simple and general way of approximating the underlying probability distribution that can be subsequently used to perform more complex and computationally intensive tasks.
Examples include {\em mode estimation}~\cite{arias2015estimation},
{\em outlier detection}~\cite{schubert2014generalized},
{\em local regression}~\cite{fan1996local},
{\em reproducing kernel Hilbert spaces}~\cite{scholkopf2001learning},
{\em density based clustering}~\cite{rinaldo2010generalized}, and
{\em topological data analysis}~\cite{joshi2011comparing,chen2016density}.
Kernel Density Estimation is consequently an important primitive that is a building block in many applications.

In all of the above settings, at some point, the following problem is solved:\emph{
given  $P\subset \R^{d}$,  $z\in \R^{n}$, compute  $\mathrm{KDE}^{z}_{P}(x) := \sum_{i=1}^{n}k(x,y_{i}) z_{i}$.} 
This can be computed exactly in linear time, but this is prohibitively slow for large data sets, especially since it is needed repeatedly in applications of interest.

The problem has been studied extensively in the {\em batch} setting, where given a set of $n$ points, the goal is to compute, for each of the points, a sum of contributions due to all the points, i.e. $n$ queries of the above form. Such computations are prevalent in the field of scientific computing and involve computing approximations to $y = Kz$ where $K$ is an $n\times n$ kernel matrix.
In low dimensions, Fast Multipole Methods (combining  hierarchical space partitions with Taylor approximations) were developed~\cite{lee2006dual} to reduce the trivial $O(n^2)$ runtime to $O(n \log n)$ (and $O(n)$ in some cases).
The fast multipole algorithm~\cite{greengard1987fast} has been enormously influential in numerical analysis and scientific computing; it was named as one of the top 10 algorithms of the 20th century by the editors of {\em Computing in Science and Engineering}~\cite{dongarra2000guest}. For this work, Greengard and Rokhlin received the 2001 Steele Prize. The KDE problem thus lies in the core of both scientific computing as well as machine learning.

In this paper, we study the problem of \emph{approximately} computing the KDE.  For most of the paper we fix $k_{\sigma}$ to be the Gaussian Kernel and define the following computational problem:

\begin{definition}[KDE Problem] Given a dataset $P\subset \R^{d}$ of $n$ points, and $\epsilon,\delta,\tau \in [0,1]$ construct a  data structure that given a query $x\in \R^{d}$ with $\mathrm{KDE}_{P}(x) = \mu \in [\tau,1]$ returns a number $\hat{\mu}$ such that $\P[|\hat{\mu}-\mu|\geq \epsilon \cdot  \mu]\leq \delta$. We call this problem the \emph{$(\mu,\epsilon,\delta)$-KDE problem}.
\end{definition}

\subsection{Our Contribution}

The starting point of our work is the old and tested idea of \emph{importance sampling}. Given non-negative weights $w_{1},\ldots, w_{n}$ and a distribution $Q$ over $[n]$ (inducing probabilities $q_{1},\ldots,q_{n}$), an \emph{unbiased estimator} for $\mu:=\frac{1}{n}\sum_{i=1}^{n}w_{i}$ is given by sampling $I\in [n]$ according to $Q$ and returning $Z = w_{I}/(q_{I}n)$. 
The minimum variance estimator is obtained by setting $q_{i}^{*}= w_{i}/\sum_{j}w_{j}$ for which the variance is zero. What precludes us from obtaining such probabilities is that in our setting of KDE, the weights $w_{i}=w_{i}(x):= k(x,x_{i})$ depend  on the query $x$, and thus the sampling distribution $Q$ needs to be adaptive to the query. Furthermore, having  an ideal distribution $Q^{*}$ indirectly involves knowing the \emph{normalizing constant} $\sum_{i}w_{i}=n\mu$, the very quantity we wish to estimate. Thus, the main challenge in turning the idea of importance sampling into an algorithm is to have an \emph{efficient way} to define an \emph{adaptive sampling distribution $Q(x)$} that has \emph{low variance}. We next present our methods at an abstract level and subsequently show their implications for Kernel Density Estimation.

\subsubsection{Hashing-Based-Estimators (HBE)}
Our main contribution is to introduce a hashing-based framework to \emph{succinctly} define  an adaptive distribution $Q(x)$ and to   provide sharp tools  \emph{bounding the variance} of the resulting unbiased estimator. Our estimators are formed \nolinebreak  by:

\begin{itemize}
\item \textbf{Preprocessing:} given a hash function $h$ sampled from a hash family $\mathcal{H}$, where the colision probability of 
$x,y$ is $p(x,y)$, we evaluate $h$ on   $x_1,\ldots,x_n$ and form the corresponding hash table $H$.
\item \textbf{Querying:} let  $H(x)\subseteq P$ denote  the cell where the query $x$ falls into and let $y\in P$ be a random element of $H(x)$, we return $Z_{h}(x) =\frac{1}{n} \frac{k(x,y)}{p(x,y)}|H(x)|$ (or $0$ if $H(x)$ is empty).
\end{itemize}
We say that a HBE has \emph{complexity} $T$ if the evaluation time (resp. space usage/preprocessing)  is bounded by $T$ (resp. $T\cdot n$). 
This \emph{two-level sampling} procedure induces  probabilities that  depend both on $P$ and $\mathcal{H}$. Although $|H(x)|$ is \emph{a priori} a random variable, it becomes known to us through the preprocessing step. This  sidesteps the issue of computing the normalizing constant   separately for each query and is at the core of our approach. By sampling a number of hash functions $h_{1},\ldots, h_{m}$ and creating hash tables $H_{1},\ldots, H_{m}$, at query time we can  produce $m$ independent samples that are used by the median-of-means procedure to produce an accurate estimate of the mean $\mu$.

The challenge in fully implementing this scheme is to bound the second moment of our estimator. In this regard, picking a random element from $H(x)$ turns out to be crucial, as it is this step that allows us to get an analytical handle on the second moment of $Z_{h}$.  We provide two general theorems that bound the variance of \emph{Hashing-Based-Estimators}. The first theorem applies to any  HBE.
\begin{theorem}[Two-points suffice]\label{thm:two-points}
 Up to absolute constants the variance of a HBE is maximized by  datasets where there are \emph{only two values} for the weights and sampling probabilities. 
\end{theorem}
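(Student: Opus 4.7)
The plan is to think of the variance of a HBE as a functional of the empirical distribution $\rho_P$ on $[0,1]^2$ of the pairs $(w_i, p_i) := (k(x,x_i), p(x,x_i))$ and then argue that, subject to the linear constraint $\E_{\rho_P}[w] = \mu$, this functional is maximized (up to constants) by a distribution of support size two. The advantage of this viewpoint is that it converts a combinatorial statement about ``worst-case datasets'' into a convex-analytic statement about probability measures on a two-dimensional compact set.

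First, I would derive an explicit expression for $\E[Z_h(x)^2]$. Conditioning on $h$ and using that $y$ is uniform on $H(x)$, one gets $\E[Z_h(x)^2] = \tfrac{1}{n^2} \E_h[|H(x)|\sum_{i\in H(x)} w_i^2/p_i^2] = \tfrac{1}{n^2}\sum_i w_i^2/p_i + \tfrac{1}{n^2}\sum_{i\neq j} \Pr[h(x_i)=h(x_j)=h(x)]\, w_j^2/p_j^2$. The diagonal sum is a linear functional of $\rho_P$. For the off-diagonal terms I would upper bound the joint collision probability by a simple expression in $p_i, p_j$ (such as $\min(p_i,p_j)$, or a sub-multiplicative bound when the hash family provides one), turning the whole upper bound $\widetilde V(\rho_P)$ into the sum of a linear and a positive-semidefinite bilinear functional of $\rho_P$.

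Next I would cast the worst case as the optimization problem $\sup_{\rho\in\mathcal K_\mu}\widetilde V(\rho)$ where $\mathcal K_\mu := \{\rho\in\mathcal P([0,1]^2) : \int w\,d\rho = \mu\}$. This set is convex and weak-$*$ compact, and $\widetilde V$ is convex, so by Bauer's maximum principle the supremum is attained at an extreme point of $\mathcal K_\mu$. A Carath\'eodory-style argument then shows that every extreme point of $\mathcal K_\mu$ is supported on at most two points: the set is carved out of the simplex of probability measures on $[0,1]^2$ by exactly one additional equality constraint, so an extremal measure cannot be written as a nontrivial convex combination unless its support has size at most $1+1=2$. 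Hence the worst case is realized (within $\widetilde V$) by a distribution with only two distinct $(w,p)$ pairs.

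Finally I would absorb the two sources of slack into the ``absolute constant'' in the statement: (a) the inequality relating the true joint collision probability to the surrogate $\widetilde V$, and (b) the rounding from the extremal two-point measure to an empirical distribution realizable with an integer number of points of each type. Both losses are $O(1)$. The main obstacle I expect is step (a): for hash families where $\Pr[h(x_i)=h(x_j)=h(x)]$ is not cleanly dominated by a function of $p_i$ and $p_j$ alone, the bilinear surrogate need not be positive semidefinite, and the convexity-based extremal argument would break. In that case I would instead apply Cauchy--Schwarz to decouple the cross terms, replacing the bilinear form by a purely linear one at the cost of another absolute constant, after which the same extreme-point argument goes through verbatim.
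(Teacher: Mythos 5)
Your setup matches the paper's: the second-moment identity, the bound $\Pr[i,j\in H(x)]\le\min(p_i,p_j)$, and the reformulation as an optimization over the empirical distribution of the pairs $(w_i,p_i)$ subject to the single moment constraint $\int w\,d\rho=\mu$ (whose extreme points are indeed supported on at most two atoms, by the usual Carath\'eodory/Douglas argument). The genuine gap is the convexity claim on which your extreme-point argument rests. Even in the ``clean'' case where the joint collision probability is dominated by $\min(p_i,p_j)$, the surrogate $\widetilde V(\rho)=\langle\ell,\rho\rangle+B(\rho,\rho)$ has a bilinear part whose symmetrization is \emph{not} positive semidefinite: take two atoms with $w_1=w_2=1$, $p_1=1$, $p_2=\eps$; the symmetrized $2\times2$ kernel has diagonal entries $1$ and $1/\eps$ and off-diagonal entry $\tfrac12(\eps+1/\eps)$, so its determinant is $\tfrac1\eps-\tfrac14(\eps+\tfrac1\eps)^2<0$ for small $\eps$. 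Hence $\widetilde V$ is not convex, Bauer's maximum principle does not apply, and the maximizer need not be an extreme point of $\mathcal K_\mu$. Your fallback --- ``Cauchy--Schwarz to replace the bilinear form by a purely linear one'' --- is not worked out and does not obviously cost only a constant: the natural decouplings, e.g.\ $f_if_j\le\tfrac12(f_i^2+f_j^2)$ followed by bounding row sums, introduce factors that can scale with the number of distinct probability levels rather than an absolute constant.

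The repair, and what the paper actually does (Lemma \ref{lem:two-sided} and Lemma \ref{lem:quadratic_form}), avoids convexity entirely: relax the quadratic form $f^{\top}Af$ to a \emph{bilinear} form $f^{\top}Ag$ in two independent copies $f,g$ ranging over the intersection $\{\|f\|_1\le1,\ \|f\|_{w,1}\le\mu\}$ of the $\ell_1$ ball and a weighted $\ell_1$ ball; split the indices into $S_{\pm}=\{i:w_i\gtrless\mu\}$; and on each of the four blocks apply the $\ell_1$--$\ell_\infty$ H\"older bound $x^{\top}A_{S,S'}y\le\|x\|_{v,1}\|y\|_{w,1}\max_{i\in S,j\in S'}|A_{ij}|/(v_iw_j)$, choosing on each factor whichever of the two norms is binding on that block. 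This yields $\sup f^{\top}Af\le 4\max_{i,j}f_i^*|A_{ij}|f_j^*$ with $f_i^*=\min\{1,\mu/w_i\}$, a quantity realized by a dataset with only two types of points, with no convexity assumption. If you want to keep your measure-theoretic framing, the bilinear relaxation is the missing step: a bilinear functional over a product of convex compact sets is maximized at a pair of extreme points because it is linear in each argument separately, and that legitimately recovers the two-point conclusion (at the cost of the factor $4$ from the block decomposition).
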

This characterization is extremely useful as  
given $p_{i}(x):=p(x,x_{i})$ and $w_{i}(x)$, 
it reduces bounding the variance to a simple case analysis. The bound depends on the compatibility between the probability and weights, and is captured by the maximum element of an explicitly defined matrix. Going beyond the general case we identify a natural class of HBE that induce sampling probabilities that vary as a power of the weights.
\begin{definition}\label{def:scale-free}
An HBE is \emph{$(\beta,M)$-scale free}
for parameters $\beta\in(0,1]$ and $M\geq 1$ if, for all $i\in [n]$ and $x$, 
$M^{-1}\cdot w_{i}^{\beta}(x) \leq p_{i}(x) \leq M\cdot w_{i}^{\beta}(x)$.
\end{definition}
The second theorem  provides a refined analysis of the variance of scale-free HBE that is able to capture additional structure when one exists. In particular, the upper bound on the variance improves when most of the contribution to $\mu=\mu(x)$ comes from relatively large weights $w_{i}$. We state here the weakest bound that assumes nothing about the weights. 
\begin{theorem}\label{thm:scale-free-var}
 Let $Z$ be an unbiased $(\beta,M)$-scale free HBE for $\beta \in[\frac{1}{2},1]$. Then $\E[Z^{2}] \leq \mu^{2} \cdot (4M^{3}/\mu^{\beta})$.
\end{theorem}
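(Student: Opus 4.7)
The plan is to expand $\E[Z^2]$ as a double sum over pairs of data points, bound the two-point collision probability \emph{asymmetrically}, and then apply Jensen's inequality, where the assumption $\beta\in[1/2,1]$ becomes crucial.

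First I would unwind the two-level sampling. Since $Z=|H(x)|\,k(x,y)/(n\,p(x,y))$ with $y$ drawn uniformly from $H(x)$ (and $Z=0$ when $H(x)=\emptyset$), the conditional second moment given $h$ is $\tfrac{|H(x)|}{n^{2}}\sum_{j\,:\,x_{j}\in H(x)} w_{j}^{2}/p_{j}^{2}$; averaging over $h$ and writing $|H(x)|=\sum_{i}\mathbf{1}[x_{i}\in H(x)]$ yields
\[
\E[Z^{2}]\;=\;\frac{1}{n^{2}}\sum_{i,j\in[n]}\frac{w_{j}^{2}}{p_{j}^{2}}\,\Pr\bigl[x_{i}\in H(x),\,x_{j}\in H(x)\bigr].
\]
Next, I use the trivial containment $\{x_{i},x_{j}\in H(x)\}\subseteq\{x_{i}\in H(x)\}$ so that $\Pr[x_{i},x_{j}\in H(x)]\leq p_{i}$. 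This decouples the double sum, and combined with the $(\beta,M)$-scale-free bounds $p_{i}\leq M w_{i}^{\beta}$ and $p_{j}\geq w_{j}^{\beta}/M$ gives
\[
\E[Z^{2}]\;\leq\;\frac{1}{n^{2}}\Bigl(\sum_{i} p_{i}\Bigr)\Bigl(\sum_{j} \tfrac{w_{j}^{2}}{p_{j}^{2}}\Bigr)\;\leq\;\frac{M^{3}}{n^{2}}\Bigl(\sum_{i} w_{i}^{\beta}\Bigr)\Bigl(\sum_{j} w_{j}^{2-2\beta}\Bigr).
\]

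Now since $w_{i}\in[0,1]$ and both exponents $\beta$ and $2-2\beta$ lie in $[0,1]$ precisely when $\beta\in[1/2,1]$, the maps $t\mapsto t^{\beta}$ and $t\mapsto t^{2-2\beta}$ are concave on $[0,1]$, so Jensen's inequality gives $\sum_{i} w_{i}^{\beta}\leq n\mu^{\beta}$ and $\sum_{j} w_{j}^{2-2\beta}\leq n\mu^{2-2\beta}$. Multiplying produces $\E[Z^{2}]\leq M^{3}\mu^{2-\beta}=M^{3}\mu^{2}/\mu^{\beta}$, matching the stated bound up to the constant $4$, which can be attributed to absorbing the $i=j$ diagonal term and any slack in the asymmetric collision estimate.

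The main obstacle is committing to the right asymmetric collision bound. A symmetric relaxation such as $\min(p_{i},p_{j})\leq\sqrt{p_{i}p_{j}}$ would produce a factor $w_{j}^{2-3\beta/2}$, whose concavity requires $\beta\geq 2/3$ and hence breaks at $\beta=1/2$; using $\Pr[\cdot]\leq p_{i}$ (equivalently the extremal choice $s=1$ in the interpolation $\min(p_{i},p_{j})\leq p_{i}^{s}p_{j}^{1-s}$) is what keeps both resulting exponents inside $[0,1]$ throughout the hypothesized range $\beta\in[1/2,1]$, which is precisely why this range appears in the theorem. An alternative route would be to first invoke Theorem~\ref{thm:two-points} to reduce the worst case to a dataset with only two distinct $(w,p)$-values and then verify the bound by direct case analysis, which makes the boundary roles of $\beta=1/2$ and $\beta=1$ particularly transparent.
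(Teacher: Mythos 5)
Your proof is correct, but it takes a genuinely different and in fact simpler route than the paper's. Both arguments start from the same exact identity $\E[Z^2]=\frac{1}{n^2}\sum_{i,j}\frac{w_j^2}{p_j^2}\Pr[x_i,x_j\in H(x)]$ (this is Lemma \ref{lem:var-basic} with the indices written out), but the paper bounds the joint collision probability by the symmetric $\min\{p_i,p_j\}$, sorts the probabilities, and then controls the resulting expression $\sum_i w_i^{2-\beta}\bigl(i+\sum_{j>i}(w_j/w_i)^{\beta}\bigr)$ via the Monotone H\"{o}lder inequality (Lemma \ref{lem:mon-holder}), which requires a fairly delicate induction. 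Your asymmetric bound $\Pr[x_i,x_j\in H(x)]\le p_i$ decouples the double sum into $\bigl(\sum_i p_i\bigr)\bigl(\sum_j w_j^2/p_j^2\bigr)$, after which two applications of the elementary power-mean inequality $\sum_i w_i^{\alpha}\le n\mu^{\alpha}$ (valid exactly for $\alpha\in[0,1]$, which is where $\beta\in[\frac{1}{2},1]$ enters, as you correctly identify) finish the job --- and actually yield the stronger constant $M^3$ in place of $4M^3$, so nothing needs to be ``absorbed'' into the factor $4$. What the paper's heavier machinery buys is the refined localized version (Theorem \ref{thm:scale-free}): the decomposition by the set $B_{\tau,\mu}$ of heavy points exploits which pairs actually collide, and your decoupled product discards that information, so it cannot recover the improved exponents for $(\tau,\gamma)$-localized queries. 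For the unconditional statement asked here, your argument is a clean and complete proof.
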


The main technical tools behind the analysis are two H\"{o}lder-type inequalities that we develop. The first one (Lemma \ref{lem:two-sided}) is a simple two-sided (matrix) extension of H\"{o}lder's inequality that bounds a quadratic form over the intersection of two weighted $\ell_{1}$-balls. A clever application of this inequality gives us the proof of  Theorem \ref{thm:two-points}.  The second  (Lemma \ref{lem:mon-holder}) is a non-trivial H\"{o}lder-type inequality for monotone vectors that in combination with some easy consequences of H\"{o}lder's inequality gives us the refined analysis of \emph{scale-free} estimators. The proof also shows the possibility of exploiting other structural assumptions of the data. This is important in the statistical setting (where the data set is sampled from a distribution) or for parameter tuning in practice.

Our theorem shows that under no structural assumptions, the optimal choice is $\beta^{*}=1/2$ and results in an estimator with relative variance $ \frac{\mathrm{Var}[Z]}{(\E[Z])^{2}}\leq\mathrm{V}(\mu) = 4M^{3}\cdot  \mu^{-1/2}$. Using the \emph{Median-of-means} (MoM) technique,
we can estimate $\mu$ within a multiplicative accuracy of $(1\pm \eps)$ using $O\left(M^{3}\frac{1}{\epsilon^{2}}\frac{1}{\sqrt{\mu}}\log(1/\delta)\right)$ independent samples. Observe that this does not directly imply an algorithm to estimate $\mu$ using this many samples, as setting the number of samples (sufficient for accurate estimation) requires approximate knowledge of $\mu$, the very quantity we are aiming to estimate. 

We resolve this issue by proposing an adaptive procedure that uses $O(1)$ times additional samples to get a constant factor approximation to $\mu$. 
We start with an overestimate of $\mu$ and iteratively decrease it until we get close enough to the truth where a consistency check is satisfied. The resulting algorithm, \emph{Adaptive Mean Relaxation}, is applicable to settings where one has an unbiased estimator whose (upper bound on) variance is a non-decreasing function of the mean $\mu$ and the relative variance is a decreasing function of the mean. 
 
\subsubsection{Kernel Density Estimation through Locality Sensitive Hashing}

We next turn to address the $(\mu,\epsilon,\delta)$-KDE problem. To provide intuition about the problem we first analyze two simple randomized estimators, the (uniform) Random Sampling (RS) estimator and an estimator based on Random Fourier Features (RFF)~\cite{rahimi2007random}. We show that in the worst case, the first has variance bounded by $\mu$ whereas the RFF estimator has constant variance. 
Using the MoM framework and our adaptive procedure, one immediately gets an algorithm to solve the KDE problem using $O(\min\{\frac{1}{\epsilon^{2}}\frac{1}{\mu}\log(1/\delta),n\})$ samples that is polynomial in $(\epsilon,\mu,\log(1/\delta))$. 

In order to improve upon this simple bound, we employ the framework of Hashing-Based-Estimators instantiated with \emph{Locality Sensitive Hashing} schemes. All our results follow a similar theme: (a) we obtain pointwise upper and lower bounds on the collision probabilities of the hash functions, and (b) we bound the variance of the estimator by invoking either Theorem \ref{thm:two-points} or Theorem \ref{thm:scale-free-var} in cases where we are able to obtain scale-free estimators (cf. Table \ref{my-label}). 

\begin{theorem}[Informal] \label{thm:meta}
There exist scale-free HBE for the Gaussian, Exponential and $t$-Student kernels.
\end{theorem}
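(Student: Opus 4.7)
The plan is, for each of the three kernels, to exhibit a hash family $\mathcal{H}$ whose collision probability satisfies the two-sided inequality $M^{-1}k_\sigma(x,y)^\beta \le p(x,y) \le M\, k_\sigma(x,y)^\beta$ of Definition~\ref{def:scale-free}. The natural target will be $\beta = 1/2$, which, as noted after Theorem~\ref{thm:scale-free-var}, is the optimal choice for the scale-free variance bound; it is also the easiest target to match because $k_\sigma^{1/2}$ has the same functional form as $k_\sigma$ up to a rescaling of the bandwidth.

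For the Gaussian kernel $k_\sigma(x,y)=\exp(-\|x-y\|^2/\sigma^2)$, I would use the $p$-stable Euclidean LSH of Datar--Indyk--Immorlica--Mirrokni: draw $g\sim\mathcal{N}(0,I_d)$ and $b\sim\mathrm{Unif}[0,w)$, and hash $x$ to $\lfloor(\langle g,x\rangle+b)/w\rfloor$. The collision probability at distance $r$ has a known closed form $\phi(r/w)$, and by concatenating $t$ independent copies the total probability becomes $\phi(r/w)^t$. Since $-\log\phi(r/w) = \Theta(r^2/w^2)$ in the near regime, choosing $w=\Theta(\sigma)$ and $t$ on the order of a constant aligns $\phi(r/w)^t$ with $k_\sigma(x,y)^{1/2}$ up to multiplicative constants. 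For the Exponential kernel $k_\sigma(x,y)=\exp(-\|x-y\|/\sigma)$, the natural choice is an $\ell_1$-style LSH (e.g.\ $p$-stable with Cauchy projections, or the same Euclidean scheme in the far regime) whose log-collision probability is linear in distance; here the match is particularly clean because both log-probabilities scale linearly in $\|x-y\|$. For the $t$-Student kernel $(1+\|x-y\|^2/\sigma^2)^{-p}$, polynomial decay is needed, which I would obtain by randomising the bucket width $w$ of the Euclidean scheme via a heavy-tailed prior, so that the resulting collision probability inherits a matching polynomial tail.

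The main technical obstacle is producing \emph{two-sided} pointwise bounds on the collision probability uniformly over all distances $r \ge 0$. LSH analyses for nearest-neighbour search typically establish only a one-sided gap between near and far probabilities, whereas the scale-free property demands matching in both directions. For the Gaussian case the delicate regime is $r\approx w$, where neither the quadratic nor the linear asymptotics of $\phi$ is sharp, and one has to appeal to direct numerical bounds on $\phi$. For the $t$-Student case, matching a polynomial tail forces the bucket structure to have effectively unbounded diameter, which also requires care in setting up the preprocessing step and in checking that $|H(x)|$ is well-controlled. Once these sandwich inequalities are verified, the $(\beta,M)$-scale free property of the induced HBE is immediate from Definition~\ref{def:scale-free}.
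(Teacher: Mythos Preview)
Your proposal has a genuine gap in the Gaussian case and takes a more complicated route than necessary for the other two.

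For the Gaussian kernel, your claim that the Datar--Indyk--Immorlica--Mirrokni Euclidean LSH satisfies $-\log\phi(r/w)=\Theta(r^{2}/w^{2})$ in the near regime is incorrect. The collision probability at distance $r=cw$ satisfies $p_{1}(c)=e^{-\Theta(c)}$ for small $c$ (see Lemma~\ref{lem:one-dim}, equation~\eqref{eq:exp-bounds}) and $p_{1}(c)=\Theta(1/c)$ for large $c$; in neither regime does one get $e^{-\Theta(c^{2})}$. Consequently, no choice of $w$ and number of concatenations will make this scheme match $e^{-\beta r^{2}}$ uniformly. The paper confirms this: Section~\ref{sec:euclidean} uses Euclidean LSH for the Gaussian kernel only to obtain a \emph{non}-scale-free estimator analysed via Theorem~\ref{thm:two-points}. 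To get a scale-free HBE for the Gaussian kernel the paper instead uses the Andoni--Indyk ball-carving scheme (Section~\ref{sec:gaussian}), whose collision probability behaves like $e^{-\Theta(tc^{2})}$ and so can be tuned to match the Gaussian; the cost is that the resulting $M$ is $e^{O(R^{4/3}\log\log n)}$ rather than a constant.

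For the Exponential and $t$-Student kernels your high-level idea is right, but the paper's implementation is simpler and the details you suggest are off. The Exponential kernel is handled by Euclidean LSH in the \emph{near} regime (small $c$), where $p_{1}(c)\approx e^{-\sqrt{2/\pi}\,c}$ already decays linearly in the distance; there is no need for Cauchy projections, and the far regime is the wrong one. The $t$-Student kernel is handled by the very same Euclidean LSH at a \emph{fixed} width $w=\sqrt{2\pi}$: the polynomial tail $p_{1}(c)=\Theta(1/c)$ for $c\ge 1$ directly yields $p_{1}^{q}(c)\asymp (1+r^{p})^{-q/p}$ after concatenating $q$ copies, so no heavy-tailed prior over bucket widths is needed. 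In both cases the two-sided bounds you worry about follow from the explicit series expansion and the exponential bounds in Lemma~\ref{lem:one-dim}.
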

\begin{table}[]
\centering
\label{my-label}
\caption{Scale free estimators for KDE using LSH }
\begin{tabular}{@{}lcc@{}}
\toprule
 Kernel                      & \multicolumn{1}{c}{$(\beta,M)$} & \multicolumn{1}{c}{Complexity $T$} \\ \midrule
 $e^{-\|x-y\|^{2}}$    &   $(\beta,e^{O(R^{\frac{4}{3}}\log\log n)})$                       & $ e^{O(R^{\frac{4}{3}}\log\log n)}$                              \\
$e^{-\|\mathbf{x}-y\|}$           & $(\beta,\sqrt{e}\ )$                      & $O(dR^{2})$                                      \\
$\frac{1}{1+\|\mathbf{x}-y\|_{2}^{p}}$ & $(\frac{q}{p},3^{q})$                             & $O(dp)$                                       \\ \bottomrule
\end{tabular}

\end{table}

Using our  theorem for scale-free estimators and the adaptive algorithm, we arrive at our main result for the KDE problem.

\begin{theorem}\label{thm:main-kernel}
For a kernel $k$ and dataset $P$ for which there exists a $(\frac{1}{2},M)$-scale-free estimator with complexity $T$,  there exists a data structure that solves the $(\mu,\epsilon,\delta)$-KDE problem $\forall \mu\in[\tau,1]$ using $O(M^{3} \frac{1}{\epsilon^{2}}\frac{1}{\sqrt{\mu}}\log(1/\delta)T)$ time and  $O(M^{3} \frac{1}{\epsilon^{2}}\frac{1}{\sqrt{\tau}}\log(1/\delta)T\cdot n)$ space.
\end{theorem}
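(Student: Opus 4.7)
The plan is to assemble the three earlier tools in the paper—the scale-free variance bound (Theorem \ref{thm:scale-free-var}), the Median-of-Means (MoM) amplification, and the Adaptive Mean Relaxation (AMR) procedure—into a single data structure, pushing all dependence on the unknown $\mu$ out of the preprocessing and into the query phase. First, I would instantiate Theorem \ref{thm:scale-free-var} with $\beta=\tfrac12$ on the hypothesized $(\tfrac12,M)$-scale-free HBE to conclude that each sample $Z$ is an unbiased estimator of $\mu$ with relative variance $\mathrm{Var}[Z]/\mu^2 \le 4M^3/\sqrt{\mu}$. Standard MoM then gives: for any known target level $\nu$, an estimate that is within $(1\pm\eps)\mu$ with probability $\ge 1-\delta$ is obtained from $m(\nu) = \Theta\!\bigl(\eps^{-2}M^3\nu^{-1/2}\log(1/\delta)\bigr)$ samples, provided $\nu = \Theta(\mu)$.

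Next, the preprocessing. Because the query protocol below may ask for up to $m(\tau)$ independent samples in the worst case (when $\mu$ is as small as $\tau$), I would sample $M_{\max} = m(\tau) = \Theta\!\bigl(\eps^{-2}M^3\tau^{-1/2}\log(1/\delta)\bigr)$ hash functions $h_1,\dots,h_{M_{\max}}$ from the LSH family of the given HBE, evaluate them on every $x_i\in P$, and store the resulting hash tables $H_1,\dots,H_{M_{\max}}$. By the complexity hypothesis each table costs $O(Tn)$ space, so the total space is $O\bigl(\eps^{-2}M^3\tau^{-1/2}\log(1/\delta)\cdot T\cdot n\bigr)$, matching the statement.

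For queries I would run AMR. Start with the overestimate $\hat\mu_0 = 1$ and at round $i$ draw $m(\hat\mu_i)$ samples $Z_h(x)$ from the precomputed tables, forming a MoM estimate $\tilde Z_i$. Perform the consistency check $\tilde Z_i \ge c\hat\mu_i$ for an absolute constant $c<1$: if it fails, replace $\hat\mu_{i+1} \leftarrow \hat\mu_i/2$ and repeat. Using that the relative-variance upper bound $4M^3/\sqrt{\cdot}$ is decreasing in its argument, one shows that conditioned on a good event of probability $\ge 1-\delta$ (by union bound over the $O(\log(1/\tau))$ rounds, which only inflates the $\log(1/\delta)$ budget by an absorbed constant): the check cannot succeed while $\hat\mu_i \gg \mu$, and the first round at which it succeeds gives $\hat\mu_i = \Theta(\mu)$, at which point one final MoM with $m(\hat\mu_i)$ samples yields the $(1\pm\eps)$ estimate. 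Each sample costs $O(T)$ time (one bucket lookup, one random element, one kernel evaluation), and because the $m(\hat\mu_i)=\Theta(\hat\mu_i^{-1/2})$ sample counts form a geometric sequence dominated by the last term, the total query time is $O\bigl(\eps^{-2}M^3\mu^{-1/2}\log(1/\delta)\cdot T\bigr)$, as claimed.

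The main obstacle is the correctness analysis of AMR: one must show simultaneously (i) that the consistency check is unlikely to fire when $\hat\mu_i$ is still far above $\mu$ (no premature termination), and (ii) that once $\hat\mu_i = \Theta(\mu)$ the check succeeds with high probability. Both facts rely on the fact that the upper bound on $\mathrm{Var}[Z]/\mu^2$ is a non-increasing function of $\mu$ while $\mathrm{Var}[Z]$ itself is non-decreasing in $\mu$; these are the structural properties under which AMR was stated in the earlier discussion of the framework, and Theorem \ref{thm:scale-free-var} supplies exactly them for $\beta\ge\tfrac12$. The only subtlety is calibrating the constant $c$ in the check and the MoM confidence per round so that a single union bound over the $O(\log(1/\tau))$ rounds preserves the $\log(1/\delta)$ rate; this is a routine Chebyshev calculation using the scale-free variance bound.
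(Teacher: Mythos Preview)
Your proposal is correct and follows essentially the same route as the paper: combine the variance bound of Theorem~\ref{thm:scale-free-var} at $\beta=\tfrac12$ with the AMR procedure wrapped inside the data-structure of Theorem~\ref{thm:main}, preprocessing enough independent hash tables for the worst case $\mu=\tau$ and letting the query phase consume only $O(V(\mu))$ of them. The only cosmetic differences are that the paper decreases the running guess by a factor $(1-\gamma)$ with $\gamma=\Theta(1)$ rather than halving, uses a two-sided consistency check $|Z_i-\mu_i|\le c\mu_i$, and handles the union bound over $O(\log(1/\tau))$ rounds via the constraint $\delta\le 1/\log(1/\tau)$ (so that $\log(\log(1/\tau)/\delta)=O(\log(1/\delta))$) rather than claiming the overhead is an absorbed constant; none of these affect the argument or the final bounds.
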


As an application, we show that one can use such a data structure to get an \emph{approximate vector-matrix multiplication algorithm} for Kernel matrices using time that is adaptive to the vector and is always bounded by $\frac{n^{1+o(1)}}{\sqrt{\tau^{'}}}\frac{1}{\epsilon^{2}}$ where $\epsilon,\tau^{'}$ indicate respectively the relative and additive error per coordinate (cf. Theorem \ref{thm:matrix}). 
This result is important as it improves on the main bottleneck of Kernel Ridge Regression~\cite{avron2016faster}, that is, multiplying a dense Kernel matrix with a vector and requires time $O(n^{2})$ in general.

\subsubsection{Lower Bounds for KDE problem}

We also complement our results by providing a reduction between hard instances for the Approximate Nearest Neighbor Search problem to the $(\mu,\epsilon,\delta)$-KDE problem. We show that the latter is at least as hard as the $(r,c)$-ANNS with $n=\frac{1}{\mu}$ points and $c=O(\frac{\log(n)}{\log(1/\epsilon)})$ under the Hamming distance. Combined with the results of Panigrahy, Talwar, Wieder~\cite{panigrahy2010lower} and Andoni et al~\cite{andoni2017optimal}, we get non-trivial lower bounds in the \emph{cell-probe model}  with a \emph{single probe} that captures an interesting class of algorithms based on \emph{adaptive coresets}. 

\subsection{Related Work}

The problem of Kernel Density Estimation although widely studied in low-dimensions~\cite{greengard1991fast} has largely been unexplored in high dimensions~\cite{march2015askit}.
In recent parallel and independent work, Spring and Shrivastava~\cite{spring2017new} introduced the idea of using locality sensitive hashing as a sampling scheme 
to estimate the \emph{partition function} of log-linear models, 
albeit without theoretical guarantees.

\paragraph{Coresets} The problem of KDE has mostly been investigated in the context of \emph{coresets}. The first theoretical work we are aware of is that of Phillips~\cite{phillips2013varepsilon}. Given a kernel $k$ and a set $P\subset \R^{d}$, a subset $S\subset P$ is an  $\epsilon$-coreset if $\left|\mathrm{KDE}_{P}(x) - \mathrm{KDE}_{S}(x) \right| \leq \epsilon$ for all $x\in \R^{d}$.  
Phillips uses techniques from discrepancy theory to show that one can construct an $\epsilon$-coreset of size $O\left(1/\epsilon^{2-\frac{4}{d+2}}\log^{1-\frac{2}{d+2}}(1/\epsilon\delta)\right)$ with probability at least $1-\delta$. For large $d$ the bound deteriorates and becomes similar to what one gets by simple random sampling $O(\frac{1}{\epsilon^{2}}\log(1/\delta))$ which is known to be tight. For relative error, recent work~\cite{Zheng} uses random sampling to give a similar guarantee that roughly requires $O(\frac{1}{\epsilon^{2}}\frac{1}{\mu})$ samples. Our lower bound against the $(m,w,1)$-cell probe models encompasses algorithms of this kind, and shows that any set $S$ (not necessarily a subset of $P$) that can be used to answer the $(\mu,\epsilon,\delta)$-problem must have size at least $\Omega(\frac{1}{\mu})$. The implication is that in terms of constructing a  \emph{coreset} for KDE in high dimensions, Random Sampling is essentially optimal.

\paragraph{Kernel Matrix-Vector Multiplication} A closely related idea to that of coresets, is that of Nystr\"{o}m approximation. In this method, given a kernel matrix $K $ a set of $s$ columns (points) is selected and subsequently $K$ is projected on their span. 
Musco and Musco~\cite{musco2017recursive} propose a method based on recursive leverage-score sampling that, using $s=\Theta(k\log k)$ points, $\tilde{O}(nk^{2})$ time and $\tilde{O}(nk)$ space,   outputs a matrix $\tilde{K}$ such that $\|\tilde{K}-K\|_{2}\leq \lambda$ with $\lambda = \frac{1}{k}\sum_{i=k+1}^{n} \sigma_{i}(K)$ . One can use this algorithm to obtain an approximate Kenrel-Matrix Vector Multiplication algorithm $\hat{y}\approx Kz$.  For kernel matrices of large rank, like the ones corresponding to the equilateral metric of $r=\sqrt{n}=\tau^{-1}$ clusters consisting of $\sqrt{n}$ identical points, their algorithm requires $\tilde{O}(n^{2})$ time and space $\tilde{O}(n^{3/2})$ whereas our algorithm requires $\tilde{O}(n^{\frac{5}{4}})$ time and space  for the Exponential and Polynomial kernel, and $n^{\frac{5}{4}+o(1)}$ for the Gaussian kernel. However,  the algorithm of Musco and Musco applies to any kernel and gives guarantees that hold simultaneously for any test vector $z$, whereas our method applies for a single (non-negative) vector $z$.

\subsection{Open questions}

Our work leaves open a few intriguing directions. 
\paragraph{Data-dependent Hashing} There is a recent line of work~\cite{andoni2015optimal,andoni2017optimal} that designs data-dependent LSH schemes that are  optimal within a certain class of hashing-based algorithms. We believe that modifications of such schemes can be used for the purposes of KDE.

\paragraph{Batch Setting}  The study of the offline or batch setting for Nearest Neighbor Search has received renewed interest over the past  years and has brought a wealth of techniques into  light~\cite{valiant2012finding,alman2015probabilistic}. Given the connection between KDE and the Nearest Neighbor Search problem, it would be of practical and theoretical interest to design data-structures that offer provable speedups for the offline setting of the KDE problem. In the regime where $\epsilon=\exp(-\omega(\log^{2}(n)))$ there is recent work~\cite{backurs2017fine} that shows that under SETH no significant improvements can be made beyond quadratic time.

\paragraph{Applications of HBE} It would be interesting to explore other extensions of HBE, e.g. get theoretical guarantees for estimating the \emph{partition function} of log-linear models~\cite{spring2017new} or analyze the performance of multi-probe schemes.

\subsection{Outline of the paper}
In Section \ref{sec:prelims}, we present some preliminaries and describe the general approach of deriving unbiased estimators and bounding their variance through H\"{o}lder-type inequalities, whose proofs are presented in Section \ref{sec:holder}. In Section \ref{sec:hbe}, we introduce Hashing-Based-Estimators and give two general tools to bound their variance. In Section \ref{sec:amr}, we present a framework for adaptively estimating the mean of a random variable when we have a known upper bound on the variance that depends on the unknown mean. In Sections \ref{sec:euclidean} and \ref{sec:gaussian}, we use Locality Sensitive Hashing schemes to derive unbiased estimators for the Exponential, $t$-Student and Gaussian Kernels that when instantiated within the framework of Section \ref{sec:amr} give rise to efficient data structures for the KDE problem. Using these data structures we then show in Section \ref{sec:matrix} how to perform approximate Fast Kernel-Matrix Vector multiplication. In Section \ref{sec:lower}, we present a lower bound for KDE for the Gaussian kernel in the cell-probe model.

\section{Preliminaries}\label{sec:prelims}
\paragraph{Notation} For a vector $x\in \R^{n}$ let $\|x\|^{p}_{p}:=\sum_{i=1}^{n}|x_{i}|^{p}$ for $p\geq 1$ denote the $\ell_{p}$-norm. For a strictly positive vector $w\in \R_{++}^{n}$, we denote by $\|x\|_{w,1}:= \sum_{i}w_{i}|x_{i}|$ the weighted $\ell_{1}$-norm. Given a number $\tau>0$ let $(x_{i})_{\tau}:=\max\{x_{i},\tau\}$. 
 Given a probability distribution $\nu$, we write $Y\sim \nu$ to denote that $Y$ is sampled from $\nu$. For a set $S$, $U(S)$ denotes the uniform distribution over $S$ and  $S^{\otimes k}$ denotes the Cartesian product of $S$ with itself $k\geq 1$ times. Similarly, $\nu^{\otimes k}$ denotes the product distribution $\nu\times \ldots \times \nu$. We use $N(0,I_{d})$ to denote the standard multivariate normal distribution and $\Phi(\cdot)$ its CCDF. Throughout the  paper $P=\{x_{1},\ldots,x_{n}\}\subset \R^{d}$ will denote  a set of $n$ points from $\R^{d}$ and we shall assume that $\mathrm{diam}(P\cup \{x\}) \leq R$ for any query $x\in \R^{d}$ where $R=R(n)>0$.
\subsection{$V$-bounded Estimators and Median-of-Means}
 For $\epsilon,\delta>0$, 
given a query point $x\in \R^{d}$ and weights $w_{1}(x),\ldots,w_{n}(x)$ induced by a set $P$, our goal is to estimate $\mu(x):=\frac{1}{n}\sum_{i=1}^{n}w_{i}(x)$ within a multiplicative $(1\pm \epsilon)$ accuracy with probability at least $1-\delta$. When it is clear from the context we will often drop the dependence on $x$ and simply write $w_{i}$ and $\mu$. An estimator $Z\sim \nu$ is called \emph{unbiased} for our problem if $\E[Z] = \mu$. The basis of our approach is to design an unbiased estimator that has small variance relative to $\mu$.

\begin{definition} Given a non-increasing function $V:\R\to \R_{+}$, we call an unbiased estimator $Z$,  $V$-bounded if $\E[Z^{2}] \leq \mu^{2}\cdot V(\mu)$ and $\mu^{2}V(\mu)$ is non-decreasing.
\end{definition}

The function $V$ is intimately related to the number of samples need to estimate $\mu$ and is often referred to as (a bound on) the \emph{relative variance}. The principal approach to use such an estimator is the Median-of-Means (MoM) technique~\cite{alon1996space}, that allows one to get an estimate $Z_{\epsilon,\delta}$ such that $ \P\left[\left|Z_{\epsilon,\delta} - \mu \right| \geq \epsilon \cdot\mu \right] \leq \delta$ using $O(\frac{1}{\epsilon^{2}}V(\mu)\log(\frac{1}{\delta}))$ samples.

\begin{algorithm}[h]
	\caption{Median-of-Means (MoM)}
	\label{alg:example}
	\begin{algorithmic}[1]
		\STATE {\bfseries Input:}   Estimator $Z\sim \nu$, $V\geq 0$, accuracy $\epsilon\in (0,1)$, success prob. $\delta\in(0,1)$.
		\STATE $m(\epsilon,V) \gets \lceil \frac{6}{\epsilon^{2}}V\rceil$, $L(\delta) \gets \lceil 9\log(1/\delta)\rceil$.
		\STATE $Z^{(i)}_{j} \stackrel{iid}{\sim} \nu$ for $j=1,\ldots, m$, and $i=1,\ldots, L$.
		\STATE $Z^{(i)} \gets \mathrm{mean}\{Z_{1}^{(i)},\ldots,Z_{m}^{(i)}\}$ for $i=1,\ldots, L$.\label{code:means}
		\STATE {\bf Output:} $Z_{\epsilon,\delta} \gets \mathrm{median}\{Z^{(1)},\ldots, Z^{(L)}\}$
	\end{algorithmic}
\end{algorithm}

Most of our effort in this paper goes into obtaining efficient $V$-bounded estimators with $V$ being of the form $V(\mu)= n^{o(1)} \cdot\mu^{-\Delta}$
for some $0\leq \Delta\leq 2$.   Once we have such an estimator, we will be able to combine it with the MoM technique and eventually get an estimation algorithm. Hence, the main challenge is bounding the variance.

\subsection{Bounding the Variance}

Given an unbiased estimator $Z$ in order to bound the variance we first obtain a simple data-dependent upper bound on $\E[Z^{2}] \leq F(w,P)$ for some function $F$, and then for a class of datasets $\mathcal{P}$, we aim to show that $\sup_{P\in\mathcal{P}} \{F(w,P)|\sum_{i}w_{i}=n\mu\} \leq\mu^{2} V(\mu)$. 

\subsubsection{Baseline estimators}
As a warm up, we present two simple unbiased estimators for the Kernel Density Problem and bound their variance.  Let $Y\sim U[P]$,  the \emph{random sampling} estimator is given by $Z_{\mathrm{RS}}(x) = k(x,Y)$.  Let $\theta\sim U[0,\pi]$ and $g\sim N(0, I_{d})$, the
\emph{Random Fourier Features}~\cite{rahimi2007random} estimator is given by $
Z_{RFF}(x) = \frac{2}{|P|}\sum_{y\in P} \left(\cos(g^{\top}x+\theta) \cos(g^{\top}y+\theta) \right)
$.
\begin{proposition} The RS and RFF estimators are unbiased and satisfy respectively
$\E[Z^{2}_{RS}] \leq \mu^{2}\cdot \mu^{-1}$ and $
\E[Z^{2}_{RFF}]  \leq \mu^{2}\cdot 4\mu^{-2}$. 
Moreover, the bounds  are tight up to constants in the worst case.
\end{proposition}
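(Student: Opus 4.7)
The proposition has two estimators, each requiring an unbiasedness check, a second-moment upper bound, and a worst-case tightness example. I expect none of the steps to be genuinely hard; they are routine calculations relying on (i) the fact that $k(x,y)\in[0,1]$ and (ii) basic trigonometric identities together with the characteristic function of a Gaussian. The plan is as follows.

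\textbf{Random Sampling.} Unbiasedness is immediate: $\E[Z_{RS}(x)] = \frac{1}{n}\sum_{i=1}^{n} k(x,x_i) = \mu$. For the second moment, I use that $k(x,y)\in[0,1]$, so $k(x,x_i)^2 \le k(x,x_i)$, giving
\[
\E[Z_{RS}^2] \;=\; \frac{1}{n}\sum_{i=1}^n k(x,x_i)^2 \;\le\; \frac{1}{n}\sum_{i=1}^n k(x,x_i) \;=\; \mu \;=\; \mu^2\cdot \mu^{-1}.
\]
For tightness, I take a dataset in which a $\mu$-fraction of the points coincide with $x$ (so $k(x,x_i)=1$) and the remaining points lie so far from $x$ that $k(x,x_i)\approx 0$. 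Then $\mathrm{KDE}_P(x)=\mu$ exactly and $\E[Z_{RS}^2]=\mu$, matching the upper bound up to constants.

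\textbf{Random Fourier Features.} For unbiasedness, apply the product-to-sum identity
\[
2\cos(g^\top x+\theta)\cos(g^\top y+\theta) \;=\; \cos\!\bigl(g^\top(x-y)\bigr) + \cos\!\bigl(g^\top(x+y)+2\theta\bigr).
\]
Since $2\theta$ is uniform on a length-$2\pi$ interval, the second term vanishes in expectation over $\theta$, and $\E_g[\cos(g^\top(x-y))]$ equals the characteristic function of $N(0,I_d)$ evaluated at $x-y$, i.e.\ $\exp(-\tfrac12\|x-y\|^2)$, which (after appropriate scaling of $g$, absorbed into the kernel bandwidth) equals $k(x,y)$. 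Averaging over $y\in P$ gives $\E[Z_{RFF}]=\mu$. For the variance, the key observation is the pointwise bound
\[
|Z_{RFF}(x)| \;\le\; \frac{2}{n}\sum_{y\in P} |\cos(g^\top x+\theta)|\cdot|\cos(g^\top y+\theta)| \;\le\; 2,
\]
so $\E[Z_{RFF}^2]\le 4 = \mu^2\cdot 4\mu^{-2}$.

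\textbf{Tightness for RFF.} To show the bound cannot be improved beyond constants in the worst case, I consider a dataset for which $\mu$ is tiny (e.g.\ all points are at distance $\gg \sigma$ from $x$, so $k(x,x_i)$ is essentially zero). With nontrivial probability over $\theta$ and $g$ the product $\cos(g^\top x+\theta)\cos(g^\top y+\theta)$ is a constant-order positive quantity for every $y$ simultaneously (say, by conditioning on the event that $|g^\top x+\theta \bmod \pi| \le \pi/6$, which has constant probability, and noting that the cosines at other points are at worst $O(1)$), yielding $\E[Z_{RFF}^2] = \Omega(1)$. Since $\mu^2\cdot 4\mu^{-2}=4$, this matches the upper bound up to constants, completing the proposition.
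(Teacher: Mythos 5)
Your unbiasedness arguments and both second-moment upper bounds are correct and follow the same route as the paper: $k^2\le k$ for RS, and the pointwise bound $|Z_{RFF}|\le 2$ for RFF; the RS tightness example (a $\mu$-fraction of points at $x$, the rest far away) is also the paper's example.

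The one genuine gap is in your RFF tightness argument. You take a dataset in which all points are merely \emph{far from $x$}, but that is not enough: writing $Z_{RFF}(x)=2\cos(g^{\top}x+\theta)\cdot\frac{1}{n}\sum_{y\in P}\cos(g^{\top}y+\theta)$, you need the averaged factor $\frac{1}{n}\sum_{y}\cos(g^{\top}y+\theta)$ to be $\Omega(1)$ in magnitude with constant probability, and for a spread-out dataset the phases $g^{\top}y+\theta$ decorrelate across $y$ and this average concentrates near $0$ (indeed the exact second-moment formula $\E[Z_{RFF}^2]=\frac{1}{n}+\frac{1}{n}\sum_y k(2x,2y)+\frac{1}{n^2}\sum_{y\neq z}k(y,z)+\frac{1}{n^2}\sum_{y\neq z}k(2x,y+z)$ shows that if the points are mutually far apart the second moment is $O(1/n)$, not $\Omega(1)$). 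Your parenthetical justification --- that the cosines at the data points are ``at worst $O(1)$'' --- gives only an upper bound and cannot certify that they are simultaneously positive and bounded below. The fix is the paper's construction: place all $n$ points at a \emph{single} location $y_0$ far from $x$, so that $\frac{1}{n}\sum_y\cos(g^{\top}y+\theta)=\cos(g^{\top}y_0+\theta)$ and $\E[Z_{RFF}^2]=1+\frac{1}{2}k(2x,2y_0)\ge 1$ while $\mu$ is tiny. The essential feature of the worst case is that the dataset is highly correlated with itself, not that it is far from the query.
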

\begin{proof}[Proof Sketch] The fact that the RS estimator is unbiased is trivial, whereas the fact that RFF is unbiased was shown by Rahimi and Recht~\cite{rahimi2007random} and follows from Bochner's theorem  and trigonometric identities. We next bound the second moment
\begin{align*}
\E[Z_{RS}^{2}] &\leq \max_{y\in P}\{k(x,y)\}\cdot \frac{1}{|P|}\sum_{y\in P} K(x,y) \leq  \mu^{2}\cdot \mu^{-1}\\
\E[Z_{RFF}^{2}]  &\leq \left(\frac{2}{|P|}\sum_{y\in P} 1\right)^{2} = 4 \leq \mu^{2} \cdot4 \mu^{-2}
\end{align*}
To see that these bounds are tight up to constants consider for the RS estimator a dataset with  $n\mu$ points located at $x$ and the rest $n(1-\mu)$ points at distance $\sqrt{\log(1/\mu)}$ from $x$. For the RFF estimator the worst case is when all points are all located at the same point $y_{0}$ at distance $\sqrt{\log(1/\mu)}$. 
\end{proof}
We expand more on the RFF estimator in Appendix \ref{sec:RFF}.
The result on Random Sampling shows that KDE problem is solvable in time $O\left(\min\left\{\frac{1}{\epsilon^{2}}\frac{1}{\mu}\log(\frac{1}{\delta}),n\right\}\right)$. 

\subsubsection{Upper bounds via H\"{o}lder-type inequalities}
Despite the simplicity of the above result, some salient features of the problem are revealed. Firstly, the quality of the initial upper bound $F(w,P)$ can differ dramatically between two different estimators. Secondly, despite the simplicitly of the analysis, often the resulting bounds are tight up to constants. Lastly, in both cases we used \emph{H\"{o}lder's inequality} to get the bound. This is going to be a general theme as behind all our bounds on the variance are increasingly sophisticated consequences of H\"{o}lder's inequality. We state below the two main inequalities used to bound the variance of \emph{Hashing-Based-Estimators}.

\begin{lemma}[two-sided Holder]\label{lem:two-sided}
 Let  $v,w\in \R^{n}$ be strictly positive vectors, then for any two sets $S,S^{'}\subseteq [n]$:
\begin{equation}
\sum_{i\in S, j\in S^{'}}A_{ij}x_{i}x_{j} \leq  
   \|x\|_{v,1}\|x\|_{w,1} \cdot \max_{i\in S,j\in S^{'}}\left\{\frac{|A_{ij}|}{v_{i}w_{j}}\right\}
\end{equation}
\end{lemma}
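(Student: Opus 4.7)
The plan is to prove this as a direct consequence of the basic triangle inequality and the fact that the weighted $\ell_1$-norms dominate their restrictions to any subset. The lemma is essentially saying: once we bound the matrix entries pointwise relative to the outer product $v_i w_j$, the double sum factorizes.

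First I would let $M := \max_{i \in S,\, j \in S'} \{|A_{ij}|/(v_i w_j)\}$, so that by definition $|A_{ij}| \le M \cdot v_i w_j$ for every pair $(i,j) \in S \times S'$. This is well-defined because $v,w$ are strictly positive. Next I would apply the ordinary triangle inequality to bring absolute values inside, and then substitute this pointwise bound:
\[
\sum_{i \in S,\, j \in S'} A_{ij} x_i x_j \;\le\; \sum_{i \in S,\, j \in S'} |A_{ij}|\,|x_i|\,|x_j| \;\le\; M \sum_{i \in S,\, j \in S'} v_i w_j |x_i|\,|x_j|.
\]

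The key step is then to recognize that the remaining double sum factorizes as a product of two single sums:
\[
\sum_{i \in S,\, j \in S'} v_i w_j |x_i|\,|x_j| \;=\; \Bigl(\sum_{i \in S} v_i |x_i|\Bigr)\Bigl(\sum_{j \in S'} w_j |x_j|\Bigr).
\]
Finally, since $v$ and $w$ are strictly positive (so every term in $\|x\|_{v,1}$ and $\|x\|_{w,1}$ is nonnegative), restricting the summation to $S$ or $S'$ only decreases the sum, giving $\sum_{i \in S} v_i |x_i| \le \|x\|_{v,1}$ and similarly for the other factor. Combining these steps yields the claimed inequality.

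There is no real obstacle here: the lemma is essentially a repackaging of the trivial bound $|A_{ij}| \le M v_i w_j$ together with factorization of a product sum. The only subtlety is that one must use strict positivity of $v,w$ to ensure the restricted sums are bounded by the full weighted $\ell_1$-norms; if $v$ or $w$ had negative entries, dropping indices outside $S, S'$ would not be monotone. The name ``two-sided Hölder'' reflects that this plays the role Hölder's inequality would play in bounding a bilinear form, but in a regime where the $\ell_\infty$-type bound is taken on the normalized matrix $A_{ij}/(v_i w_j)$ and the two ``$\ell_1$-sides'' are the weighted norms $\|x\|_{v,1}$ and $\|x\|_{w,1}$.
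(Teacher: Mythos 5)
Your proof is correct and is essentially the same argument as the paper's: the paper normalizes $x$ by the weights and applies $\ell_1$--$\ell_\infty$ H\"older twice, which amounts to exactly your pointwise bound $|A_{ij}|\le M v_i w_j$ followed by factorization of the double sum. Your version is just a more direct write-up of the same computation (and it correctly notes the role of strict positivity in dropping the restriction to $S,S'$).
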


\begin{lemma}[Monotone H\"{o}lder]\label{lem:mon-holder} $ \forall n\geq 1, \ \beta\in[\frac{1}{2},1]$,    and   $\forall x\in \R^{n}$ such that   $|x_{1}|\geq |x_{2}|\geq \ldots \geq |x_{n}|$  , we have $
 \sum |x_{i}|^{\frac{2-\beta}{\beta}}\left(i+\sum_{j>i}\frac{|x_{j}|}{|x_{i}|}\right) \leq n^{\beta}\cdot  \left(\sum_{i=1}^{n}|x_{i}|^{\frac{1}{\beta}}\right)^{2-\beta}$
with equality holding for $x^{*} = c\1$ for any $c\neq 0$.
\end{lemma}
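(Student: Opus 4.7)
The plan is to use monotonicity to rewrite the LHS as a \emph{symmetric} double sum and then finish with one AM--GM step and one H\"older step. Because $|x_1|\geq|x_2|\geq\cdots\geq|x_n|$, for $j\leq i$ the ratio $|x_j|/|x_i|$ is at least $1$ and for $j>i$ it is at most $1$, so
$$i+\sum_{j>i}\frac{|x_j|}{|x_i|}=\sum_{j=1}^n\min\!\left(1,\frac{|x_j|}{|x_i|}\right)=\frac{1}{|x_i|}\sum_{j=1}^n\min(|x_i|,|x_j|).$$
Folding the $1/|x_i|$ into the prefactor $|x_i|^{(2-\beta)/\beta}$ (so the exponent becomes $(2-2\beta)/\beta$) and substituting $y_i:=|x_i|^{1/\beta}\geq 0$, the LHS rewrites as the symmetric sum
$$\sum_{i,j=1}^n y_i^{\,2-2\beta}\,\min(y_i,y_j)^{\beta}.$$
The monotonicity has now been fully absorbed, and what remains is a pure inequality about a nonnegative vector $y$.

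Next I would apply the weighted AM--GM bound $\min(a,b)\leq a^{1-\theta}b^{\theta}$ (immediate in both cases $a\leq b$ and $a\geq b$) with the tailored choice $\theta=(1-\beta)/\beta$. This $\theta$ lies in $[0,1]$ precisely when $\beta\in[\tfrac12,1]$ -- which is exactly where the hypothesis on $\beta$ enters -- and yields $\min(y_i,y_j)^{\beta}\leq y_i^{2\beta-1}y_j^{1-\beta}$. Plugging in, the double sum factorises as $\bigl(\sum_i y_i\bigr)\bigl(\sum_j y_j^{1-\beta}\bigr)$. A single application of H\"older's inequality with conjugate exponents $1/(1-\beta)$ and $1/\beta$ then gives $\sum_j y_j^{1-\beta}\leq n^{\beta}\bigl(\sum_j y_j\bigr)^{1-\beta}$, so combining produces $n^{\beta}\bigl(\sum_i y_i\bigr)^{2-\beta}=n^{\beta}\bigl(\sum_i|x_i|^{1/\beta}\bigr)^{2-\beta}$, which is the RHS.

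The only real obstacle is spotting the rewriting in the first paragraph, which converts the apparently asymmetric expression $i+\sum_{j>i}|x_j|/|x_i|$ into the symmetric kernel $\min(|x_i|,|x_j|)/|x_i|$; once that is in hand the remaining steps are entirely standard. Equality is then automatic to verify: for $x=c\1$ all $y_i$ coincide, so the AM--GM inequality is saturated ($\min$ equals each argument and the two exponents sum to $\beta$) and H\"older is saturated ($y_j^{1-\beta}$ is a constant vector), recovering the claimed equality case.
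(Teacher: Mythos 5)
Your proof is correct, and it takes a genuinely different route from the paper. The paper proves the lemma by induction on $n$: it peels off the largest coordinate, applies the inductive hypothesis to the tail, and reduces the step to showing that a one-variable function $g_n(\gamma)$ is monotone on $[\tfrac{1}{n+1},1]$, which in turn requires a derivative computation and three separate term-by-term comparisons, each of which is where the hypothesis $\beta\geq\tfrac12$ is (somewhat opaquely) used. Your argument replaces all of this with a single structural observation --- that monotonicity lets you rewrite $i+\sum_{j>i}|x_j|/|x_i|$ as the symmetric kernel $\sum_j \min(|x_i|,|x_j|)/|x_i|$, after which the ordering hypothesis is fully absorbed and the claim becomes a permutation-invariant inequality in $y_i=|x_i|^{1/\beta}$ --- followed by the pointwise bound $\min(a,b)\leq a^{1-\theta}b^{\theta}$ with $\theta=(1-\beta)/\beta$ and one application of H\"older (the latter being exactly the first inequality of Corollary~\ref{col:holder} applied with exponent $1-\beta$). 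I checked the exponent bookkeeping: $y_i^{2-2\beta}\cdot y_i^{2\beta-1}=y_i$, and $\theta\in[0,1]$ precisely when $\beta\in[\tfrac12,1]$, so the role of the hypothesis on $\beta$ is transparent in your version, whereas in the paper it is buried in the monotonicity analysis of $h_1,h_2,h_3$. Your verification of the equality case for $x=c\1$ is also correct and cleaner than checking it against the inductive argument. The only (shared, and immaterial) caveat is the convention needed when some $x_i=0$, where the ratio $|x_j|/|x_i|$ in the statement is formally undefined; your rewriting handles this gracefully since $\min(|x_i|,|x_j|)=0$ kills those terms. In short, your proof is shorter, more conceptual, and arguably preferable to the one in the paper.
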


\section{Importance Sampling through Hashing Based Estimators}\label{sec:hbe}
Given a distribution $\nu$ over a collection of hash functions $\mathcal{H}$, let $h\in \mathcal{H}$ be an element sampled according to $\nu$. For a point $x\in \R^{d}$ let $H(x) := \{y\in P: h(y)=h(x) \}$ be the set of elements in $P$ that have the same hash value as $x$. Also, let $I(x)$ be a uniform random element out of $H(x)$ or $\perp$ if $H(x)$ is empty.  Further, define $p_{i} := \P[i\in H(x)]$ and $\tilde{P} :=\{i\in P| p_{i}>0\}$. We also set $p_{\perp}:=1$ and $w_{\perp}:=0$.  An $(\mathcal{H},\nu)$-hashing based estimator (HBE) is defined as $Z_{h}=Z_{h}(x) := \frac{w_{I(x)}}{p_{I(x)}}\cdot  \frac{|H(x)|}{n}$.
\subsection{Moments of HBE}
\begin{lemma}[Moments]\label{lem:var-basic} Let $Z_{h}$ be a $(\mathcal{H},\nu)$-HBE and $P$ a set of points. Let $p_{1}\geq p_{2}\geq \ldots \geq p_{n}$, then 
\begin{align}
\E[Z_{h}]  = \frac{1}{n} \sum_{i\in \tilde{P}} w_{i}\qquad \text{and} \qquad 
\E[Z_{h}^{2}]   \leq\frac{1}{n^{2}} \sum_{i\in \tilde{P}}\frac{w_{i}^{2}}{p_{i}}\left(i+\sum_{j>i}\frac{p_{j}}{p_{i}}\right)\label{eq:second-moment}
\end{align}
\end{lemma}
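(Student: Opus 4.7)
The plan is to condition on the random hash function $h$ and use the law of total expectation, exploiting the fact that given $h$ the set $H(x)$ is determined and $I(x)$ is then a uniform sample from $H(x)$.

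For the first moment, I would first compute $\E[Z_h \mid h]$. If $H(x) = \emptyset$, the estimator is $0$ by the convention $w_\perp = 0$. Otherwise, averaging over the uniform choice of $I(x) \in H(x)$ gives
\[
\E[Z_h \mid h] \;=\; \frac{1}{|H(x)|} \sum_{i \in H(x)} \frac{w_i}{p_i} \cdot \frac{|H(x)|}{n} \;=\; \frac{1}{n}\sum_{i \in H(x)} \frac{w_i}{p_i}.
\]
Taking expectation over $h$ and swapping sum with expectation,
\[
\E[Z_h] \;=\; \frac{1}{n}\sum_{i=1}^{n} \frac{w_i}{p_i}\,\P[i \in H(x)] \;=\; \frac{1}{n}\sum_{i \in \tilde P} w_i,
\]
since indices with $p_i = 0$ never appear. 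This gives the first claim (and confirms unbiasedness relative to the sum over $\tilde P$).

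For the second moment, the same conditioning yields
\[
\E[Z_h^2 \mid h] \;=\; \frac{1}{|H(x)|}\sum_{i \in H(x)} \frac{w_i^2}{p_i^2} \cdot \frac{|H(x)|^2}{n^2} \;=\; \frac{|H(x)|}{n^2}\sum_{i \in H(x)}\frac{w_i^2}{p_i^2}.
\]
The crucial step is to write $|H(x)| = \sum_j \mathbf{1}[j \in H(x)]$ and interchange sums, so that
\[
\E[Z_h^2] \;=\; \frac{1}{n^2}\sum_{i,j} \frac{w_i^2}{p_i^2}\,\P\bigl[i \in H(x),\; j \in H(x)\bigr].
\]
Here I would use the basic inequality $\P[i,j \in H(x)] \le \min\{p_i, p_j\}$, which holds for any hash family regardless of joint behavior (it is the only pointwise bound available without further assumptions). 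Under the ordering $p_1 \ge p_2 \ge \cdots \ge p_n$, the inner sum splits as
\[
\sum_j \min\{p_i, p_j\} \;=\; i\cdot p_i \;+\; \sum_{j > i} p_j,
\]
and substituting back yields the stated bound \eqref{eq:second-moment}.

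There is no serious obstacle here: the lemma is essentially a careful accounting exercise, and the only nontrivial choice is the relaxation $\P[i,j \in H(x)] \le \min\{p_i,p_j\}$, which is tight for many LSH families (for instance when collisions are induced by a decreasing function of distance, the pair with the larger $p$ typically collides whenever the one with the smaller $p$ does). The uniform choice of $I(x)$ from $H(x)$ is what makes the conditional computation clean; without it, the $|H(x)|$ factors would not cancel as nicely and one could not reduce the analysis to a sum of pairwise collision probabilities.
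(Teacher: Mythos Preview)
Your proof is correct and follows essentially the same approach as the paper: both arrive at the exact identity $\E[Z_h^2]=\frac{1}{n^2}\sum_i \frac{w_i^2}{p_i}\bigl(1+\sum_{j\neq i}\frac{\P[i,j\in H(x)]}{p_i}\bigr)$ and then apply the bound $\P[i,j\in H(x)]\le\min\{p_i,p_j\}$. The only cosmetic difference is that you condition on the hash function $h$ (so $H(x)$ is deterministic and only the uniform draw of $I$ remains), whereas the paper conditions jointly on $(I,|H(x)|)$ and unwinds via Bayes' rule; your route is slightly more direct but otherwise equivalent.
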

\begin{proof} The proof is based on applying Bayes rule and total probability law. 
\begin{align*}
\E[Z_{h}] &= \frac{1}{n}\sum_{i\in \tilde{P},k\geq 1} \frac{w_{i}}{p_{i}} k \P(\tilde{I}=i, |H(q)|=k)\\
&= \frac{1}{n}\sum_{i\in \tilde{P},k\geq 1} \frac{w_{i}}{p_{i}} k \P(I=i|i\in H(q),|H(q)|=k)  \P(i\in H(q),|H(q)|=k)\\
& = \frac{1}{n}\sum_{i\in \tilde{P}}\frac{w_{i}}{p_{i}}\sum_{k\geq 1} \P(i\in H(q), |H(q)|=k)\\
& = \frac{1}{n}\sum_{i\in \tilde{P}} \frac{w_{i}}{p_{i}}\P(i\in H(q))
\end{align*}	
Similarly, we show for the variance
\begin{align*}
\E[Z_{h}^{2}] &= \frac{1}{n^{2}}\E\left[\left(w_{\tilde{I}} \frac{|H(q)|}{p_{I}} \right)^{2}\right]\\&=\frac{1}{n^{2}} \sum_{i,k}\left(w_{i} \frac{|H(q)|}{p_{i}} \right)^{2}\P(H(q)=k,I=i)\\
&= \frac{1}{n^{2}}\sum_{i,k} \frac{w_{i}^{2}k^{2}}{p_{i}^{2}} \P(H(q)=k, I=i|i\in H(q)) \P(i\in H(q) )\\
&= \frac{1}{n^{2}}\sum_{i,k} \frac{w_{i}^{2}k^{2}}{p_{i}} \P(I=i|H(q)=k, i\in H(q)) \P(H(q)=k|i\in H(q) )\\
&= \frac{1}{n^{2}}\sum_{i,k\geq 1} \frac{w_{i}^{2}}{p_{i}}\sum_{k\geq 1} k \P(H(q)=k|i\in H(q))\\
&= \frac{1}{n^{2}}\sum_{i} \frac{w_{i}^{2}}{p_{i}}\E[|H(q)||i\in H(q)]\\
&= \frac{1}{n^{2}}\sum_{i}  \frac{w_{i}^{2}}{p_{i}} \left( 1+\sum_{j\neq i} \frac{\P(i,j \in H(q))}{p_{i}}\right)
\end{align*}
Finally, we obtain the upper bound by using the fact that $\P(A,B)\leq \min\{\P(A),\P(B)\}$ for any two events.	
\begin{eqnarray}
\E[Z_{h}^{2}] &=&  \frac{1}{n^{2}}\sum_{i}  \frac{w_{i}^{2}}{p_{i}} \left( 1+\sum_{j\neq i} \frac{\P(i,j \in H(q))}{p_{i}}\right)\\
&\leq & \frac{1}{n^{2}}\sum_{i}  \frac{w_{i}^{2}}{p_{i}} \left( 1+\sum_{j\neq i} \frac{\min\{p_{i},p_{j}\}}{p_{i}}\right)\\
&=& \frac{1}{n^{2}}\sum_{i}  \frac{w_{i}^{2}}{p_{i}} \left( i+\frac{1}{p_{i}}\sum_{j>i}p_{j}\right)
\end{eqnarray}
\end{proof}
 Observe that the estimator is unbiased iff $w_{i}>0\Rightarrow p_{i}>0$ for all $i\in[n]$. The quantity in the parenthesis plays the role of $F(w,P)$ and expresses a pessimistic upper bound on  $\E\left[|H(x)|\big|i\in H(x)\right]$. This is perhaps the single most important aspect of the method in that it allows us to \emph{express the variance purely in terms of known collision probabilities}   $p_{i}$ that are  amenable  to analytic manipulations. 

\subsection{Variance Bounds for general  HBE}
In this section, we prove Theorem \ref{thm:two-points} that provides a characterization of the variance of HBE under worst case assumptions. The main tool we employ is the following inequality.

\begin{lemma}\label{lem:quadratic_form}
 Given a positive vector $w\in \R^{n}_{++}$ and a parameter $\mu>0$, define $f_{i}^{*}:=\min\{1,\frac{\mu}{w_{i}}\}$. For any matrix $A\in \R^{n\times n}$:
\begin{equation}
\sup_{\|f\|_{w,1}\leq \mu,\|f\|_{1}\leq 1}\left\{f^{\top}Af\right\}\leq 4\sup_{ij\in[n]}\left\{f_{i}^{*}|A_{ij}|f_{j}^{*}\right\}
\end{equation}
\end{lemma}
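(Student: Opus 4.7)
The plan is to split the index set $[n]$ according to which of the two constraints is binding and then apply the two-sided H\"older inequality (Lemma~\ref{lem:two-sided}) separately to each of the four sub-sums that result. Since $\sum_{ij}A_{ij}f_if_j \leq \sum_{ij}|A_{ij}||f_i||f_j|$ and the constraints $\|f\|_{1}\le 1$ and $\|f\|_{w,1}\le\mu$ are unchanged under $f \mapsto |f|$, I may assume $f\geq 0$ and bound $\sum_{ij}|A_{ij}|f_if_j$.

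Define the partition $S:=\{i:w_i\leq\mu\}$ and $T:=\{i:w_i>\mu\}$, so that $f_i^* = 1$ on $S$ and $f_i^* = \mu/w_i$ on $T$. Decompose
\begin{equation*}
\sum_{i,j}|A_{ij}|f_if_j \;=\; \sum_{S\times S} + \sum_{S\times T} + \sum_{T\times S} + \sum_{T\times T},
\end{equation*}
and to each piece apply Lemma~\ref{lem:two-sided} with the following choices of the positive weight vectors $v,w'$ (defined globally on $[n]$; only their restriction to the relevant half of the index set matters for the entry bound):
\begin{itemize}
\item[(i)] On $S\times S$, take $v=w'=\mathbf{1}$. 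Then $\|f\|_{v,1},\|f\|_{w',1}\le\|f\|_1\le 1$, and the per-entry bound $|A_{ij}|/(v_iw'_j)=|A_{ij}|$ equals $f_i^*|A_{ij}|f_j^*$ on $S\times S$.
\item[(ii)] On $T\times T$, take $v=w'=w/\mu$. Then $\|f\|_{v,1}=\|f\|_{w,1}/\mu\le 1$, and for $(i,j)\in T\times T$ the per-entry bound is $|A_{ij}|\mu^2/(w_iw_j)=f_i^*|A_{ij}|f_j^*$.
\item[(iii)] On $S\times T$, take $v=\mathbf{1}$ and $w'=w/\mu$; both norms are at most $1$, and on $S\times T$ the per-entry bound is $|A_{ij}|\mu/w_j = f_i^*|A_{ij}|f_j^*$. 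The piece $T\times S$ is symmetric.
\end{itemize}

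Each of the four pieces is thus bounded by $\sup_{i,j}\{f_i^*|A_{ij}|f_j^*\}$, and summing yields the desired factor of $4$. I do not expect any real obstacle: the crucial observation is simply that the partition $[n]=S\sqcup T$ precisely tracks which of the two constraints ``clamps'' $f_i$, so that on each block one of the two norms in Lemma~\ref{lem:two-sided} absorbs the constraint while the reciprocal weights $1/(v_iw'_j)$ reconstitute exactly the product $f_i^*f_j^*$.
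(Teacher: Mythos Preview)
Your proof is correct and follows essentially the same approach as the paper: partition $[n]$ according to whether $w_i$ exceeds $\mu$, decompose $f^\top A f$ into the four resulting blocks, and apply Lemma~\ref{lem:two-sided} to each block with the weight vectors chosen so that the per-entry bound reproduces $f_i^*|A_{ij}|f_j^*$. Your write-up is slightly more explicit about the choices of $v,w'$ fed into Lemma~\ref{lem:two-sided} and about the preliminary reduction to $f\geq 0$, but the argument is the same.
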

\begin{proof}
Let $S_{+}(\mu) =\{i\in [n]|w_{i}\geq \mu\}$ and $S_{-}(\mu)=\{i\in[n]|w_{i}<\mu\}$ and define $A_{\pm\pm} := A_{S_{\pm}(\mu),S_{\pm}(\mu)}$, where $A_{U,V}$ is the matrix with elements $A_{ij}$, $i\in U, j\in V$. We have the decomposition:
\begin{align}
f^{\top}Af = f_{+}^{\top}A_{++}f_{+} + f_{+}^{\top}A_{+-}f_{-}+f_{-}^{\top}A_{-+}f_{+}+f_{-}^{\top}A_{--}f_{-}
\end{align}
Using this decomposition and Lemma \ref{lem:two-sided} we have
\begin{align}
\sup_{\|f\|_{w,1}\leq \mu, \|f\|_{1}\leq 1}\{f^{\top}Af\} &\leq \sup_{\|f_{+}\|_{w,1}\leq \mu, \|f_{+}\|_{w,1}\leq\mu}\{f_{+}^{\top}A_{++}f_{+}\} 
 +\sup_{\|f_{+}\|_{w,1}\leq \mu, \|f_{-}\|_{1}\leq1}\{f_{+}^{\top}A_{+-}f_{-}\}\\
&\qquad+ \sup_{\|f_{-}\|_{1}\leq 1, \|f_{+}\|_{w,1}\leq\mu}\{f_{-}^{\top}A_{-+}f_{+}\} + \sup_{\|f_{-}\|_{1}\leq 1, \|f_{-}\|_{1}\leq1}\{f_{-}^{\top}A_{--}f_{-}\}\\
&\leq \sup_{i\in S_{+},j\in S_{+}}\left\{\frac{\mu}{w_{i}}|A_{ij}|\frac{\mu}{w_{j}}\right\}+ \sup_{i\in S_{+},j\in S_{-}}\left\{\frac{\mu}{w_{i}}|A_{ij}|\frac{1}{1}\right\}\\
&\qquad    + \sup_{i\in S_{-},j\in S_{+}}\left\{\frac{1}{1}|A_{ij}|\frac{\mu}{w_{j}}\right\}  + \sup_{i\in S_{-},j\in S_{-}}\left\{\frac{1}{1}|A_{ij}|\frac{1}{1}\right\}
\end{align}
To complete the proof note that $\min\{1,\frac{\mu}{w_{i}}\} = \frac{\mu}{w_{i}} \forall i \in S_{+}(\mu)$ and $\min\{1,\frac{\mu}{w_{j}}\} = 1 \forall j\in S_{-}(\mu)$. 
\end{proof}
\begin{proof}[Proof of Theorem \ref{thm:two-points}] Fix $n$ distinct weights $w_{1},\ldots, w_{n}$ and collision probabilities $p_{1},\ldots, p_{n}$ that might be considered. Without loss of generality we assume that the probabilities are in decreasing order. Let $f_{i}$ be the fraction of points that are assigned to weight $i$. We have the following:
\begin{align*}
\E[Z_{h}^{2}] &\leq \frac{1}{n^{2}} \sum_{i}(f_{i}n)  \frac{w_{i}^{2}}{p_{i}} \left( n\sum_{j\leq i} f_{j}+\frac{1}{p_{i}}\sum_{j>i}p_{j}(nf_{j})\right)\\
& = \sum_{i,j}f_{i}f_{j} \left(\frac{w_{i}^{2}}{p_{i}} \I_{j\leq i} + \I_{j>i}\frac{w_{i}^{2}}{p_{i}^{2}}p_{j}\right)\\
&\leq  \sup_{\tiny \begin{matrix}
\|f\|_{w,1}\leq \mu\\
\|f\|_{1}\leq 1
\end{matrix}}\left\{f^{\top}Af\right\}
\end{align*}
 where in the last step we set $A_{ij}:=\frac{w_{i}^{2}}{p_{i}} \I_{j\leq i} + \I_{j>i}\frac{w_{i}^{2}}{p_{i}^{2}}p_{j}$. Invoking Lemma \ref{lem:quadratic_form} shows that for any set of weights $\{w_{i}\}_{i\in[n]}$ and probabilities $\{p_{i}\}_{i\in[n]}$ there exist two indices $i^{*},j^{*}\in [n]$ that realize up to a constant the worst case variance for the specific density $\mu$.
\end{proof}
The theorem assumes nothing about HBE besides unbiasedness and therefore is applicable in an arbitrary setting. If more structure is assumed stronger statements can be made. 
\subsection{Variance Bounds for Scale-free HBE}
We   give a refined analysis  of \emph{scale-free} estimators (Definition \ref{def:scale-free}) based on whether a large fraction of the density comes from points with relatively large weights. 

\begin{definition} For a query $x$ and $\tau \in[\mu, 1]$ let $B_{\tau,\mu}(x):=\{ i\in P|w_{i} \geq \frac{\mu}{\tau}\}$. For $\gamma\in[0,1]$ the query is said to be $(\tau,\gamma)$-\emph{localized}  if $\sum_{i\in B_{\tau,\mu}} w_{i} \geq (1-\gamma) \sum_{i} w_{i}$.
\end{definition}
The intuition behind this definition is that for kernels that are decreasing with the distance, the sets of points that would induce large (small) weights are contained in (outside of)  a ball of a certain radius around the query. If the total contribution of the points outside of the ball is negligible, our estimator behaves essentially as if we are at a ``higher" density regime.

\begin{theorem}[Restatement of Theorem \ref{thm:scale-free-var}]\label{thm:scale-free}
 Let $Z_{h}$ be a $(\beta,M)$-\emph{scale free} estimator with $\beta \in [1/2,1]$. For every $(\tau,\gamma)$-localized query $x$, 
 \begin{align*}
\E[Z_{h}^{2}] &\leq \mu^{2}\cdot M^{3}\left\{2\tau^{\beta}+\gamma^{2-\beta}+ \tau^{2\beta -1}\gamma^{\beta}\right\}   \mu^{-\beta}
 \end{align*}
\end{theorem}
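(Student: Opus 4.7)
The plan is to start from the variance bound in Lemma \ref{lem:var-basic}, reduce it to an expression depending only on the weights (absorbing all probability factors into a single $M^3$ via the scale-free hypothesis), and then apply Lemma \ref{lem:mon-holder} separately on $B = B_{\tau,\mu}$ and on $B^c$, rather than on the whole index set at once. The restriction to proper subsets is what allows the extra $\tau^\beta$ and $\gamma^{2-\beta}$ factors to appear in the bound.

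First I would rewrite the bound of Lemma \ref{lem:var-basic} without a prior ordering on the $p_i$, namely
\[
\E[Z_h^2] \;\leq\; \frac{1}{n^2}\sum_i \frac{w_i^2}{p_i}\Bigl(1 + \sum_{j\neq i}\frac{\min\{p_i,p_j\}}{p_i}\Bigr).
\]
Plugging in the scale-free bounds $M^{-1}w_i^\beta \leq p_i \leq M w_i^\beta$ gives $w_i^2/p_i \leq M w_i^{2-\beta}$ and $\min\{p_i,p_j\}/p_i \leq M^2\min\{1,(w_j/w_i)^\beta\}$, at a total cost of $M^3$. Sorting the weights in decreasing order $w_1\geq\ldots\geq w_n$ makes $\min\{1,(w_j/w_i)^\beta\}$ equal to $1$ for $j<i$ and to $(w_j/w_i)^\beta$ for $j>i$, so I obtain
\[
\E[Z_h^2] \;\leq\; \frac{M^3}{n^2}\sum_{i=1}^{n}w_i^{2-\beta}\Bigl(i+\sum_{j>i}(w_j/w_i)^\beta\Bigr).
\]
Because the $w_i$ are decreasing, $B$ is an initial segment $\{1,\ldots,b\}$ with $b\leq n\tau$ (since $b\cdot(\mu/\tau)\leq\sum_{i\in B}w_i\leq n\mu$), and by $(\tau,\gamma)$-localization $\sum_{i\in B^c}w_i\leq \gamma n\mu$.

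I would then decompose the double sum into three blocks according to where $j$ falls relative to $B$. For the $B$-to-$B$ block $\sum_{i\in B} w_i^{2-\beta}(i+\sum_{i<j\leq b}(w_j/w_i)^\beta)$, applying Lemma \ref{lem:mon-holder} with $n\leftarrow b$ and $x_i = w_i^\beta$ on $B$ yields the bound $b^\beta(\sum_{i\in B}w_i)^{2-\beta}\leq (n\tau)^\beta(n\mu)^{2-\beta}=n^2\tau^\beta\mu^{2-\beta}$. For the $B$-to-$B^c$ cross block $\sum_{i\in B}w_i^{2-2\beta}\sum_{j\in B^c}w_j^\beta$, I would use $\beta\geq 1/2$ together with $w_i\geq \mu/\tau$ on $B$ to write $w_i^{2-2\beta} = w_i\cdot w_i^{1-2\beta}\leq w_i\,(\mu/\tau)^{1-2\beta}$, so $\sum_{i\in B}w_i^{2-2\beta}\leq n\tau^{2\beta-1}\mu^{2-2\beta}$, and pair this with the Hölder/power-mean estimate $\sum_{j\in B^c}w_j^\beta \leq |B^c|^{1-\beta}(\sum_{j\in B^c}w_j)^\beta\leq n\gamma^\beta\mu^\beta$, giving a total of $n^2\tau^{2\beta-1}\gamma^\beta\mu^{2-\beta}$. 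For the $B^c$ block I would split $i=b+(i-b)$: the $(i-b)$ part plus the inner $B^c$-to-$B^c$ sum is handled by Lemma \ref{lem:mon-holder} restricted to $B^c$, producing $(n-b)^\beta(\gamma n\mu)^{2-\beta}\leq n^2\gamma^{2-\beta}\mu^{2-\beta}$, while the residual $b\sum_{i\in B^c}w_i^{2-\beta}$ is bounded via $w_i^{2-\beta}\leq w_i(\mu/\tau)^{1-\beta}$ by $n\tau\cdot (\mu/\tau)^{1-\beta}\gamma n\mu = n^2\gamma\tau^\beta\mu^{2-\beta}\leq n^2\tau^\beta\mu^{2-\beta}$.

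Summing these four contributions and multiplying by $M^3/n^2$ yields
\[
\E[Z_h^2] \;\leq\; M^3\mu^{2-\beta}\bigl\{2\tau^\beta+\gamma^{2-\beta}+\tau^{2\beta-1}\gamma^\beta\bigr\},
\]
which is the claim after rewriting $\mu^{2-\beta} = \mu^2\cdot\mu^{-\beta}$. The main obstacle I expect is the cross block: extracting precisely the exponents $\tau^{2\beta-1}\gamma^\beta$ requires simultaneously exploiting $\beta\geq 1/2$ (to push powers of $w_i$ through the lower bound $w_i\geq\mu/\tau$) and Hölder's inequality against the localization bound on $B^c$, with the two sides balanced in just the right way. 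As a sanity check, the choice $\tau=\gamma=1$ recovers $\E[Z_h^2]\leq 4M^3\mu^{2-\beta}$, which matches Theorem \ref{thm:scale-free-var}.
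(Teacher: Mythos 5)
Your proof is correct and follows essentially the same route as the paper's: the same reduction to $\frac{M^3}{n^2}\sum_i w_i^{2-\beta}\bigl(i+\sum_{j>i}(w_j/w_i)^\beta\bigr)$, the same four-block decomposition relative to $B_{\tau,\mu}$, and the same use of the Monotone H\"{o}lder lemma on the two diagonal blocks. The only (immaterial) difference is that for $\sum_{i\in B}w_i^{2-2\beta}$ you use the pointwise bound $w_i\geq\mu/\tau$ together with $\beta\geq 1/2$, whereas the paper invokes Corollary~\ref{col:holder}; both give the identical quantity $n\tau^{2\beta-1}\mu^{2-2\beta}$.
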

 The optimal choice of $\beta$ in the case where no assumptions are made $(\gamma=\tau=1)$  is $\beta^{*}=\frac{1}{2}$.  Even if $\beta$ is not selected depending on the structure of the query, significant gains can be obtained, in cases where the parameters $\tau,\gamma$ are small enough. As an example, for $\beta =\frac{1}{2}$, $\tau =\mu^{\frac{1}{2}}$, $\gamma=\mu^{\delta}$ the theorem implies that the variance is bounded by $\mu^{2}\cdot 4M^{3}\mu^{-\Delta(\delta)}$ where $\Delta(\delta)=\max\{\frac{1}{4}-\delta,\frac{1-3\delta}{2}\}<\frac{1}{2}$ for all $\delta>0$. 

\begin{proof} 
We start by using the scale-free property to simplify \eqref{eq:second-moment}:
\begin{equation}
\E[Z_{h}^{2}]\leq M^{3}\cdot\frac{1}{n^{2}} \underbrace{\sum_{i=1}^{n}w_{i}^{2-\beta}\left(i+\sum_{j>i}\left(\frac{w_{j}}{w_{i}}\right)^{\beta} \right)}_{A}
\end{equation}
To bound $A$, we break up the terms according to $B_{\tau,\mu}$. Let $J$ be the maximal index in $B_{\tau,\mu}$.
\begin{align}
A & = \sum_{i\leq J}w_{i}^{2-\beta}\left(i+\sum_{J\geq j>i}\left(\frac{w_{j}}{w_{i}}\right)^{\beta} \right) + \sum_{i>J}w_{i}^{2-\beta}\left(i-J+\sum_{j>i}\left(\frac{w_{j}}{w_{i}}\right)^{\beta} \right)\\
&\qquad +  \sum_{i\leq J}w_{i}^{2-2\beta} \cdot \sum_{j>J}w_{j}^{\beta} + |B_{\tau,\mu}|\sum_{i> J}w_{i}^{2-\beta} 
\end{align}
We next use H\"{o}lder-type inequalities to bound each term. For the first two terms we invoke Lemma \ref{lem:mon-holder} (Monotone H\"{o}lder) to get:
\begin{align}
\sum_{i\leq J}w_{i}^{2-\beta}\left(i+\sum_{J\geq j>i}\left(\frac{w_{j}}{w_{i}}\right)^{\beta} \right)&\leq (n\tau)^{\beta} (n\mu)^{2-\beta} \\
\sum_{i>J}w_{i}^{2-\beta}\left(i-J+\sum_{j>i}\left(\frac{w_{j}}{w_{i}}\right)^{\beta} \right) &\leq   n^{\beta} (\gamma n \mu)^{2-\beta}
\end{align}
where we also used the fact that $|B_{\tau,\mu}|\leq \tau n$. 
Let $w_{\leq} = (w_{1},\ldots, w_{J})$ and $w_{>}:=(w_{J+1},\ldots, w_{n})$.
To bound the remaining terms, we invoke Corollary \ref{col:holder} in total three times to get:
\begin{align}
\sum_{i\leq J}w_{i}^{2-2\beta} \cdot \sum_{j>J}w_{j}^{\beta}  &\leq \|w_{\leq}\|_{1}^{2-2\beta} |B_{\tau,\mu}|^{1-(2-2\beta)} \cdot \|w_{>}\|_{1}^{\beta}n^{1-\beta}\\
 &\leq (n\mu)^{2-2\beta} (\tau n)^{1-(2-2\beta)}\cdot (\gamma n\mu)^{\beta}n^{1-\beta}
\end{align}
and
\begin{align}
 |B_{\tau,\mu}|\sum_{i> J}w_{i}^{2-\beta} &\leq n\tau\cdot  \|w_{>}\|_{\infty}^{1-\beta} \cdot \|w_{>}\|^{1}_{1}  \ \leq n \tau \left(\frac{\mu}{\tau}\right)^{1-\beta} (\gamma n\mu)
\end{align}
By combining all the inequalities, we arrive at $A \leq n^{2}\mu^{2-\beta}\cdot \left\{\tau^{\beta}+\gamma^{2-\beta}+ \tau^{2\beta -1}\gamma^{\beta}+\tau^{\beta}\gamma\right\}$.
\end{proof}

 This concludes the presentation of the general framework of Hashing-Based-Estimators, that given a set of weights $\{w_{i}\}$ and collisions probabilities $\{p_{i}\}$ computes a function $V$ such that the resulting HBE is $V$-bounded.
\section{Adaptive Estimation through Mean Relaxation}\label{sec:amr}
Our goal in this section is, given a dataset $P\in \R^{d}$ and a $V$-bounded  estimator, to build a data structure that can efficiently approximate the mean $\mu(x)$ for a given query $x\in \R^{d}$ . Thus, presenting a complete algorithmic framework that can be instantiated for different problems.

We first address the problem of obtaining a constant factor approximation to the mean. We exploit two facts:   monotonicity of the variance in terms of $\mu$ and concentration of measure. Monotonicity suggests that we can start with an over-estimate of the mean and keep refining it until we come very close to the truth. Concentration of  measure allows us to come up with a simple consistency check that recognizes when our estimate is close enough to the true mean. Based on this we propose the following adaptive algorithm for which we get strong guarantees.

\begin{algorithm}[h]
	\caption{Adaptive Mean Relaxation (AMR)}
	\label{alg:example}
	\begin{algorithmic}[1]
		\STATE {\bfseries Input:}   V-bounded unbiased estimator    $Z\sim \nu$, query $x\in \R^{d}$, accuracy $\alpha\in (0,1]$, threshold $\tau \in (0,1)$, failure prob. $\chi\in (0,1)$.
		\STATE $\epsilon\gets \frac{2}{7}\alpha, c\gets \frac{\epsilon}{2}, \gamma\gets \frac{\epsilon}{7}$, $\delta\gets \frac{2\alpha}{49\log(1/\tau)}\chi$, $i \gets -1$.
		\REPEAT  
		\STATE $i\gets i+1$, $\mu_{i}\gets(1-\gamma)^{i}$,  
		\STATE $Z_{i} \gets \mathrm{MoM}_{\frac{\epsilon}{3},\delta}(\nu,V(\mu_{i}))$\label{code:MoM}
		\UNTIL {$|Z_{i}-\mu_{i}|\leq c\cdot \mu_{i}$ or $i>\frac{\log(\tau/(1-(c+\epsilon)))}{\log(1-\gamma)}$.}
		\STATE {\bf Output:} if $i\leq \frac{7^{2}\log( 1/\tau)}{2\alpha }$ return $Z_{i}$ else return $0$.
	\end{algorithmic}
\end{algorithm}

In fact, there is no need for the samples used in different calls of the MoM routine to be independent and we can implement the MoM routing by keeping $L(\delta)=\lceil 9\log(1/\delta)\rceil$ running sums. We call the resulting algorithm AMR*.
\begin{theorem}[Mean Relaxation]\label{thm:AMR}
 Let $Z\sim\nu$ be a $V$-bounded estimator with $\E[Z] = \mu \in (0,1]$ and $\hat{Z}$ be the output of the AMR* algorithm with parameters $(\alpha, \tau, \chi)$. If  $\mu \geq \tau$ then $
\P[|\hat{Z}-\mu|\leq \alpha \mu] \geq 1 -\chi 
$ otherwise if $\mu <\tau$,  $\P[ \hat{Z}=0 ]\geq 1- \chi$. The total number of   samples used is bounded  by 
 $O\left(\alpha^{-2} \log(\frac{\log(1/\tau)}{\chi\alpha})\cdot   V((\mu)_{\tau})\right)$.
\end{theorem}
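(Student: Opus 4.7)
The plan is to analyze AMR* by tracking, across the $M+1 = O(\log(1/\tau)/\alpha)$ outer-loop iterations, two qualitatively different behaviors of each MoM call. Throughout the analysis I fix the parameter choices $\epsilon = 2\alpha/7$, $c = \epsilon/2$, $\gamma = \epsilon/7$, and exploit the slow-variation bound $V(\mu)/V(\mu_i) \leq (\mu_i/\mu)^{2}$ for $\mu_i \geq \mu$, which follows from the $V$-bounded hypothesis that $\mu^{2}V(\mu)$ is non-decreasing.

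The main step is to define a good event $E$ by a per-iteration condition. When $\mu_i \leq \mu$, the inequality $V(\mu_i) \geq V(\mu)$ puts us in the standard MoM regime and the usual Chebyshev-then-median amplification gives $|Z_i - \mu| \leq (\epsilon/3)\mu$ with failure probability $\leq \delta$. When $\mu_i \geq \mu/(1-c-\epsilon)$, I would instead bound the upper tail $\P[\bar{Z} \geq \mu_i(1-c)]$ via Chebyshev: combined with $V(\mu)/V(\mu_i) \leq (\mu_i/\mu)^{2}$, the per-group probability works out to $\leq 1/6$, so Hoeffding on the $L(\delta)$-fold median yields $Z_i < \mu_i(1-c)$ (hence the stopping criterion cannot trigger) with failure probability $\leq \delta$. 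A union bound over the $O(\log(1/\tau)/\alpha)$ iterations together with the algorithm's choice $\delta = \Theta(\alpha\chi/\log(1/\tau))$ delivers $\P[E] \geq 1-\chi$.

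Conditional on $E$, correctness splits into two cases. When $\mu \geq \tau$, stopping cannot trigger while $\mu_i \geq \mu/(1-c-\epsilon)$, so the algorithm must enter either the ``gap'' region $(\mu,\mu/(1-c-\epsilon))$ or the ``good'' region $\mu_i \leq \mu$. In the good region, triggering gives $|Z_i - \mu| \leq (\epsilon/3)\mu$ directly; in the gap the stopping inequality $Z_i \in [\mu_i(1-c),\mu_i(1+c)]$ and the gap width together yield $|Z_i - \mu| \leq \tfrac{4\alpha/7}{1-3\alpha/7}\mu \leq \alpha\mu$ for $\alpha \leq 1$. I would also verify that triggering \emph{must} occur: at the first $i^{*}$ with $\mu_{i^{*}} \leq \mu$ we have $\mu_{i^{*}} \in (\mu(1-\gamma),\mu]$, and the elementary inequality $\epsilon/3 + \gamma \leq c(1-\gamma)$ (which holds for $\epsilon \leq 1/3$) forces $|Z_{i^{*}} - \mu_{i^{*}}| \leq c\mu_{i^{*}}$; since $i^{*} \leq \log(1/\mu)/\gamma \leq 49\log(1/\tau)/(2\alpha)$, the selection test returns $Z_{i^{*}}$ rather than $0$. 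When $\mu < \tau$, the loop index cutoff enforces $\mu_i \geq \tau/(1-c-\epsilon) > \mu/(1-c-\epsilon)$ throughout, so $E$ guarantees no triggering; the loop exits at $i = M+1 > 49\log(1/\tau)/(2\alpha)$ and the output is $0$.

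For the sample complexity, AMR* extends $L(\delta)$ running sums across iterations rather than restarting them, so the total sample count equals $L(\delta)\cdot m(\epsilon/3, V(\mu_{i^{*}}))$ at the terminating iteration, which is the maximum over iterations because $V$ is non-increasing and the $\mu_i$ are decreasing. Since $\mu_{i^{*}} = \Theta((\mu)_\tau)$ in both cases, slow variation gives $V(\mu_{i^{*}}) = O(V((\mu)_\tau))$, and substituting $L(\delta) = O(\log(\log(1/\tau)/(\alpha\chi)))$ produces the stated bound. The main obstacle is the narrow ``gap'' region where neither Chebyshev-based argument gives MoM concentration around $\mu$ nor rules out triggering; the resolution is that any triggering in this $\Theta(\epsilon)$-wide window automatically implies $|Z_i-\mu| = O(\alpha)\mu$, and the tuning $c = \epsilon/2$, $\gamma = \epsilon/7$, $\epsilon = 2\alpha/7$ is chosen precisely so that this $O(\alpha)$ is provably $\leq \alpha$.
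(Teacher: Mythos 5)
Your proposal is correct and follows essentially the same route as the paper: the two per-iteration events you define (no triggering while $\mu_i \geq (1-(c+\epsilon))^{-1}\mu$, guaranteed triggering once $\mu_i$ first drops to just below $\mu$) are exactly the content of the paper's Lemma~\ref{lem:adaptive-mom}, proved by the same Chebyshev-plus-median argument using monotonicity of $\mu^2 V(\mu)$ and of $V$, and your deterministic accuracy bound in the intermediate window, union bound with $\delta = \Theta(\alpha\chi/\log(1/\tau))$, and sample-count accounting at the terminating index all match the paper's argument. The constants you verify ($\frac{4\alpha/7}{1-3\alpha/7}\leq\alpha$ and $\epsilon/3+\gamma\leq c(1-\gamma)$ for $\epsilon\leq 1/3$) are consistent with the paper's parameter choices.
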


With this algorithm in hand we are ready to state our main result in the abstract setting of HBE. 
\begin{theorem}[Main Result]\label{thm:main} 
Given a $V$-bounded HBE with complexity $T$ , there exists a data structure that can answer any  query in time $ O(\frac{1}{\epsilon^{2}}V((\mu)_{\tau})\log(\frac{1}{\chi})  T)$ using space $O(\frac{1}{\epsilon^{2}}V(\tau)\log(\frac{1}{\chi})\cdot nT)$ with success probability at least $1-\chi$ for $\chi\leq 1/\log(1/\tau)$.
\end{theorem}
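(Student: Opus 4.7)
The plan is to combine the Adaptive Mean Relaxation guarantee (\prettyref{thm:AMR}) with the complexity accounting for a HBE, viewing each hash table as a ``sample generator.'' The HBE definition gives us, per hash function $h\sim\nu$, an unbiased estimator $Z_h$ that is $V$-bounded (from the framework of \prettyref{sec:hbe}), such that after $T\cdot n$ preprocessing (store the hash table $H$ of $h$) any query $x$ can draw a fresh sample $Z_h(x)$ in time $T$.

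In the preprocessing phase I would sample $N := L \cdot m_{\max}$ hash functions $h_1,\ldots,h_N$ i.i.d.\ from $\nu$ and build the corresponding hash tables, where $L = \lceil 9 \log(1/\delta)\rceil$ with $\delta = \Theta(\epsilon\chi/\log(1/\tau))$ as in \prettyref{thm:AMR}, and $m_{\max} = \lceil 6 V(\tau)/\epsilon^{2}\rceil$ is enough samples per running sum to cope with the hardest relaxation $\mu_i \approx \tau$. Since each hash table costs $T\cdot n$, total preprocessing space is $O(N\cdot nT) = O(\tfrac{1}{\epsilon^{2}}V(\tau)\log(1/\chi)\cdot nT)$, using the hypothesis $\chi \leq 1/\log(1/\tau)$ to absorb the $\log\log(1/\tau)$ factor into $\log(1/\chi)$.

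At query time I run AMR* (the variant of AMR that keeps $L$ running sums and \emph{grows} each of them as $\mu_i$ shrinks, rather than drawing fresh independent samples per iteration). For each mean guess $\mu_i = (1-\gamma)^i$, we need $m_i = O(V(\mu_i)/\epsilon^{2})$ samples per running sum; incrementally we draw fresh $Z_{h_j}(x)$ values by using previously-unused precomputed hash tables. By \prettyref{thm:AMR} the procedure terminates either at an $\epsilon$-relative approximation when $\mu \geq \tau$, or by declaring $0$ when $\mu < \tau$, with failure probability at most $\chi$. The total number of samples it consumes is bounded by $O(\tfrac{1}{\epsilon^{2}}\log(\tfrac{\log(1/\tau)}{\epsilon\chi})\,V((\mu)_\tau)) = O(\tfrac{1}{\epsilon^{2}}V((\mu)_\tau)\log(1/\chi))$ (again using $\chi \leq 1/\log(1/\tau)$), and each of these samples costs $T$ time to evaluate, giving the claimed query time.

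The only non-routine point, and really the reason AMR* is stated as a separate algorithm from AMR, is the \emph{sample-sharing} subtlety: we must argue that reusing the same precomputed hash tables across iterations of the outer loop does not spoil the concentration used inside the MoM subroutine. The key observation is that within a single run of AMR*, the $L$ running sums are maintained as sums of $Z_{h_j}(x)$ over disjoint buckets of indices $j$, so they remain mutually independent; and within a single running sum, increasing $m_i$ by appending more i.i.d.\ samples only makes the per-bucket mean more concentrated, which is exactly what we need for the stopping condition $|Z_i-\mu_i|\leq c\,\mu_i$ to fire at an iteration where $\mu_i$ is a constant-factor approximation to $\mu$. Everything else is a bookkeeping composition of \prettyref{thm:AMR} with the HBE complexity bound.
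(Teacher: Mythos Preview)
Your construction is essentially right and close to the paper's, but there is one quantitative slip that actually matters for matching the stated bound. You run AMR* directly with target accuracy $\alpha=\epsilon$, so by \prettyref{thm:AMR} the sample count is
\[
O\!\left(\frac{1}{\epsilon^{2}}\,V((\mu)_\tau)\,\log\!\Big(\frac{\log(1/\tau)}{\epsilon\,\chi}\Big)\right)
= O\!\left(\frac{1}{\epsilon^{2}}\,V((\mu)_\tau)\,\big(\log(1/\chi)+\log\log(1/\tau)+\log(1/\epsilon)\big)\right).
\]
The hypothesis $\chi\le 1/\log(1/\tau)$ lets you absorb $\log\log(1/\tau)$ into $\log(1/\chi)$, as you say, but it does \emph{not} absorb $\log(1/\epsilon)$; that term survives, and there is no assumption tying $\epsilon$ to $\chi$. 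So your one-stage argument yields the slightly weaker bound $O(\frac{1}{\epsilon^{2}}V((\mu)_\tau)(\log(1/\chi)+\log(1/\epsilon))\,T)$, not the claimed $O(\frac{1}{\epsilon^{2}}V((\mu)_\tau)\log(1/\chi)\,T)$. The same extra factor appears in your preprocessing count $N$.

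The paper avoids this with a small two-stage trick: first run AMR* with $\alpha=1$ (constant accuracy) and failure probability $\chi/2$, which costs only $O(V((\mu)_\tau)\log(1/\chi))$ samples and either certifies $\mu<\tau$ or returns a constant-factor estimate $\hat\mu$; then, in the latter case, run a \emph{single} Median-of-Means at level $V(\hat\mu)$ with accuracy $\epsilon$ and failure $\chi/2$, costing $O(\frac{1}{\epsilon^{2}}V((\mu)_\tau)\log(1/\chi))$ samples. The point is that the $\epsilon$-dependence enters only in the second stage where there is no $\log(1/\alpha)$ overhead, so the stated bound holds as written. Your discussion of sample sharing (disjoint index buckets for the $L$ running sums, appending fresh samples as $\mu_i$ shrinks) is correct and matches the paper's AMR* remark; the only fix needed is to split off the constant-accuracy localization from the $\epsilon$-accurate final MoM call.
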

\begin{proof}
We begin by describing the preprocessing phase. We sample $N= O(\log(1/\chi) \frac{1}{\epsilon^{2}} V(\tau))$ hash functions $h_{1},\ldots, h_{N}\stackrel{i.i.d.}{\sim} \nu$ from $\mathcal{H}$ and evaluate them on the dataset $P$. This can be done in  $NT\cdot n$ time and space. The query algorithm interacts with the data-structure by making calls to hash functions. The data-structure always keeps the index of the last hash function called and increments it in a cyclic fashion after each call, thus for a given query it never evaluates the same hash function twice and the samples obtained are independent.  When the query arrives, the query algorithm first  runs a stage of the adaptive mean relaxation algorithm with $\alpha=1$ and probability $\chi/2$. Every time a sample is needed a call is made to the data-structure. After $O(V((\mu)_{\tau}\log(1/\chi)))$ calls with probability at least $1-\chi/2$ we either have a constant factor approximation or we know that $\mu<\tau$ (if AMR* outputs $0$). In the first case, we apply one level of MoM algorithm using an underestimate of $\mu$ that uses $O(\frac{1}{\epsilon^{2}}\log(1/\chi)V((\mu)_{\tau}))$ more calls and gets a $(1\pm \epsilon)$ multiplicative  approximation with probability at least $1-\chi$. In the latter, case we simply output 0.  
\end{proof}

The proof of Theorem \ref{thm:main-kernel} follows by invoking Theorem \ref{thm:main} for specific $V$-bounded estimators that we derive in Sections \ref{sec:euclidean} and \ref{sec:gaussian} using Locality Sensitive Hashing schemes. In the rest of this section we give the proof of the intermediate results. We use the following key lemma.

\begin{lemma}\label{lem:adaptive-mom} Given a $V$-bounded estimator $Z\sim \nu$,  let $Z_{i}:=\mathrm{MoM}_{\frac{\epsilon}{3}}(\nu,V(\mu_{i}))$, with $\mu_{i}\in(0,1]$ then
\begin{enumerate}
\item [(i)] for  all $i$ such that $\mu_{i} \geq (1-(c+\epsilon))^{-1} \mu $, it holds that $\P[|Z_{i}-\mu_{i}|\leq c\cdot \mu_{i}]\leq \delta$.
\item [(ii)] for all $\frac{1+\frac{\epsilon}{3}}{1+c}\mu\leq \mu_{i} \leq\mu$, it holds that $\P[|Z_{i}-\mu_{i}|\leq c \cdot \mu_{i}] \geq 1- \delta$.
\end{enumerate}
\end{lemma}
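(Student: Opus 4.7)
The plan is to reduce both parts to the standard Median-of-Means concentration bound applied to $Z_i$, and then to translate deviations of $Z_i$ from its true mean $\mu$ into statements about distance to the proxy $\mu_i$ via the triangle inequality. The two monotonicity assumptions built into the definition of a $V$-bounded estimator ($V$ non-increasing and $\mu \mapsto \mu^{2}V(\mu)$ non-decreasing) are exactly what is needed to compare $V(\mu_i)$ and $V(\mu)$ in each regime.

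First I would record the per-block Chebyshev bound. Each $Z_i$ is computed by Algorithm~\ref{alg:example} with accuracy $\epsilon/3$ and relative-variance parameter $V(\mu_i)$, so each block is an empirical mean of $m = \lceil 54\, V(\mu_i)/\epsilon^{2}\rceil$ i.i.d.\ copies of $Z$, and there are $L = \lceil 9\log(1/\delta)\rceil$ blocks whose median is returned. For any threshold $t>0$, Chebyshev applied to a block mean $Z^{(j)}$ gives
\[
\P\bigl[|Z^{(j)}-\mu| \ge t\bigr] \;\le\; \frac{\mathrm{Var}(Z)}{m\,t^{2}} \;\le\; \frac{\mu^{2}V(\mu)}{m\,t^{2}},
\]
using $V$-boundedness. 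If I can make the right-hand side at most $1/3$, then the standard Chernoff argument on the median shows $\P[|Z_i-\mu|\ge t]\le \delta$.

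For part (i), note $\mu_i \ge (1-(c+\epsilon))^{-1}\mu$ is equivalent to $\mu \le (1-c-\epsilon)\mu_i$, so if $|Z_i-\mu_i| \le c\mu_i$ then $Z_i \ge (1-c)\mu_i \ge \mu + \epsilon\mu_i$, i.e.\ $|Z_i-\mu| \ge \epsilon\mu_i$. Thus it suffices to bound $\P[|Z_i-\mu|\ge \epsilon\mu_i]\le \delta$. Here $\mu_i\ge\mu$, so the monotonicity of $\mu^{2}V(\mu)$ gives $\mu^{2}V(\mu)/\mu_i^{2}\le V(\mu_i)$, and plugging $t=\epsilon\mu_i$ into the Chebyshev bound yields per-block failure $\le V(\mu_i)/(m\epsilon^{2})\le 1/54$. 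The median then deviates by $\ge \epsilon\mu_i$ with probability $\le \delta$, completing part (i).

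For part (ii), the hypothesis $\mu_i \ge \frac{1+\epsilon/3}{1+c}\mu$ rearranges to $(\epsilon/3)\mu + (\mu-\mu_i) \le c\mu_i$. Hence whenever $|Z_i-\mu|\le (\epsilon/3)\mu$, the triangle inequality yields $|Z_i-\mu_i|\le (\epsilon/3)\mu + (\mu-\mu_i) \le c\mu_i$. So it suffices to prove $\P[|Z_i-\mu|\ge (\epsilon/3)\mu] \le \delta$. Here $\mu_i\le\mu$, so since $V$ is non-increasing, $V(\mu_i) \ge V(\mu)$, hence $m\ge 54\,V(\mu)/\epsilon^{2}$, and Chebyshev at threshold $t = (\epsilon/3)\mu$ yields per-block failure $\le 9\,V(\mu)/(m\epsilon^{2}) \le 1/6$. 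The median is within $(\epsilon/3)\mu$ of $\mu$ with probability $\ge 1-\delta$, finishing part (ii).

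The only subtlety—and what I would call the main obstacle—is that we are running MoM with the wrong variance parameter $V(\mu_i)$ instead of the correct $V(\mu)$, so Chebyshev does not immediately give per-block accuracy $(\epsilon/3)\mu_i$ or $(\epsilon/3)\mu$ out of the box. The two monotonicity clauses in the definition of $V$-bounded handle precisely the two directions: in (i) we use monotonicity of $\mu^{2}V(\mu)$ when $\mu_i > \mu$, and in (ii) we use monotonicity of $V$ itself when $\mu_i < \mu$. Once those comparisons are in hand the proof is just bookkeeping.
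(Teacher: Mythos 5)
Your proposal is correct and follows essentially the same route as the paper's proof: both parts reduce the event $|Z_i-\mu_i|\le c\mu_i$ to a deviation of $Z_i$ from its true mean $\mu$ via the same algebraic implications, apply Chebyshev per block with the same thresholds $\epsilon\mu_i$ and $(\epsilon/3)\mu$, invoke monotonicity of $\mu^2 V(\mu)$ in case (i) and of $V$ in case (ii) to get the per-block failure bounds $1/54$ and $1/6$, and finish with Chernoff on the median. No gaps.
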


\subsection{Proof of Theorem \ref{thm:AMR}}
 
\subsubsection{Case $\mu \geq \tau$}
If $\mu \geq \tau$, we say that the algorithm succeeds when the algorithm terminates with $\frac{1+\epsilon/3}{1+c}\mu\leq \mu_{i}\leq \frac{1}{1-(c+\epsilon)}\mu$. By construction of the algorithm this would imply $|Z_{i}-\mu|\leq \alpha \mu$. To show this we consider two cases. The first case is when $\mu_{i}$ is in the subinterval with elements larger than $\mu$:
\begin{align}
\left.
\begin{matrix}
\mu \leq \mu_{i}\leq (1-(c+\epsilon))^{-1}\mu\\
|Z-\mu_{i}|\leq c\mu_{i}
\end{matrix}\right\} \Rightarrow |Z-\mu_{i}|\leq \left( (c+\epsilon)+\frac{c}{1-(c+\epsilon)}\right)\mu \leq \alpha \mu
\end{align}
where in the last step we used that $\epsilon=2c=2\alpha/7$ and $1-3\alpha/7\geq 1/2$. The other case is when:
\begin{align}
\left.
\begin{matrix}
\frac{1+\epsilon/3}{1+c}\mu \leq \mu_{i}\leq \mu\\
|Z-\mu_{i}|\leq c\mu_{i}
\end{matrix}\right\} \Rightarrow |Z-\mu_{i}|\leq  \frac{2\epsilon}{3}\mu \leq \alpha \mu
\end{align}
Thus, what is left is to bound the probability that the algorithm terminates for some $i$ that satisfies $\frac{1+\epsilon/3}{1+c}\mu \leq \mu_{i}\leq (1-(c+\epsilon))^{-1}\mu$.

\begin{claim}
There exists  $0\leq i^{*}\leq \frac{\log(\mu /(1-(c+\epsilon)))}{\log(1-\gamma)}$ such that $\mu \leq \mu_{i^{*}}\leq (1-(c+\epsilon))^{-1}\mu$.
\end{claim}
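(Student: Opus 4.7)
The deterministic sequence $\mu_i = (1-\gamma)^i$ decreases monotonically from $\mu_0 = 1 \geq \mu$ toward $0$, so the set $\{i \geq 0 : \mu_i \geq \mu\}$ is nonempty and finite. My candidate for $i^*$ would be the \emph{largest} element of this set; then $\mu_{i^*} \geq \mu$ holds by construction, settling the lower density bound.

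For the upper density bound $\mu_{i^*} \leq (1-(c+\epsilon))^{-1}\mu$, I would exploit two observations. First, maximality of $i^*$ gives $\mu_{i^*+1} < \mu$, so $\mu_{i^*} = \mu_{i^*+1}/(1-\gamma) < \mu/(1-\gamma)$. Second, the parameter settings at the top of Algorithm~\ref{alg:example} give $c = \epsilon/2$ and $\gamma = \epsilon/7$, hence $\gamma < c+\epsilon = 3\epsilon/2$, so $1-\gamma > 1-(c+\epsilon)$ and therefore $\mu/(1-\gamma) < \mu/(1-(c+\epsilon))$. Chaining the two inequalities produces the required upper bound on $\mu_{i^*}$.

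For the stated index bound, I would take logarithms of $(1-\gamma)^{i^*} \geq \mu$ and divide through by $\log(1-\gamma) < 0$, flipping the sense of the inequality to obtain $i^* \leq \log(\mu)/\log(1-\gamma)$, which yields the claim up to the standard bookkeeping about the direction of logarithmic inequalities and a possible additive ceiling correction (the continuous bound $\log(\mu/(1-(c+\epsilon)))/\log(1-\gamma)$ in the statement differs from $\log(\mu)/\log(1-\gamma)$ by $\log(1/(1-(c+\epsilon)))/\log(1/(1-\gamma))$, which, by the parameter choices, is a constant of order $(c+\epsilon)/\gamma$ and does not affect the $O(\log(1/\mu)/\gamma)$ order of magnitude used elsewhere in the proof of Theorem~\ref{thm:AMR}). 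The main (and only) subtlety is ensuring that $\gamma$ is strictly less than $c+\epsilon$, which is precisely why the constants $c,\epsilon,\gamma$ were fixed in those specific proportions rather than chosen as arbitrary small quantities: the geometric step $(1-\gamma)$ must be mild enough that the sequence cannot ``leapfrog'' the target window $[\mu,\ \mu/(1-(c+\epsilon))]$. No deeper technical obstacle appears to arise.
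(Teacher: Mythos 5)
Your argument is essentially the paper's: both hinge on the observation that $\gamma=\epsilon/7 < c+\epsilon$, so the geometric step $1-\gamma$ cannot ``leapfrog'' the window $[\mu,\ (1-(c+\epsilon))^{-1}\mu]$, which is exactly the contradiction the paper derives. The only difference is that you take the \emph{last} index with $\mu_i\geq\mu$ rather than the \emph{first} index with $\mu_i\leq(1-(c+\epsilon))^{-1}\mu$, which costs an additive $O((c+\epsilon)/\gamma)=O(1)$ in the index bound relative to the one literally stated; as you note, this is immaterial to how the claim is used in the proof of Theorem~\ref{thm:AMR}.
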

\begin{proof}
If $\mu> (1-(c+\epsilon))$ then this is true for $i^{*}=0$, otherwise there exist  $i \geq 1$ such that  $\mu_{i -1}\geq \frac{1}{1-(c+\epsilon)}\mu$. Since we start with $\mu\leq \mu_{0}$, we  show that by slowly decreasing $\mu_{i}$ (at a rate of $1-\gamma$) we never go from the case $\mu_{i-1}\geq (1-(c+\epsilon))^{-1}\mu$ to the case $\mu_{i}<\mu$. Assuming that was possible  we obtain a contradiction. 
\begin{align}
(1-\gamma)=\frac{\mu_{i}}{\mu_{i-1}}< \frac{\mu}{(1-(c+\epsilon))^{-1}\mu} = 1-(c+\epsilon) < 1-\gamma
\end{align}
Since the sequence $\mu_{i}=(1-\gamma)^{i}$ there must be an index $i\leq \log(\mu/(1-c-\epsilon))/\log(1-\gamma)$ with the required property.
\end{proof}

\begin{claim} There exists $j^{*}>i^{*}$ such that $\frac{1+\epsilon/3}{1+c}\mu \leq \mu_{j^{*}}\leq \mu$.
\end{claim}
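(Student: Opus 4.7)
The plan is to exploit the fact that the geometric step factor $(1-\gamma)$ is strictly larger than the ratio of the endpoints of the target interval $[\frac{1+\epsilon/3}{1+c}\mu,\,\mu]$, so the sequence $\mu_i = (1-\gamma)^i$ cannot jump over this interval once it enters the region $\le \mu$. Concretely, I would define $j^{*}$ to be the smallest index strictly greater than $i^{*}$ for which $\mu_{j^{*}} \le \mu$. Such an index exists because $\mu_i \to 0$ and the termination bound in the algorithm is compatible (the previous claim already guaranteed $i^{*}$ exists with $\mu_{i^{*}}\le (1-(c+\epsilon))^{-1}\mu$, and continuing to decrease eventually crosses $\mu$).

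By minimality of $j^{*}$ we have $\mu_{j^{*}-1} > \mu$, and hence
\[
\mu_{j^{*}} \;=\; (1-\gamma)\,\mu_{j^{*}-1} \;>\; (1-\gamma)\,\mu.
\]
It therefore suffices to verify the elementary numerical inequality $(1-\gamma) \ge \frac{1+\epsilon/3}{1+c}$ with the algorithm's parameter choices $\gamma = \epsilon/7$ and $c = \epsilon/2$. Rearranging, this reduces to
\[
(1-\tfrac{\epsilon}{7})(1+\tfrac{\epsilon}{2}) \;-\; (1+\tfrac{\epsilon}{3}) \;=\; \tfrac{\epsilon}{42} - \tfrac{\epsilon^{2}}{14} \;\ge\; 0,
\]
which holds for all $\epsilon \le 1/3$. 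Since $\epsilon = 2\alpha/7 \le 2/7 < 1/3$, the inequality is satisfied and we conclude $\mu_{j^{*}} \ge \frac{1+\epsilon/3}{1+c}\mu$.

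Finally, $\mu_{j^{*}} \le \mu$ by construction, so $j^{*}$ lies in the desired range. The only slightly subtle point is making sure $j^{*} > i^{*}$ is consistent with $j^{*}$ being before the algorithm terminates, but this follows from the bound on $i$ in the repeat loop together with the fact that $\mu_{j^{*}} \ge \frac{1+\epsilon/3}{1+c}\mu \ge \tau/(1-(c+\epsilon))$ when $\mu \ge \tau$. I do not expect any real obstacle; the whole argument is the one-line observation that a geometric sequence with ratio closer to $1$ than the endpoint ratio of an interval must hit the interval, combined with a routine verification of the numerical inequality dictated by the choices of $\gamma$ and $c$.
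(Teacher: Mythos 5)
Your argument is correct and is essentially the paper's: the paper proves the same inequality $(1-\gamma)\geq \frac{1+\epsilon/3}{1+c}$ by contradiction (assuming the sequence skips the interval forces $\gamma>\frac{\epsilon}{6(1+c)}>\frac{\epsilon}{7}$), while you verify it directly via $(1-\tfrac{\epsilon}{7})(1+\tfrac{\epsilon}{2})-(1+\tfrac{\epsilon}{3})=\tfrac{\epsilon}{42}-\tfrac{\epsilon^2}{14}\geq 0$; both are the same one-line observation about a geometric sequence not overshooting an interval whose endpoint ratio exceeds $1-\gamma$. One caveat: the inequality in your closing remark, $\frac{1+\epsilon/3}{1+c}\mu \geq \tau/(1-(c+\epsilon))$, is false in general (the left side is at most $\mu$ while the right side exceeds $\tau$, so it fails when $\mu$ is close to $\tau$), but this is not needed for the claim as stated, which concerns only the existence of the index $j^{*}$ in the sequence $\mu_i=(1-\gamma)^i$.
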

\begin{proof} Let $i^{*}$ be as above, then  $\mu \leq \mu_{i^{*}}$. We  show that for $j>i^{*}$ we cannot have $\mu \leq \mu_{j-1}$ and $\mu_{j} < \frac{1+\epsilon/3}{1+c}\mu$ as in that case $1-\gamma=\frac{\mu_{j}}{\mu_{j-1}}<\frac{1+\epsilon/3}{1+c}\Rightarrow \gamma >\frac{\epsilon}{6(1+c)}>\frac{\epsilon}{7}=\gamma$. As the sequence $\mu_{i}$ is decreasing there must exist a $j^{*}$ with the required property and $j^{*}\leq i^{*} + \lceil \frac{\log(\frac{1+c}{(1-(c+\epsilon))(1+\epsilon/3)})}{-\log(1-\gamma)}\rceil=O(i^{*})$
\end{proof}
The algorithm thus succeeds if the condition $|Z_{i}-\mu_{i}|<c\mu_{i}$ is satisfied for any index $k \in \{i^{*}, \ldots, j^{*}\}$ and is not satisfied for any index $k < i^{*}$. Let $A_{i}$ be the event that condition is satisfied at step $i\geq 0$. Lemma \ref{lem:adaptive-mom} shows that $\P[A_{i}]\leq \delta$ for all $i<i^{*}$ and $\P[A_{j^{*}}]\geq 1-\delta$. The probability of success is then at least:
\begin{align}
\P[\bar{A}_{0}\cap \ldots \cap \bar{A}_{i^{*}-1}\cap A_{j^{*}}] = 1- \P[A_{0}\cup A_{i^{*}-1}\cup \bar{A}_{j^{*}}]\geq 1- (i^{*}+1)\delta \geq 1-\chi
\end{align} 
The number of samples required for the $\textrm{AMR}^{*}$ procedure are given by:
\begin{equation}
\lceil \frac{54}{\epsilon^{2}}V(\mu_{j^{*}})\rceil \lceil 9\log(1/\delta)\rceil = O\left(\frac{1}{\alpha^{2}}V((\mu)_{\tau})\cdot \log\left(\frac{\log(1/\tau)}{\chi\alpha}\right)\right)
\end{equation} 
\subsubsection{Case $\mu<\tau$}
When $\mu < \tau$ we are always in the case where  $\mu_{i} \geq (1-(c+\epsilon))^{-1} \mu $ and by Lemma \ref{lem:adaptive-mom} the probability that the algorithm terminates before step $\log(\tau/(1-(c+\epsilon)))/\log(1-\gamma)$ is upper bounded by $\log(\tau/(1-(c+\epsilon)))/\log(1-\gamma)\delta\leq \chi$. Thus, with probability at least $1-\chi$ we have $\hat{Z}=0$.
 
\subsection{Proof of Lemma \ref{lem:adaptive-mom}}

To prove the lemma, we are going to show that the event $|Z_{i}-\mu_{i}| \leq c \mu_{i}$ is related to an event that expresses the deviation of $Z_{i}$ from its true mean. We will then appeal to concentration of measure for the median-of-means estimator to show the required bounds. We have that for all $\mu_{i} \geq (1-(c+\epsilon))^{-1} \mu $ 
\begin{align*}
|Z_{i}-\mu_{i}|\leq c   \mu_{i} 
&\Rightarrow |Z_{i}-\mu| \geq \left[\left(1-\frac{\mu}{\mu_{i}} \right) -c\right] \mu_{i}  \Rightarrow |Z_{i}-\mu|\geq \epsilon \mu_{i}
\end{align*} 
By monotonicity this implies that $\P[|Z_{i}-\mu_{i}|\leq c \mu_{i}] \leq \P\left[|Z_{i}-\mu|\geq \epsilon \mu_{i}  \right]$. But $Z_{i}=\mathrm{MoM}_{\frac{\epsilon}{3},\delta}(\nu,V(\mu_{i}))$, is the median of $L(\delta)=\lceil 9\log(1/\delta)\rceil $, averages of $54V(\mu_{i})/\epsilon^{2}$ samples. We compute the probability that any of the $L$ averages deviates from the true mean:
\begin{equation*}
\P[|\tilde{Z}-\mu|\geq \epsilon \mu_{i}] \leq \frac{\mu^{2}\cdot V(\mu)}{\epsilon^{2}\mu_{i}^{2} \frac{54V(\mu_{i})}{\epsilon^{2}}} \leq \frac{1}{54} \cdot\frac{\mu^{2}V(\mu)}{\mu_{i}^{2}V(\mu_{i})} \leq \frac{1}{54}
\end{equation*}
where the last step follows by monontonicity of $\mu^{2}V(\mu)$ and the assumption that $\mu_{i}\geq (1-(c+\epsilon))^{-1}\mu$. Then, by Chernoff bounds (Appendix ) we get that the probability that the median  of such averages has the deviation in question is less than $\delta$. Similarly, we have that in the case that $\frac{1+\frac{\epsilon}{3}}{1+c}\mu\leq \mu_{i} \leq \mu$:
\begin{align}
|Z_{i}-\mu | \leq \frac{\epsilon}{3} \mu \Rightarrow |Z_{i}-\mu_{i}|\leq  c   \mu_{i}    
\end{align} 
As before let $\tilde{Z}$ denote any of the $L$ averages of which we output the median. Then,
\begin{align*}
\P[|\tilde{Z}-\mu_{i}|\leq c \mu_{i}] \geq \P[|\tilde{Z}-\mu|\leq \frac{\epsilon}{3}\mu]\geq 1-\P[|\tilde{Z}-\mu|\geq \frac{\epsilon}{3}\mu]\geq  1- \frac{9\mu^{2} V(\mu)}{\epsilon^{2}\mu^{2}\frac{54V(\mu_{i})}{\epsilon^{2}}} \geq 1-\frac{1}{6} \frac{V(\mu)}{V(\mu_{i})} \geq \frac{5}{6}
\end{align*}
Again, by Chernoff bounds we get that the probability of the deviation for the median is less than $\delta$.

\section{KDE through Euclidean LSH}\label{sec:euclidean}
In this section, we instantiate the framework of HBE for the problem of KDE using Euclidean LSH of Datar et al.~\cite{datar2004locality}. At a high level, given a kernel $k$, the goal is to design a hashing scheme such that the probability that two points hash at the same bucket is as similar as possible to $k(x,y)$. We consider three such kernels: the Exponential, the Generalized $t$-Student (polynomial) and the Gaussian kernel. We show that this specific LSH scheme can be used to construct scale-free estimators for the first two, while for the Gaussian case, although we cannot get a scale-free estimator,  we are still able to analyze the variance of the estimator using Theorem \ref{thm:two-points}. 
\subsection{Euclidean LSH}
The family of hash functions is given by:
\begin{align*}
\mathcal{H}_{1}(w)&:=\left\{ \left.h(x)=\left\lceil \frac{g^{\top}x+\beta}{w}\right\rceil \right| g\in\R^{d}, \beta \in [0,w] \right\}
\end{align*}
for some fixed $w>0$. 
We define a distribution $\nu_{1}$ over $\mathcal{H}_{1}$ by sampling $g\sim \mathcal{N}(0,I_{d})$ and $\beta \sim U[0,w]$. The important quantity to control is the collision probability $p_{1}(c):=\P_{h\sim \nu}\left[h(x)=h(y)\right]$ of two points $x,y$ at distance $\|x-y\|=cw$ and is given by~\cite{datar2004locality}:
\begin{equation}\label{eq:DIIM}
p_{1}(c)= 1 - 2\Phi(c^{-1}) - \sqrt{\frac{2}{\pi}} c \left(1-\exp\{-\frac{c^{-2}}{2}\}\right)
\end{equation}
\begin{lemma}[Pointwise bounds]\label{lem:one-dim}
For all $c>0$
\begin{equation}
p_{1}(c) = \sqrt{\frac{2}{\pi}}\sum_{k=0}^{\infty} \frac{(-1)^{k}}{2^{k}k!(2k+2)(2k+1)}\frac{1}{c^{2k+1}}
\end{equation}

while for  $\delta\leq \frac{1}{2}$ and $c\leq \min\left\{\delta,\frac{1}{\sqrt{2\ln(1/\delta)}}\right\}$ 
\begin{align} \label{eq:exp-bounds}
e^{-\sqrt{\frac{2}{\pi}} \left(1+\delta \right)\cdot c}\leq p_{1}(c) \leq e^{-\sqrt{\frac{2}{\pi}} (1-\delta^{3})\cdot c}
\end{align}
\end{lemma}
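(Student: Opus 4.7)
The plan is to handle the two parts separately. For the power series, I Taylor-expand the two non-polynomial pieces of the closed form around $u = 1/c = 0$, using $\Phi(u) = \tfrac{1}{2} - \tfrac{1}{\sqrt{2\pi}}\sum_{k\ge 0}\tfrac{(-1)^{k} u^{2k+1}}{2^{k}k!(2k+1)}$ and $e^{-u^{2}/2} = \sum_{k\ge 0}\tfrac{(-1)^{k} u^{2k}}{2^{k} k!}$. The constants cancel, and after reindexing $k \mapsto k+1$ in the exponential's series to line up the powers of $1/c$, the coefficient of $c^{-(2k+1)}$ in $p_1(c)$ becomes $\sqrt{2/\pi}\,(-1)^{k}\bigl[\tfrac{1}{2^{k}k!(2k+1)} - \tfrac{1}{2^{k+1}(k+1)!}\bigr]$. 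A direct simplification, writing both fractions over the common denominator $2^{k+1}(k+1)!(2k+1)$, gives $\tfrac{2(k+1)-(2k+1)}{2^{k+1}(k+1)!(2k+1)} = \tfrac{1}{2^{k}k!(2k+2)(2k+1)}$, exactly the stated coefficient. The resulting series is entire in $1/c$, so the identity is valid for every $c>0$.

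For the exponential estimates, the plan is first to pin down $1 - p_1(c) = 2\Phi(1/c) + \sqrt{2/\pi}\,c\bigl(1 - e^{-1/(2c^{2})}\bigr)$ between two close-by expressions using both directions of Mill's ratio, and then convert to exponentials via $1-x \leq e^{-x}$ and $\ln(1-x) \geq -x - x^{2}$ on $[0,\tfrac{1}{2}]$. The upper Mill's inequality $\Phi(u) \leq \tfrac{1}{u\sqrt{2\pi}}e^{-u^{2}/2}$ with $u=1/c$ gives $2\Phi(1/c) \leq \sqrt{2/\pi}\,c\,e^{-1/(2c^{2})}$, so the two $e^{-1/(2c^{2})}$ contributions cancel, producing $1 - p_1(c) \leq \sqrt{2/\pi}\,c$. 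The lower inequality $\Phi(u) \geq \tfrac{u}{(1+u^{2})\sqrt{2\pi}}e^{-u^{2}/2}$ yields, after analogous cancellation, $1 - p_1(c) \geq \sqrt{2/\pi}\,c\,\bigl(1 - \tfrac{c^{2}}{c^{2}+1}e^{-1/(2c^{2})}\bigr) \geq \sqrt{2/\pi}\,c\,\bigl(1 - c^{2}e^{-1/(2c^{2})}\bigr)$. The two hypotheses $c \leq \delta$ and $c \leq 1/\sqrt{2\ln(1/\delta)}$ precisely give $c^{2} \leq \delta^{2}$ and $e^{-1/(2c^{2})} \leq \delta$, so $c^{2}e^{-1/(2c^{2})} \leq \delta^{3}$, hence $1 - p_1(c) \geq \sqrt{2/\pi}\,c(1-\delta^{3})$.

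With the sandwich in hand, the exponential bounds follow immediately: the upper claim is $p_1(c) \leq 1 - \sqrt{2/\pi}\,c(1-\delta^{3}) \leq e^{-\sqrt{2/\pi}(1-\delta^{3})c}$, and for the lower, since $\sqrt{2/\pi}\,c < c \leq \delta \leq \tfrac{1}{2}$, applying $\ln(1-x) \geq -x - x^{2}$ at $x=\sqrt{2/\pi}\,c$ gives $p_1(c) \geq 1 - \sqrt{2/\pi}\,c \geq \exp\bigl(-\sqrt{2/\pi}\,c(1+\sqrt{2/\pi}\,c)\bigr) \geq \exp\bigl(-\sqrt{2/\pi}(1+\delta)c\bigr)$. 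The only subtle point is the necessity of retaining the $2\Phi(1/c)$ term in the lower analysis via the lower Mill's inequality: dropping it and using only $\sqrt{2/\pi}\,c(1-e^{-1/(2c^{2})})$ would force $c \leq 1/\sqrt{6\ln(1/\delta)}$ to produce a $(1-\delta^{3})$ factor, strictly stronger than what the lemma assumes. Invoking both sides of Mill's ratio is what makes the milder hypothesis $c \leq 1/\sqrt{2\ln(1/\delta)}$ suffice; everything else is mechanical algebra.
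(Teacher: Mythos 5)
Your proof is correct and follows essentially the same route as the paper: the same Taylor-expansion-and-cancellation argument for the series identity, and the same sandwich on $1-p_1(c)$ obtained by cancelling the $e^{-1/(2c^2)}$ terms against Gaussian tail estimates, followed by the same $c^2e^{-1/(2c^2)}\le\delta^3$ step and the same $1-x$ versus $e^{-x}$ conversions. The only cosmetic differences are that you invoke the standard Mill's-ratio bounds where the paper uses Feller's two- and three-term tail estimates (both yield the identical intermediate inequalities $1-\sqrt{2/\pi}\,c\le p_1(c)\le 1-\sqrt{2/\pi}\,c\,(1-c^2e^{-1/(2c^2)})$), and that you use $\ln(1-x)\ge -x-x^2$ where the paper uses $1-x\ge e^{-(1+\epsilon/2)x}$.
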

We see that this family of hash functions gives collision probability that is  exponentially decreasing with the distance for small values of $c\ll 1$ whereas the decay is inverse polynomial for large values values of $c\gg 1$. We exploit these  properties to derive scale free estimators for the Exponential and $t$-Student kernels. 

\subsection{Scale-free Estimators for Exponential and $t$-Student Kernel}

\begin{theorem}[Exponential Kernel]\label{thm:exponential}
  For $\beta \in(0,1]$ there exists a $(\beta,\sqrt{e})$-scale free HBE for the exponential kernel $ e^{-\|x-y\|}$ that has complexity $O(d R^{2})$.
\end{theorem}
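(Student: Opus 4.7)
The plan is to use a hash family obtained by concatenating $t$ independent copies of the Euclidean LSH $\mathcal{H}_1(w)$ from Datar et al., with bucket width $w$ and concatenation length $t$ chosen so that the $t$-fold collision probability closely approximates $w_i^\beta = e^{-\beta\|x-y\|}$. Lemma \ref{lem:one-dim} yields $p_1(c) \approx e^{-\sqrt{2/\pi}\,c}$ for small $c$, so after raising to the $t$-th power and substituting $c = \|x-y\|/w$, one should set $w = \sqrt{2/\pi}\,t/\beta$ in order to match the target exponent $\beta$.

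To turn this heuristic into a rigorous $(\beta,\sqrt{e})$-scale-free guarantee, I would invoke the quantitative bound \eqref{eq:exp-bounds} from Lemma \ref{lem:one-dim} with a parameter $\delta \in (0,1/2]$. After raising to the $t$-th power and using $\sqrt{2/\pi}\,t/w = \beta$, this gives
$$e^{-(1+\delta)\beta\|x-y\|} \;\leq\; p(x,y) \;\leq\; e^{-(1-\delta^3)\beta\|x-y\|}.$$
The multiplicative slack relative to $w_i^\beta$ is therefore bounded by $e^{\delta\beta\|x-y\|} \leq e^{\delta\beta R}$ on each side. Choosing $\delta = \min\{1/2,\,1/(2\beta R)\}$ guarantees this slack is at most $e^{1/2} = \sqrt{e}$, which is exactly the $(\beta,\sqrt{e})$-scale-free condition of Definition \ref{def:scale-free}.

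Next I would verify the regime of validity of \eqref{eq:exp-bounds}, which requires $c \leq \delta$ and $c \leq 1/\sqrt{2\ln(1/\delta)}$. Using $\|x-y\| \leq R$, the binding constraint is $R/w \leq \delta$, i.e.\ $w \geq R/\delta$; plugging in the most demanding choice $\delta = 1/(2\beta R)$ yields $w \geq 2\beta R^2$. Combined with $w = \sqrt{2/\pi}\,t/\beta$, this pins $t = \Theta(\beta^2 R^2)$. A single hash $h(x) = \lceil (g^\top x + \beta)/w \rceil$ costs $O(d)$ time and linear space, so the full concatenated hash costs $O(td) = O(\beta^2 R^2 d) = O(dR^2)$, matching the claimed complexity. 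The HBE preprocessing and evaluation therefore fit within the stated budget.

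The main obstacle is the tension between needing $w$ large enough for \eqref{eq:exp-bounds} to be sharp and $t$ small enough to keep complexity low; the scaling $w \sim \beta R^2$, $t \sim \beta^2 R^2$ resolves both simultaneously. A minor case to check separately is the small-$\beta R$ regime, where $\delta$ saturates at $1/2$ and the extra factor $\delta^3 = 1/8$ in the upper bound only makes things tighter, so the $\sqrt{e}$ slack still holds. No deeper structural argument about the data distribution is needed: the scale-free property follows purely from the pointwise control of $p(x,y)$ as a function of $\|x-y\|$.
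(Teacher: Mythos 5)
Your proposal is correct and follows essentially the same route as the paper's proof: concatenate $O(R^2)$ copies of the Datar et al.\ hash with width $w = \sqrt{2/\pi}\,t/\beta$, invoke the quantitative bounds \eqref{eq:exp-bounds} of Lemma \ref{lem:one-dim}, and check that the accumulated multiplicative slack is at most $e^{1/2}$. The only difference is bookkeeping — you fix $\delta$ and solve for $t$, while the paper fixes $D = \lceil\sqrt{2\pi}R\rceil^2$ and derives $\delta$ — which yields the same $O(dR^2)$ complexity.
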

\begin{proof}
Set $D=\lceil \sqrt{2\pi} R \rceil^{2}, \ w= \frac{D}{\beta \sqrt{\frac{\pi}{2}}}$ and consider the HBE resulting from the family $\mathcal{H}_{1}^{\otimes D}(w)$ with probability measure $\nu_{1}^{\otimes D}$. We first see that for a pair of points that are distance $\|x-y\|=r$ apart, we have that $c\leq \delta = \frac{R}{w} = \beta\sqrt{\frac{\pi}{2}} \frac{R}{\lceil\sqrt{2\pi} R\rceil^{2} } \leq \frac{\beta}{2}\frac{1}{\lceil\sqrt{2\pi} R\rceil} \leq \frac{1}{2}$. Additionally, for these parameters it holds that $e^{-\sqrt{\frac{2}{\pi}}cD} = e^{-\beta \cdot r}$. Using the second part of Lemma \ref{lem:one-dim} with $\delta = \beta\sqrt{\frac{\pi}{2}} \frac{R}{\lceil\sqrt{2\pi} R\rceil^{2} }$ we get 
\begin{align}
e^{-\sqrt{\frac{2}{\pi}}(\frac{\beta}{2}\frac{1}{\lceil\sqrt{2\pi} R\rceil})^{2}D}   \leq  \frac{p_{1}^{D}(c)}{e^{-\beta \cdot r}} \leq e^{+\sqrt{\frac{2}{\pi}}(\frac{\beta}{2}\frac{1}{\lceil\sqrt{2\pi} R\rceil})^{4}D} 
\end{align}
Which gives us that $\frac{1}{\sqrt{e}} \cdot e^{-\beta   r} \leq p^{D}_{1}(c) \leq \sqrt{e}\cdot e^{-\beta  r}$. To implement the estimator we require $O(R^{2})$ hash functions that each can be evaluate in time $O(d)$ and requires space $O(d+n)$. 
\end{proof}

Although, this is to be expected given the derived pointwise bounds on the collision probabilities, and seems to suggest a correspondence between kernels and hashing schemes, we show next that one can construct a \emph{scale-free estimator for another very different kernel using the same LSH scheme}.
\begin{theorem}[Generalized $t$-Student Kernel]\label{thm:student}
  For integers $p,q\geq 1$, there exists a $(\frac{q}{p},3^{q})$-scale free HBE for the kernel $\frac{1}{1+\|x-y\|^{p}}$ that has complexity $T=O(dp)$.
\end{theorem}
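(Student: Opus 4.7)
The plan is to follow the same template as the proof of Theorem~\ref{thm:exponential}: instantiate the Euclidean LSH family from Lemma~\ref{lem:one-dim}, choose the width and the power so that the resulting collision probability matches $\left(\tfrac{1}{1+\|x-y\|^p}\right)^{q/p}$ up to a multiplicative factor of $3^q$, and then read off the scale-free property from Definition~\ref{def:scale-free}. Concretely, I would use the hash family $\mathcal{H}_1^{\otimes q}(w)$ with width $w$ equal to a universal constant (the natural choice is $w=\sqrt{2\pi}$ so that the heavy-tail asymptotics line up). For two points at Euclidean distance $r$, the collision probability is then $p_1(r/w)^q$ where $p_1$ is the function in Lemma~\ref{lem:one-dim}.

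Raising to a $q$-th power reduces the whole theorem to a single pointwise inequality,
\begin{equation*}
\tfrac{1}{3}\,(1+r^p)^{-1/p} \;\leq\; p_1(r/w) \;\leq\; 3\,(1+r^p)^{-1/p} \qquad \text{for all } r\geq 0 \text{ and integer } p\geq 1,
\end{equation*}
and the plan is to establish this by splitting into two regimes. For small $r$ (below a constant threshold), both sides are close to $1$: the lower bound on $p_1$ comes from the exponential sandwich (\ref{eq:exp-bounds}) in Lemma~\ref{lem:one-dim}, while $(1+r^p)^{-1/p}\in[1/2,1]$ for $r\leq 1$. For large $r$, I would use the leading behaviour $p_1(c)=\tfrac{1}{\sqrt{2\pi}\,c}+O(c^{-3})$ coming from the alternating series (5.3), together with the elementary estimate $r \leq (1+r^p)^{1/p}\leq 2^{1/p} r$, to check that $p_1(r/w)\cdot (1+r^p)^{1/p}$ tends to $w/\sqrt{2\pi}=1$ at rate $O(r^{-2})$.

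The main obstacle is the intermediate regime $r\asymp 1$, where neither the exponential approximation of $p_1$ nor its leading-order tail is tight, and we need a bound uniform in $p$. The key simplifying observation is that $p_1(r/w)$ is independent of $p$, while $(1+r^p)^{-1/p}$ is a pointwise monotone function of $p$ converging to $\max(1,r)^{-1}$; hence for each fixed $r$ the ratio $p_1(r/w)\,(1+r^p)^{1/p}$ is monotone in $p$, so it suffices to verify the inequality at the two extremes $p=1$ and $p\to\infty$. For $p=1$ the claim reduces to showing $p_1(c)(1+cw)\in[1/3,3]$ on all of $\R_{\geq 0}$, which I would verify by combining the exponential bound (\ref{eq:exp-bounds}) on $[0,1]$ with the truncated series (5.3) on $[1,\infty)$; the $p\to\infty$ case reduces to $p_1(c)\cdot\max(1,cw)\in[1/3,3]$, handled by the same two-regime split.

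Finally, the complexity claim is straightforward: each copy of the scalar Gaussian LSH $h(y)=\lceil (g^\top y+\beta)/w\rceil$ costs one $O(d)$ inner product, so concatenating the appropriate number of independent copies yields per-sample evaluation, space, and preprocessing time linear in $d$ times the number of concatenated hashes, giving the stated $T=O(dp)$ bound (with the number of hashes absorbed into the $p$ factor; using $q\leq p$, which is forced since $\beta=q/p\in(0,1]$, gives the same bound).
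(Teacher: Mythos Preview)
Your construction is the same as the paper's: $q$ concatenated copies of the Euclidean LSH $\mathcal{H}_1(w)$ with $w=\sqrt{2\pi}$, reducing the theorem to the single pointwise inequality $\tfrac{1}{3}\leq p_1(r/w)\,(1+r^p)^{1/p}\leq 3$. The paper's verification of this inequality is shorter than the one you sketch. It splits only at $c=1$: for $c>1$ it truncates the alternating series of Lemma~\ref{lem:one-dim} after two terms to obtain $\tfrac{11}{12r}\leq p_1(r/w)\leq\tfrac{1}{r}$ and pairs this with the elementary bound $r\leq(1+r^p)^{1/p}$; for $c\leq 1$ it uses only that $p_1$ is decreasing, so $p_1(c)\in[p_1(1),1]\subseteq[\tfrac{11}{12\sqrt{2\pi}},1]$. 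Your observation that $(1+r^p)^{1/p}$ is monotone in $p$ is correct but is not needed, since these bounds already hold uniformly in $p$. One detail to patch in your plan: the exponential sandwich~\eqref{eq:exp-bounds} requires $c\leq\delta\leq\tfrac{1}{2}$, so it does not cover all of $c\in[0,1]$ as you propose for the $p=1$ case; replacing it with the monotonicity argument $p_1(c)\geq p_1(1)$ (as the paper does) closes this gap.
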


\begin{proof} Set $w=\sqrt{2\pi}$ and consider the HBE resulting from the family $\mathcal{H}_{1}^{\otimes p}(w)$. We obtain bounds for the collision probabilities for two points at distance $\|x-y\|=r =c\cdot w$ apart.  For $c> 1$, using the first part of Lemma \ref{lem:one-dim} and dropping terms appropriately 
\begin{equation}\label{eq:poly-bound}
\frac{1}{\sqrt{2\pi}} (1-\frac{1}{12c^{2}})   \frac{1}{c}\leq p_{1}(c) \leq \frac{1}{\sqrt{2\pi}} \frac{1}{c}
\end{equation}

Setting $\beta = \frac{q}{p}$ and raising to the $q$-th power we get that for all $c> 1$:
\begin{align}
\left( \frac{11}{12}\right)^{q}   \frac{(1+r^{p})^{\beta}}{r^{q}} \leq \frac{p_{1}^{q}(c)}{\frac{1}{(1+r^{p})^{\beta}}} \leq  \frac{(1+r^{p})^{\beta}}{r^{q}}
\end{align}

whereas for $c\leq 1$ we have from \eqref{eq:poly-bound} and monotonicity of $p_{1}(c)$:
\begin{equation}
\left(\frac{11}{12\sqrt{2\pi}}\right)^{q}(1+r^{p })^{\beta}\leq \frac{p_{1}^{q}(c)}{\frac{1}{(1+r^{p})^{\beta}}}\leq (1+r^{p})^{\beta}
\end{equation}
Since $2^{-1} \geq 11/(12\sqrt{2\pi}) \geq 3^{-1}$ this shows that the resulting estimator is $(\frac{q}{p}, 3^{q})$-\emph{scale-free} for the Generalized $t$-student Kernel for all distances.
\end{proof}
These two results combined with Theorem \ref{thm:scale-free} and Theorem \ref{thm:main} can be used to construct a data structure for the KDE problem under the exponential or $t$-Student kernel. More importantly, this is done effortlessly by appealing to the the general case of HBE and only required showing that our hashing schemes produces scale-free estimators. 

\subsection{An estimator for Gaussian KDE}
We show next that the framework of HBE can be useful beyond the ideal scenarios where a scale-free estimator can be derived. We do so by showing that one can use the exponential drop-off of the collision probabilities to simulate the Gaussian kernel and then appeal to the more general Theorem \ref{thm:two-points} to bound its variance.

\begin{theorem}\label{thm:euclid-gauss} 
For any $ t\in [1, R]$ there exists a HBE $Z_{t}$ for the Gaussian kernel $e^{-\|x-y\|^{2}}$ with $\E[Z_{t}^{2}] \leq \mu^{2}\cdot4e^{\frac{3}{2}} \mu^{-\gamma^{2}+\gamma-1}$ where $\gamma(t,\mu) := t/\sqrt{\log(1/\mu)}$, that has complexity $T=O(d t^{2}R^{2})$.
\end{theorem}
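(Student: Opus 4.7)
The plan is to construct $Z_{t}$ as a hashing-based estimator drawn from the family $\mathcal{H}_{1}^{\otimes D}(w)$ with parameters $(D,w)$ tuned so that the collision probability of two points at distance $r$ closely approximates $e^{-tr}$, and then bound its second moment via Theorem \ref{thm:two-points}. Concretely, I would set $D = \Theta(t^{2}R^{2})$ and $w = \Theta(D/t)$, which forces the rescaled parameter $c=r/w \leq \delta = O(1/(tR))$ for every distance $r\leq R$. Plugging this into Lemma \ref{lem:one-dim} then yields $p_{1}^{D}(c) = e^{-tr\,\pm\,O(1)}$, matching the target exponential up to an absolute multiplicative constant (specifically within a factor $\sqrt{e}$ on either side). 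Evaluating the $D$ one-dimensional hash functions costs $O(dD) = O(dt^{2}R^{2})$, which is the claimed complexity $T$.

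To bound $\E[Z_{t}^{2}]$, I would apply Lemma \ref{lem:quadratic_form} (the main inequality behind Theorem \ref{thm:two-points}) with $w_{i}=e^{-r_{i}^{2}}$ and $p_{i}\asymp e^{-tr_{i}}$. This reduces the task to maximizing $f_{i}^{*}\,A_{ij}\,f_{j}^{*}$ over a pair of distances $r_{i},r_{j}$, where $f_{k}^{*}=\min\{1,\mu/w_{k}\}$ changes behaviour at the threshold $r_{*}:=\sqrt{\log(1/\mu)}$ (at which $w_{k}=\mu$), and $A_{ij}=w_{i}^{2}p_{j}/p_{i}^{2}$ when $r_{i}<r_{j}$ and $w_{i}^{2}/p_{i}$ otherwise. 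Splitting into cases according to whether each $r_{k}$ lies below or above $r_{*}$ reduces the problem to a handful of explicit one-dimensional optimizations in which $\gamma := t/r_{*}$ arises naturally.

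I expect the binding case to be the mixed regime $r_{i}\leq r_{*}\leq r_{j}$ (with $i<j$ in the $p$-sorted order), in which $f_{i}^{*}\,A_{ij}\,f_{j}^{*} = \mu\,e^{-r_{i}^{2}+2tr_{i}-tr_{j}}$. Optimizing over $r_{j}\geq r_{*}$ gives the boundary $r_{j}=r_{*}$, while optimizing $-r_{i}^{2}+2tr_{i}$ over $r_{i}\leq r_{*}$ gives the interior critical point $r_{i}=t$ exactly when $\gamma \leq 1$ (otherwise the boundary $r_{i}=r_{*}$ is taken, yielding a strictly smaller contribution $\mu^{2-\gamma}$). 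Substituting $r_{i}=t$, $r_{j}=r_{*}$ produces $\mu\cdot e^{t^{2}-tr_{*}} = \mu^{1+\gamma-\gamma^{2}}=\mu^{2}\cdot \mu^{-\gamma^{2}+\gamma-1}$, matching the exponent of $\mu$ in the theorem. The remaining cases (both $r_{i},r_{j}$ above or both below $r_{*}$, and the reversed order $r_{j}\leq r_{i}$, which only involves $A_{ij}=w_{i}^{2}/p_{i}$) yield $\mu^{2-\gamma}$ or $\mu^{2-\gamma^{2}/2}$ type bounds, all strictly dominated by the mixed case.

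The main obstacle is threefold: (i) choosing $(D,w)$ so that the multiplicative slack in the exponential approximation of Lemma \ref{lem:one-dim} is uniformly $O(1)$ across the full range $r\in[0,R]$ without blowing up $D$ beyond $O(t^{2}R^{2})$; (ii) carrying out the case analysis cleanly and verifying that the mixed regime dominates; and (iii) tracking the absolute constant---the slack $p_{i}\in [e^{-1/2},e^{1/2}]\cdot e^{-tr_{i}}$ contributes at most $e^{3/2}$ inside $w_{i}^{2}p_{j}/p_{i}^{2}$, which combined with the factor $4$ from Lemma \ref{lem:quadratic_form} yields the stated constant $4e^{3/2}$.
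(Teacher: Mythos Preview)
Your proposal is correct and follows essentially the same route as the paper: construct the HBE from $\mathcal{H}_{1}^{\otimes D}(w)$ with $D=\Theta(t^{2}R^{2})$ and $w=\Theta(D/t)$, derive the two-sided bound $e^{-1/2}e^{-tr}\leq p_{1}^{D}(c)\leq e^{1/2}e^{-tr}$ from Lemma~\ref{lem:one-dim}, and then bound the second moment via Theorem~\ref{thm:two-points} by a case analysis on whether each of the two ``worst-case'' distances lies above or below the threshold $r_{*}=\sqrt{\log(1/\mu)}$. The paper parametrizes the distances as $r=\sqrt{\alpha\log(1/\mu)}$ and splits on $\alpha\lessgtr 1$, while you work directly with $r\lessgtr r_{*}$, but the computations are identical, and your identification of the maximizer $(r_{i},r_{j})=(t,r_{*})$ matches the paper's $(\alpha,\alpha')=(\gamma^{2},1)$.

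One small imprecision: your claim that the ``both below $r_{*}$'' case yields only $\mu^{2-\gamma}$-type bounds and is strictly dominated is not quite right. In that regime the $\tilde A_{1,2}$ term equals $\mu^{2}\,e^{-r_{i}^{2}+2tr_{i}+r_{j}^{2}-tr_{j}}$ (up to the $e^{3/2}$ slack), whose maximum over $r_{i}\leq r_{j}\leq r_{*}$ is again attained at $r_{i}=t$, $r_{j}=r_{*}$, giving the same exponent $-\gamma^{2}+\gamma-1$. The two cases therefore tie at the common boundary $r_{j}=r_{*}$ rather than one strictly dominating the other; this does not affect the final bound or constant.
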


\begin{proof}
Fix $t\geq 1$ and set $D= 3 \lceil tR\rceil^{2}$, $ w= \frac{D}{t} \sqrt{\frac{2}{\pi}}$. We consider once more the HBE resulting from $\mathcal{H}_{1}^{D}(w)$ and probability measure $v_{1}^{\otimes D}$. For a pair of points at distance $r = c\cdot w\leq R$ we have that $c\leq \delta = \frac{R}{w} \leq  \frac{tR}{  3 \lceil tR\rceil^{2} }\sqrt{\frac{\pi}{2}}\leq \frac{1}{2}$. By utilizing Lemma \ref{lem:one-dim} one again we obtain   $\frac{1}{\sqrt{e}}\cdot  e^{- r\cdot t}\leq p^{D}_{1}(c) \leq \sqrt{e} \cdot  e^{- r\cdot t}$. To bound the variance due to Theorem \ref{thm:two-points} we only need to consider what happens for datasets supported only on two points. It will be useful to consider that one point is at distance $r_{1}=\sqrt{\alpha \log(1/\mu)}$ away from the query and the other at  $r_{2}=\sqrt{\alpha^{'}\log(1/\mu)}$ with $0\leq \alpha \leq \alpha^{'}$. Setting $\gamma(t,\mu) = \frac{t}{\sqrt{\log(1/\mu)}}$ the pointwise bounds become
\begin{align}
\frac{1}{\sqrt{e}}\cdot \mu^{\gamma\sqrt{\alpha}} \leq p^{D}_{1}(c) \leq \sqrt{e}\cdot \mu^{\gamma\sqrt{\alpha}}
\end{align}
The weights (contribution) of each point in this case are given by $w_{1}=\mu^{\alpha}$ and $w_{2}=\mu^{\alpha^{'}}$. To bound the variance we pefrom a case analysis:
\begin{itemize}
\item \emph{Case} $\alpha\leq \alpha^{'}\leq 1$: we first bound the term
\begin{equation}
\tilde{A}_{1,2}=\frac{\mu}{w_{1}}A_{1,2}\frac{\mu}{w_{2}}=\mu^{2}\frac{w_{1}}{p_{1}^{2}}\frac{p_{2}}{w_{2}}\leq \mu^{2}e^{3/2}\cdot \mu^{\alpha-2\gamma\sqrt{\alpha}-\alpha^{'}+\sqrt{\alpha^{'}}\gamma}\leq \mu^{2}e^{3/2}\cdot \mu^{-\gamma^{2}+\gamma-1}
\end{equation}
where in the last step we optimized for $0\leq \alpha\leq \alpha^{'}\leq 1$ getting values $\alpha^{'}=1$ and $\alpha=\gamma^{2}$. 
\begin{equation}
\tilde{A}_{2,1}=\frac{\mu}{w_{2}}A_{2,1}\frac{\mu}{w_{2}} = \mu^{2}\frac{w_{2}}{p_{2}w_{1}}\leq \mu^{2} \sqrt{e}\cdot \mu^{\alpha^{'}-\gamma\sqrt{\alpha^{'}}-\alpha}\leq \mu^{2}\sqrt{e}\cdot \mu^{-\gamma}
\end{equation}
where in the last step we set $\alpha=\alpha^{'}=1$. The same bound applies to the term $\tilde{A}_{2,2}$. For completeness, we also bound:
\begin{eqnarray}
\tilde{A}_{1,1}=\left(\frac{\mu}{w_{1}}\right)^{2}A_{1,1}=\frac{1}{p_{1}}\mu^{2}\leq  \mu^{2}\sqrt{e}\cdot \mu^{-\gamma}
\end{eqnarray}
Hence, the variance is bounded by $\mu^{2}4e^{3/2}\cdot \mu^{-\gamma^{2}+\gamma-1}$.
\item \emph{Case} $\alpha\leq 1< \alpha^{'}$: we only need to bound the cross-terms
\begin{equation}
\tilde{A}_{1,2}=\frac{\mu}{w_{1}}\cdot A_{1,2}\cdot 1=\mu \frac{w_{1}p_{2}}{p_{1}^{2}}\leq \mu^{2}e^{3/2}\cdot \mu^{\alpha-2\gamma\sqrt{\alpha}+\sqrt{\alpha^{'}}\gamma-1}< \mu^{2}e^{3/2}\cdot \mu^{-\gamma^{2}+\gamma-1}
\end{equation}
where again we optimized for $\alpha\leq 1 < \alpha^{'}$. The other term is
\begin{equation}
\tilde{A}_{2,1}=1\cdot A_{2,1}\frac{\mu}{w_{1}} = \mu \frac{w^{2}_{2}}{p_{2}w_{1}}\leq \mu  \sqrt{e}\cdot \mu^{2\alpha^{'}-\gamma\sqrt{\alpha^{'}}-\alpha}\leq \mu^{2}\sqrt{e}\cdot \mu^{-\gamma}
\end{equation}
where in the last step we set $\alpha=\alpha^{'}=1$. Additionally,
\begin{eqnarray}
\tilde{A}_{1,1}=\left(\frac{\mu}{w_{1}}\right)^{2}A_{1,1}=\frac{1}{p_{1}}\mu^{2}\leq  \mu^{2}\sqrt{e}\cdot \mu^{-\gamma}
\end{eqnarray}
Hence, in this case the variance is bounded by $\mu^{2}4e^{3/2}\cdot \mu^{-\gamma^{2}+\gamma-1}$.
\item \emph{Case} $1<\alpha\leq \alpha^{'}$: following the same logic we get that the variance is bounded by $\mu^{2}4e^{3/2}\mu^{-\gamma}$.
\end{itemize}

Thus, overall we see that the worst-case distances of the two points are given by  $\alpha=\gamma^{2}$ and $\alpha^{'}=1$. For which, the upper bound on the second moment becomes $\mu^{2}\cdot 4e^{\frac{3}{2}}\cdot \mu^{-\gamma^{2}+\gamma-1}$. 
\end{proof}
Even though that we were not able to get the ideal behavior using this hashing scheme for the Gaussian Kernel, still the estimator has improved variance compared to random sampling for all $0<\gamma<1$ and achieves it's best performance when $t = \frac{1}{2}\sqrt{\log(1/\mu)}\Rightarrow \gamma=\frac{1}{2}$.

\subsection{Pointwise Bounds for Euclidean LSH} 
\begin{proof}[Proof of Lemma \ref{lem:one-dim}]

To prove the first part we are going to use Taylor expansion of the exponential:
\begin{eqnarray}
1 - 2\Phi(\rho^{-1}) &=& \int_{-\rho^{-1}}^{\rho^{-1}} \frac{1}{\sqrt{2\pi}} \exp\{-\frac{t^{2}}{2}\}dt\\
&=& \frac{1}{\sqrt{2\pi}} \sum_{k=0}^{\infty}\frac{(-1)^{k}}{2^{k}k!} \int_{-\rho^{-1}}^{\rho^{-1}} t^{2k}dt\\
&=& \sqrt{\frac{2}{\pi}} \sum_{k=0}^{\infty}\frac{(-1)^{k}}{(2k+1)2^{k}k!} \rho^{-(2k+1)}
\end{eqnarray}
Similarly we have that$\sqrt{\frac{2}{\pi}}\rho \exp\{-\rho^{-2}/2\} = \sqrt{\frac{2}{\pi}} \sum_{k=0}^{\infty}\frac{(-1)^{k}}{2^{k}k!} \rho^{-(2k-1)}$.
Putting it all together we get:
\[
C(\rho) = \sqrt{\frac{2}{\pi}} \sum_{k=0}^{\infty}\frac{(-1)^{k}}{2^{k}k!(2k+2)(2k+1)}  \rho^{-(2k+1)}
\]
To prove the  second part of the lemma we use the following bounds for gaussian tails.

\begin{lemma}[\cite{Feller1}]\label{lem:feller}
Let $X\sim \mathcal{N}(0,1)$ and $\Phi(x):=\P[X\geq x]$ for all $x>0$, then:
\begin{equation}
\frac{1}{\sqrt{2\pi}}\frac{1}{x}\left\{1-\frac{1}{x^{2}}\right\}e^{-\frac{x^{2}}{2}}\leq \Phi(x)\leq \frac{1}{\sqrt{2\pi}}\frac{1}{x}\left\{1-\left(\frac{1}{x^{2}}-\frac{1}{x^{4}}\right)\right\}e^{-\frac{x^{2}}{2}}
\end{equation} 
\end{lemma}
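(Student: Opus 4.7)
The strategy is iterated integration by parts, exploiting the key identity $\varphi'(t) = -t\,\varphi(t)$ for the standard Gaussian density $\varphi(t) = \frac{1}{\sqrt{2\pi}}e^{-t^2/2}$. Writing $\Phi(x) = \int_x^\infty \varphi(t)\,dt$ and rewriting the integrand as $\frac{1}{t}\cdot(-\varphi'(t))$, a single integration by parts yields the exact identity
\[ \Phi(x) = \frac{\varphi(x)}{x} - \int_x^\infty \frac{\varphi(t)}{t^2}\,dt, \]
and positivity of the remainder already gives the crude tail bound $\Phi(x) \le \varphi(x)/x$. This is the seed of both inequalities in the lemma.

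Next, I would apply the same trick to the residual integral, writing $\varphi(t)/t^2 = \frac{1}{t^3}\cdot(-\varphi'(t))$ and integrating by parts to obtain
\[ \int_x^\infty \frac{\varphi(t)}{t^2}\,dt = \frac{\varphi(x)}{x^3} - 3\int_x^\infty \frac{\varphi(t)}{t^4}\,dt. \]
Substituting back gives the exact expansion $\Phi(x) = \frac{\varphi(x)}{x}(1 - \frac{1}{x^2}) + 3\int_x^\infty \frac{\varphi(t)}{t^4}\,dt$. Since the last integral is strictly positive, this immediately delivers the stated lower bound $\Phi(x) \ge \frac{\varphi(x)}{x}(1 - \frac{1}{x^2})$ with no further work.

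For the upper bound I would iterate once more. Applying the same integration-by-parts step to $\int_x^\infty \varphi(t)/t^4\,dt$ produces an exact identity with a positive tail $\int_x^\infty \varphi(t)/t^6\,dt$ subtracted, which after dropping that tail yields an alternating expansion of the form $\frac{\varphi(x)}{x}\bigl(1 - \frac{1}{x^2} + \frac{c}{x^4}\bigr)$. The main obstacle is the precise constant $c$: the plain integration-by-parts calculation yields $c=3$, not the $c=1$ written in the lemma. To reach $c=1$ one would need a sharper control on $\int_x^\infty \varphi(t)/t^4\,dt$ than the trivial $\varphi(x)/x^5$ estimate (which follows from $1/t^4 \le 1/x^4$ combined with the first-step bound $\int_x^\infty \varphi \le \varphi(x)/x$); in the absence of such a tightening, I would regard the stated $\frac{1}{x^4}$ coefficient as a mild typo for $\frac{3}{x^4}$, and then verify that the downstream use of this lemma inside the proof of Lemma~\ref{lem:one-dim} (which only cares about the leading $(1-1/x^2)$ behavior in order to lift the expansion \eqref{eq:exp-bounds}) continues to go through with the looser constant.
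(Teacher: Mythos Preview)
The paper does not give its own proof of this lemma; it is simply cited from Feller, and your iterated integration-by-parts argument is precisely Feller's classical derivation of the alternating asymptotic expansion for the Gaussian tail. Your lower bound and the overall mechanism are correct.

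Your diagnosis of the upper bound is also correct, and in fact the paper corroborates it: the stated coefficient $\tfrac{1}{x^{4}}$ is indeed a typo for $\tfrac{3}{x^{4}}$ (the bound with coefficient $1$ is actually false for large $x$, since the true asymptotic is $1-\tfrac{1}{x^{2}}+\tfrac{3}{x^{4}}-\cdots$). If you look at the paper's application of this lemma inside the proof of Lemma~\ref{lem:one-dim}, the displayed lower bound for $p_{1}(c)$ reads $1-\sqrt{\tfrac{2}{\pi}}\,c\bigl(1-c^{2}(1-3c^{2})e^{-c^{-2}/2}\bigr)$; the factor $(1-3c^{2})$, with $c=1/x$, comes exactly from plugging in $1-\tfrac{1}{x^{2}}+\tfrac{3}{x^{4}}$. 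So the paper silently uses the correct constant $3$ downstream, and your proposed fix matches what the paper actually relies on.
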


Starting from \eqref{eq:DIIM} and using   Lemma \ref{lem:feller} for $x=c^{-1}$ we arrive at:
\begin{equation}
1 - \sqrt{\frac{2}{\pi}}c\left(1-c^{2}(1-3c^{2})e^{-c^{-2}/2}\right)\leq p_{1}(c)\leq 1- \sqrt{\frac{2}{\pi}}c\left(1-c^{2}e^{-c^{-2}/2} \right)
\end{equation}
For the upper bound, the assumption that $c\leq\min\{ \frac{1}{\sqrt{2\ln(1/\delta)}}, \delta\}$ implies that $1-c^{2}e^{-c^{-2}/2}\geq 1-\delta^{3}$ and consequently $p_{1}(c)\leq e^{-\sqrt{\frac{2}{\pi}}(1-\delta^{3})c}$. For the lower bound we  use the fact that the function $f(c)=c^{2}(1-3c^{2})e^{-c^{2}/2}$ is decreasing for all $c\in [0,1/2]$ to  obtain $p_{1}(c)\geq 1-\sqrt{\frac{2}{\pi}}c$. The lower bound then follows by the inequalities $1-x\geq e^{-(1+\frac{\epsilon}{2})x}$ for $x\in[0,\frac{\epsilon}{1+\epsilon}]$  and $\sqrt{\frac{2}{\pi}}c\leq \delta\leq \frac{2\delta}{1+2\delta}$ for $\delta\leq 1/2$.
\end{proof}
\section{Scale-free estimator based on Andoni-Indyk LSH}\label{sec:gaussian}

Our framework of scale-free esimators is a natural desideratum when trying to approximate the ideal function for importance sampling. In this section, we show how to use the ``Ball-Carving'' LSH introduced by Andoni and Indyk~\cite{andoni2006near} for Euclidean distance to get a \emph{scale-free} estimator for the Gaussian Kernel.
\subsection{Ball-Carving LSH}
 Andoni and Indyk~\cite{andoni2006near} introduced a family of hash functions $
\mathcal{H}_{t}(w)$ parametrized by an integer $t\geq 2$, and a width $w>0$ such that the evaluation cost is bounded by $U_{t}=d2^{O(t\log t)}\log n$ and space usage by $O(U_{t}+dn)$. Their scheme partitions the space by randomly projecting into $t$-dimensions and then carving out balls of radius $w$ centered at random points. We refer to the resulting probability measure as $\nu_{t}$. Using a similar analysis as the one in \cite{andoni2006near} we show the following bounds on the collision probability $p_{t}(c):=\P_{h\sim\nu_{t}}[h(x)=h(y)]$  of two points $x,y$ at distance $cw$.

\begin{lemma}[Pointwise Bounds]\label{lem:AI-bounds}
 The function $p_{t}(c)$ is non-increasing for all $c\geq 0,t\geq 1$. Furthermore,  for all $t\geq 12$ and $\frac{16}{t+7}\leq c^{2} \leq 1$
\begin{align}
 p_{t}(c) & \geq 
\left\{\frac{1}{4\sqrt{t}}\frac{1}{c} (1-2e^{-\frac{9t}{100}})e^{-\frac{t-1}{8}\frac{1}{2-c^{2}}c^{4}}\right\}  e^{-\frac{t-1}{8}c^{2}}\label{eq:AI-lower}\\
p_{t}(c) & \leq  \left\{\frac{3}{\sqrt{t}}\frac{1}{c}\left(1+\frac{\sqrt{t}c}{3}e^{-\frac{9}{64}t}\right) e^{\frac{t-1}{8^{2}}c^{4}}  \right\}e^{-\frac{t-1}{8}c^{2}}\label{eq:AI-upper}
\end{align}
\end{lemma}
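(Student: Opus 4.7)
The plan is to follow the standard template for analyzing Ball-Carving LSH, adapted to produce the specific numerical constants appearing in the statement. First I would give an explicit integral representation for $p_t(c)$. Recall that a hash function in $\mathcal{H}_t(w)$ is obtained by projecting $\R^d$ to $\R^t$ via a rescaled Gaussian map $G$, tiling $\R^t$ with cells, and carving balls of radius $w$ inside each cell at random offsets. Conditioning on the event that the projection $\rho := \|G(x)-G(y)\|$ is small compared to the cell diameter (which happens with probability $\geq 1 - 2e^{-9t/100}$ by standard Gaussian tail bounds, explaining the subtractive factor in \eqref{eq:AI-lower}), the probability that $G(x)$ and $G(y)$ land in the same carved ball equals the normalized intersection volume
\begin{equation*}
I(\rho) \;=\; \frac{\mathrm{vol}\bigl(B_w(0)\cap B_w(\rho e_1)\bigr)}{\mathrm{vol}(B_w(0))}\mper
\end{equation*}
Thus $p_t(c) = \E_\rho[I(\rho)] \pm O(e^{-9t/100})$, where $\rho$ has the distribution of $cw\|g\|/\sqrt{t}$ with $g\sim N(0,I_t)$.

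For the \emph{monotonicity} claim I would use a coupling argument. Couple the two projections at distances $c_1 w \leq c_2 w$ by the same Gaussian $g$, so that the corresponding projected distances $\rho_1 \leq \rho_2$ pointwise. Since $I(\rho)$ is a manifestly non-increasing function of $\rho$ (two balls drift apart as their centers separate), the coupling yields $p_t(c_1) \geq p_t(c_2)$.

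For the \emph{quantitative bounds} I would proceed in two independent estimates. The ball-intersection volume in $\R^t$ admits the closed-form
\begin{equation*}
I(\rho) \;=\; \frac{2}{\sqrt{\pi}}\frac{\Gamma(\tfrac{t}{2}+1)}{\Gamma(\tfrac{t+1}{2})}\int_{\rho/(2w)}^{1}(1-u^2)^{(t-1)/2}\dx u\mcom
\end{equation*}
which by elementary estimates of the incomplete beta integral (bounding the integrand at the endpoint and using $\Gamma(\tfrac{t}{2}+1)/\Gamma(\tfrac{t+1}{2}) = \Theta(\sqrt{t})$) satisfies $I(\rho) \asymp (1/\sqrt{t})(2w/\rho)(1-\rho^2/(4w^2))^{(t-1)/2}$ in the relevant range; the constants $1/4$ and $3$ in the prefactors of \eqref{eq:AI-lower} and \eqref{eq:AI-upper} come from tracking these estimates carefully. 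Second, one combines this with the concentration of $\|g\|^2/t$ around $1$. For the upper bound, one takes the a.s.\ bound $\rho^2/w^2 \leq c^2\|g\|^2/t$ and integrates the moment generating function of $\chi_t^2$ to obtain $\E[(1-\rho^2/(4w^2))^{(t-1)/2}] \leq e^{-(t-1)c^2/8}e^{(t-1)c^4/64}(1+o(1))$, after using $-\log(1-x) \leq x + x^2/2$ uniformly for $x \leq \tfrac14$; this gives the $e^{(t-1)c^4/8^2}$ Taylor correction. For the lower bound, one restricts to the event that $\|g\|^2/t$ is close to $1$ (which has probability $\geq 1-2e^{-9t/100}$) and uses the sharper bound $-\log(1-x) \leq x/(1-x)$, which under $x \leq c^2/4$ yields the $(2-c^2)^{-1}$ factor in the exponent.

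The main obstacle will be bookkeeping the constants: the hypothesis $c^2 \geq 16/(t+7)$ is needed to ensure that the chi-squared concentration losses and the $1/\sqrt{t}$ polynomial prefactor do not overwhelm the dominant $e^{-(t-1)c^2/8}$ behavior, while the restriction $c^2 \leq 1$ is needed so that the ball-intersection formula and the Taylor expansions of $\log(1-c^2/4)$ are valid in the stated form. Getting the factors $9/100$, $9/64$, and $1/(8(2-c^2))$ exactly matches the statement requires tracking each inequality with its numerical slack rather than absorbing everything into generic $O(\cdot)$ notation, but no conceptually new ingredient beyond Gaussian tails, the explicit ball-volume integral, and elementary bounds on $-\log(1-x)$ is required.
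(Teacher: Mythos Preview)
Your outline follows the correct template and the right list of ingredients: integral representation over the projected distance, the cap/ball-intersection volume estimate $I(\rho)\asymp \tfrac{1}{\sqrt{t}}\tfrac{2w}{\rho}(1-\rho^2/(4w^2))^{(t-1)/2}$, and chi-squared concentration. The coupling argument for monotonicity is clean. However, two points deserve correction.

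First, a conceptual slip: the collision probability in ball carving is not $\E_\rho[I(\rho)]$ but $\E_\rho\bigl[\tfrac{I}{1-I}\bigr]$, the intersection-to-\emph{union} ratio, because the hash is the index of the \emph{first} ball covering the point (a geometric-series computation). Since $I\le\tfrac12$ this is off by at most a factor of $2$ and is absorbed into the constants, but you should state the right formula. Relatedly, the additive $e^{-9t/64}$ in the upper bound does not come from a ``cell-diameter'' event; in the paper it arises from splitting the integral at the threshold $x=4t/(c^2(t+1))$ where the cap estimate changes regime, and bounding the small-$x$ piece by a chi-squared tail.

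Second, and more substantively, your lower-bound plan of ``restrict to $\|g\|^2/t$ close to $1$ and apply $-\log(1-x)\le x/(1-x)$ pointwise'' does not reproduce the stated exponent. If ``close to $1$'' means the window $[1/2,2]$ needed for the $1-2e^{-9t/100}$ probability, then $x=\tfrac{c^2}{4}\tfrac{\|g\|^2}{t}$ ranges up to $c^2/2$, and the pointwise bound yields exponent $-(t-1)c^2/(2(2-c^2))$, which is \emph{twice} the lemma's $-(t-1)c^2/(4(2-c^2))$. (Your claim that $x\le c^2/4$ gives the $(2-c^2)^{-1}$ factor is also an arithmetic slip: it gives $(4-c^2)^{-1}$.) The paper avoids this loss by \emph{integrating} rather than bounding pointwise: it uses $(1-x)^{t/2}\ge e^{-\frac{t}{2}Rx}$ on the range $x\le(R-1)/R$ with $R=\tfrac{2}{2-c^2}$, changes variables so the integral becomes a $\chi^2_{t-1}$ CDF times $(1+c^2R/4)^{-(t-1)/2}$, and only then applies $(1+y)^{-1}\ge e^{-y}$. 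The choice of $R$ is exactly what makes the integration window coincide with $\|g\|^2\le 2t$ while producing the $(2-c^2)^{-1}$ correction. Your MGF idea for the upper bound is fine in spirit, but for the lower bound you need this integration step, not a pointwise estimate.
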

Observe that for small $c=O(1/\mathrm{poly}(\log n))$ and large enough $t=\mathrm{poly}(\log n)$, the dominant term is $e^{-\frac{t-1}{8}c^{2}}$ and drops exponentially with the squared of the distance as desired. However, in order for the time $U_{t}$ to compute the hash function  on query to be $n^{o(1)}$, we must have $t=o(\log(n))$. This is the main bottleneck that complicates a bit the application of the Andoni-Indyk hashing scheme for our purposes.

\subsection{Scale-free estimator for Gaussian Kernel}
\begin{theorem}[Gaussian Kernel] \label{thm:AI}
For all  $\beta\in (0,1]$ there exists a $(\beta, e^{O(R^{\frac{4}{3}}\log\log n)})$-scale free HBE for the Gaussian kernel $e^{-\|x-y\|^{2}}$ that has complexity $T=e^{O(R^{\frac{4}{3}}\log\log n)}$.
\end{theorem}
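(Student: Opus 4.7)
The plan is to instantiate the ball-carving LSH of Andoni--Indyk and take a tensor power with carefully tuned parameters, so that the effective collision probability after concatenation is within a multiplicative factor $M$ of $e^{-\beta \|x-y\|^{2}}$. Set $t = \Theta(R^{4/3})$ (large enough for Lemma~\ref{lem:AI-bounds} to apply), bucket width $w = \Theta(R^{4/3}/\sqrt{\beta})$, and concatenate $D = \lceil 8\beta w^{2}/(t-1)\rceil = \Theta(R^{4/3})$ independent copies from $\mathcal{H}_{t}(w)$, using the product measure $\nu_{t}^{\otimes D}$. This choice is engineered so that the dominant exponent $D(t-1)c^{2}/8 = \beta w^{2} c^{2} = \beta r^{2}$ exactly matches $-\log k(x,y)^{\beta}$ for a pair at distance $r = cw$.

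For $r \leq R$ we have $c = r/w \leq \Theta(R^{-1/3})$, hence $c^{4} \leq \Theta(R^{-4/3})$. In the regime $c^{2}\ge 16/(t+7)$ where Lemma~\ref{lem:AI-bounds} applies, the two-sided pointwise bound raised to the $D$-th power gives
\begin{equation*}
p_{t}^{D}(c) = e^{-\beta r^{2}} \cdot \bigl(\Theta(1/(\sqrt{t}c))\bigr)^{D} \cdot e^{\pm\Theta(D(t-1)c^{4})}.
\end{equation*}
By the choice of $w$, the correction satisfies $D(t-1)c^{4} = O(Dtc^{4}) = O(R^{8/3}\cdot R^{-4/3}) = O(R^{4/3})$. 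The prefactor $(\sqrt{t}c)^{-D}$ varies across the domain but is pinched in $[R^{-O(D)}, O(1)^{D}] = e^{\pm O(R^{4/3}\log R)}$, since $\sqrt{t}c \in [\Theta(1), R^{1/3}]$ in the useful range. Combining yields the scale-free property with $M = e^{O(R^{4/3}\log R)}$; under the natural working regime $R = O(\mathrm{polylog}(n))$ (Gaussian weights are otherwise super-polynomially small and below $\tau$), we have $\log R = O(\log\log n)$ and so $M = e^{O(R^{4/3}\log\log n)}$.

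For the boundary regime $c^{2} < 16/(t+7)$, equivalently $r = O(R^{4/3}/\sqrt{t}) = O(R^{2/3})$, the target $e^{-\beta r^{2}}$ lies in $[e^{-O(R^{4/3})},1]$, while by monotonicity $p_{t}^{D}(c) \geq p_{t}(4/\sqrt{t+7})^{D}$, which a direct application of \eqref{eq:AI-lower} shows is at least $e^{-O(R^{4/3})}$ up to polynomially bounded factors; the upper bound $p_{t}^{D}(c) \leq 1$ is trivial. Thus scale-freeness extends to this regime within the same $M$. The complexity is $T = D \cdot U_{t} = O(R^{4/3}) \cdot d \cdot 2^{O(t\log t)} \log n = e^{O(R^{4/3}\log R)}\log n = e^{O(R^{4/3}\log\log n)}$ under the same $R$-regime.

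The main technical obstacle is the bookkeeping of the prefactor $(\sqrt{t}c)^{-1}$ and the second-order correction $e^{(t-1)c^{4}/\Theta(1)}$ after raising to the $D$-th power. The prefactor is problematic for very small $c$ (handled via monotonicity instead of the pointwise bound), and the correction must be kept from exploding by forcing $c$ away from $1$. Our scaling $w = \Theta(R^{4/3})$ is precisely what simultaneously keeps $Dtc^{4}$ bounded by $O(R^{4/3})$ and balances the constraint $Dt = \Theta(\beta w^{2}) = \Theta(R^{8/3})$ with the evaluation cost $2^{O(t\log t)}$, so that no single factor exceeds the target $e^{O(R^{4/3}\log\log n)}$.
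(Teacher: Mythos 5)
Your proposal is correct and follows essentially the same route as the paper's proof: the same choice $t=\Theta(R^{4/3})$, $w=\Theta(R^{4/3}/\sqrt{\beta})$, $D=\Theta(R^{4/3})$ tuned so that $D(t-1)c^{2}/8=\beta r^{2}$, the same two-sided use of Lemma~\ref{lem:AI-bounds} in the regime $c^{2}\geq 16/(t+7)$ with the prefactor and $c^{4}$-correction absorbed into $M=e^{O(D\log t)}$, and the same monotonicity argument for smaller distances. Your explicit separation of the $(\sqrt{t}c)^{-D}$ prefactor from the $e^{\Theta(D t c^{4})}$ correction is just a more itemized version of the paper's single $e^{\Theta(D\log t)}$ bound, and the $R=\mathrm{polylog}(n)$ caveat you flag is the same implicit assumption the paper makes when writing $\log t = O(\log\log n)$.
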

\begin{proof}
Set $t:=\max\{\lceil R^{\frac{4}{3}}\rceil,12\}$, $D= \lceil\frac{8\sqrt{t}R^{2}}{t-1}\rceil,  w=\sqrt{\frac{t-1}{8\beta}D}\geq \sqrt[4]{t}R$ and consider the HBE resulting from using the family $\mathcal{H}_{t}^{D}(w)$. The HBE can be evaluated in time $D2^{O(t\log\log(n))}$ and takes space $D2^{O(t\log\log(n))}\cdot n$. Next, we will show that for this selection of parameters we get a scale-free estimator for the Gaussian kernel. Consider two points at distance $r=c\cdot w$. First, we see that $c=\frac{r}{w}\leq \frac{R}{w}\leq \frac{1}{\sqrt[4]{t}}\leq 1$. For distances $r \geq \frac{4}{\sqrt[4]{t}}R \Rightarrow c^{2} \geq \frac{16}{t}$, our selection of $D$  results in the dominant term of the bounds given by Lemma  \ref{lem:AI-bounds} being $e^{-\frac{t-1}{8w^{2}}D\cdot r^{2}}=e^{-\beta r^{2}}$. Moreover,
\begin{equation}
e^{-\Theta(D\log(t))} e^{-\beta r^{2}} \leq p_{t}^{D}(c) \leq e^{\Theta(D\log(t))}  e^{-\beta r^{2}}
\end{equation}
Thus, we see that for the range $4R/\sqrt[4]{t}\leq r\leq R$ we have a scale-free estimator with $M=e^{O(D\log(t))}$. To get a bound for $ 0\leq r\leq R/\sqrt[4]{t}$ we use monotonicity of $p_{t}(c)$ to obtain:
\begin{align}
 M^{-1} \cdot e^{-16\beta \frac{R^{2}}{\sqrt{t}}} \leq  p_{t}^{D}(c) \leq  e^{\beta r^{2}} \cdot e^{-\beta r^{2}}
\end{align}
Since $D= \Theta(R^{2}/\sqrt{t})$ we see that we have constructed an $(\beta, e^{O(\log\log (n) R^{2}/\sqrt{t})})$-scale free estimator. Our selection of $t=R^{4/3}$ balances the complexity of evaluating the hashing function with the deviation from the ideal collision probabilities.
\end{proof}
The above theorem shows that as long as $R=O(\log^{\gamma}(n))$ with $0<\gamma<\frac{3}{4}$ the estimator is $(\beta, n^{o(1)})$-scale free. The regime of most interest is when $\gamma=1/2$, where polynomially small values of $\mu = n^{-\Omega(1)}$ are permissible. In this case, our estimator is $(\beta, e^{O(\log^{\frac{2}{3}}(n)\log\log n)})$-scale free.

\subsection{Pointwise bounds for Andoni-Indyk LSH}
Let $I(r , w)$ be ratio between the cap volume at relative distance $\alpha:=r/w $ from the origin of a $d$-dimensional sphere relative of radius $w$ and the volume of the sphere.
\begin{lemma}[Corollary 3.2~\cite{boroczky2003covering}]\label{lem:sphere}
Let $\alpha:=\frac{r}{w} \in [0,1]$, 
\begin{enumerate}
\item[(i)] for $\alpha \geq \frac{1}{\sqrt{d+1}}$ we have
\begin{equation}
\frac{\sqrt{d}}{3\sqrt{d+1}}\frac{1}{\sqrt{2\pi d}} \frac{1}{\alpha} (1-\alpha^{2})^{d/2} \leq I(r,w) \leq \frac{1}{\sqrt{2\pi d}} \frac{1}{\alpha } (1-\alpha^{2})^{d/2}
\end{equation}
\item[(ii)] for $\alpha \leq \frac{1}{\sqrt{d+1}}$,  $\frac{1}{2e\sqrt{2\pi}} \leq I (r, w) \leq \frac{1}{2}$
\end{enumerate}
\end{lemma}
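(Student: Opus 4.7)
The plan is to prove these classical cap-volume bounds by reducing $I(r,w)$ to a one-dimensional integral and applying a Laplace-type estimate. Let $X$ be uniform on $S^{d-1}$; since $X \cdot e_1$ has density proportional to $(1-t^2)^{(d-3)/2}$ on $[-1,1]$, setting $\alpha = r/w$ gives
\[ I(r,w) = c_d \int_\alpha^1 (1-t^2)^{(d-3)/2} dt, \qquad c_d = \frac{\Gamma(d/2)}{\sqrt{\pi}\,\Gamma((d-1)/2)}. \]
Stirling's formula yields $c_d = \sqrt{d/(2\pi)}\,(1+O(1/d))$, accounting for the $1/\sqrt{2\pi d}$ prefactor that appears in the stated bounds.

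For part (i) ($\alpha \geq 1/\sqrt{d+1}$), I would substitute $u = t-\alpha$ and use the tangent bound $\log(1-t^2) \leq \log(1-\alpha^2) - \frac{2\alpha}{1-\alpha^2}\,u$ coming from concavity of $\log(1-x^2)$ on $(0,1)$. Integrating over $u \in [0,1-\alpha]$ produces
\[ \int_\alpha^1 (1-t^2)^{(d-3)/2} dt \leq (1-\alpha^2)^{(d-3)/2}\int_0^{1-\alpha} e^{-\lambda u}\,du \leq \frac{(1-\alpha^2)^{(d-1)/2}}{(d-3)\alpha}, \]
with $\lambda = (d-3)\alpha/(1-\alpha^2)$; combined with the Stirling prefactor this yields the upper bound. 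For the matching lower bound I would restrict integration to $[\alpha, \alpha + 1/\lambda]$, on which the integrand exceeds a constant fraction of its maximum, and the interval length $1/\lambda$ captures the effective Gaussian scale. The factor $\sqrt{d}/(3\sqrt{d+1})$ emerges from tracking the finite-$d$ Stirling correction carefully.

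For part (ii) ($\alpha \leq 1/\sqrt{d+1}$), the upper bound $I(r,w) \leq 1/2$ is immediate from the symmetry of $X\cdot e_1$ about zero (the cap is contained in a hemisphere). For the lower bound, $I$ is monotone decreasing in $\alpha$, so $I(r,w) \geq I(w/\sqrt{d+1},w)$, and I would apply part (i) at the threshold $\alpha^* = 1/\sqrt{d+1}$. The factor $(1-1/(d+1))^{d/2} \to e^{-1/2}$ combined with the residual prefactor $\frac{1}{\sqrt{2\pi d}\,\alpha^*}\cdot\frac{\sqrt{d}}{3\sqrt{d+1}} = \Theta(1)$ delivers the claimed absolute constant $1/(2e\sqrt{2\pi})$.

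The main obstacle is bookkeeping the explicit constants: the precise Stirling correction to $c_d$, the mismatch between the natural density exponent $(d-3)/2$ and the $d/2$ appearing in the stated bound (which contributes a factor $(1-\alpha^2)^{3/2}\leq 1$ that must be absorbed on the correct side of each inequality), and the finite-$d$ adjustment $\sqrt{d}/(3\sqrt{d+1})$ in the lower bound of (i). The cleanest path, which the paper takes, is to cite Corollary 3.2 of Böröczky and Wintsche directly, since it records exactly these constants in the form needed downstream.
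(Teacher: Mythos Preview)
The paper does not prove this lemma at all: it is stated as Corollary~3.2 of B\"or\"oczky--Wintsche and used as a black box in the proof of Lemma~\ref{lem:AI-bounds}. You correctly identify this in your final paragraph, and your sketch of how one would establish the bounds from scratch---writing $I(r,w)$ as a one-dimensional beta-type integral, applying a tangent-line (Laplace) bound for the upper estimate in~(i), restricting to an interval of width $1/\lambda$ for the lower estimate, and handling~(ii) by symmetry and monotonicity---is the standard and correct route, with the only real work being the constant bookkeeping you flag.
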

We also require the following lemma on concentration properties of $\chi^{2}$-random variables.	
\begin{lemma}\label{lem:chi}
Let $Z_{1},\ldots, Z_{t}$ be independent $\chi^{2}$-random variables, then:
\begin{enumerate}
\item[(i)] $\P[\sum_{i=1}^{t}Z_{i} \leq \alpha \cdot t] \leq e^{-(\alpha+\ln(1/\alpha)-1) \frac{t}{2}}$ for all $\alpha <1$.
\item[(ii)] $\P[\sum_{i=1}^{t}Z_{i} > \alpha \cdot t] \leq e^{-(a-\ln(a)-1) \frac{t}{2}}$ for all $\alpha >1$.
\end{enumerate}
\end{lemma}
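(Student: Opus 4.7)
The statement is a classical Chernoff-type tail bound for a sum of independent chi-squared random variables, and the natural plan is to use the moment generating function together with Markov's inequality and then optimize the free parameter.

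First I would compute the MGF of a single $\chi^2_1$ variable. Standard integration gives $\E[e^{sZ_1}] = (1-2s)^{-1/2}$ for $s<1/2$, and by symmetry in the sign of $s$, also $\E[e^{-sZ_1}] = (1+2s)^{-1/2}$ for $s > -1/2$. Independence of the $Z_i$ then yields $\E[e^{s\sum_i Z_i}] = (1-2s)^{-t/2}$ and $\E[e^{-s\sum_i Z_i}] = (1+2s)^{-t/2}$ in the respective ranges of $s$.

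For part (i) with $\alpha<1$, the Chernoff bound applied to $-\sum Z_i$ gives, for any $s>0$,
\[
\Pr\Bigl[\sum_{i=1}^t Z_i \leq \alpha t\Bigr] = \Pr\bigl[e^{-s\sum Z_i} \geq e^{-s\alpha t}\bigr] \leq e^{s\alpha t}(1+2s)^{-t/2}.
\]
Taking the logarithm and differentiating in $s$ shows the unique minimizer is $s^* = \tfrac{1}{2}(1/\alpha - 1) > 0$. Plugging back, $1+2s^* = 1/\alpha$ and $s^*\alpha = (1-\alpha)/2$, so the bound becomes $e^{(1-\alpha)t/2}\alpha^{t/2} = \exp\bigl(-(\alpha + \ln(1/\alpha) - 1)\, t/2\bigr)$, which is exactly the desired inequality.

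For part (ii) with $\alpha>1$, the symmetric calculation gives, for any $s\in(0,1/2)$,
\[
\Pr\Bigl[\sum_{i=1}^t Z_i \geq \alpha t\Bigr] \leq e^{-s\alpha t}(1-2s)^{-t/2},
\]
and the optimum is $s^* = \tfrac{1}{2}(1 - 1/\alpha)\in(0,1/2)$, yielding $1-2s^* = 1/\alpha$ and the bound $e^{-(\alpha-1)t/2}\alpha^{t/2} = \exp\bigl(-(\alpha - \ln\alpha - 1)\,t/2\bigr)$. There is no real obstacle here beyond the routine optimization; the only care needed is to keep the admissible range of $s$ in mind (positive for the lower tail, in $(0,1/2)$ for the upper tail) so that the MGF formulas apply, and to verify that the optimizers lie inside these ranges, which they do precisely because $\alpha<1$ in (i) and $\alpha>1$ in (ii).
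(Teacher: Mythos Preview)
Your proof is correct and follows essentially the same approach as the paper: both apply the Chernoff method using the moment generating function $\E[e^{\lambda Z_1}]=(1-2\lambda)^{-1/2}$ and plug in the optimal exponent. The only cosmetic difference is that you derive the optimizer by differentiation, whereas the paper simply substitutes the specific values $\lambda=\tfrac{1}{2}(\alpha-1)$ and $\lambda=\tfrac{1}{2}(1-1/\alpha)$ directly.
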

\begin{proof}[Proof of Lemma \ref{lem:AI-bounds}] To bound the collision probability we follow the analysis of Andoni and Indyk~\cite{andoni2006near}.
Given two points $p,q\in \R^{d}$ that are within distance $c\cdot w$ find the probability that they hash to the same hash bucket.
\begin{eqnarray}
p_{t}(c) &=& \int_{0}^{\infty} \P\left[h(p)=h(q)\left|\| p^{'} - q^{'}\| = \sqrt{\frac{x}{t}}c w\right.\right] P_{\chi_{t}^{2}}(x)dx\\
&=& \int_{0}^{\frac{4t}{c^{2}}}  P_{\chi_{t}^{2}}(x) \frac{I\left(\frac{1}{2}\sqrt{\frac{x}{t}}cw, w\right)}{1-I\left(\frac{1}{2}\sqrt{\frac{x}{t}}cw, w\right)}dx \\
&=& \int_{0}^{\frac{4}{c^{2}(t+1)}t} P_{\chi_{t}^{2}}(x) \frac{I\left(\frac{1}{2}\sqrt{\frac{x}{t}}cw, w\right)}{1-I\left(\frac{1}{2}\sqrt{\frac{x}{t}}cw, w\right)}dx + \int_{\frac{4}{c^{2}(t+1)}t}^{\frac{4}{c^{2}}t}P_{\chi_{t}^{2}}(x) \frac{I\left(\frac{1}{2}\sqrt{\frac{x}{t}}cw, w\right)}{1-I\left(\frac{1}{2}\sqrt{\frac{x}{t}}cw, w\right)}dx
\end{eqnarray}
where $P_{\chi_{t}^{2}}(x) = \frac{x^{t/2-1}}{\Gamma(t/2)2^{t/2}}e^{-x/2}$ is the density function of a $\chi^{2}$-random variable with $t$-degrees of freedom.
We obtain upper and lower bounds for each term separately.  Let $A$ and $B$ denote the first and second term respectively.  Using Lemma \ref{lem:sphere} and Lemma \ref{lem:chi} for $\alpha =\frac{4}{c^{2}}\frac{1}{t+1}$, we get   the following bounds for the first term:
\begin{equation}
0 \leq A \leq \left\{\begin{matrix}
e^{-\frac{1}{4}t} & \textrm{for} \ c^{2}\geq \frac{16}{t+1}\\
1 & \textrm{otherwise}
\end{matrix}\right.
\end{equation}
\subsubsection{Upper bound}
Using the fact that $I(r,w)\leq \frac{1}{2}$ for all $0\leq r \leq w$ and Lemma \ref{lem:sphere} we get
\begin{eqnarray}
B &\leq& 2 \int_{\frac{4}{c^{2}(t+1)}t}^{\frac{4t}{c^{2}}}P_{\chi^{2}_{t}}(x)\cdot I\left(\frac{1}{2}\sqrt{\frac{x}{t}}cw, w\right)dx\\
&\leq &2\int_{\frac{4}{c^{2}(t+1)}t}^{\frac{4t}{c^{2}}}P_{\chi^{2}_{t}}(x)\cdot \frac{1}{\sqrt{2\pi t}} \frac{1}{\frac{1}{2}\sqrt{\frac{x}{t}}c} \left(1-\left(\frac{1}{2}\sqrt{\frac{x}{t}}c\right)^{2}\right)^{t/2}dx\\
&\leq &\frac{4}{\sqrt{2\pi}c}\int_{\frac{4}{c^{2}(t+1)}t}^{\frac{4t}{c^{2}}}P_{\chi^{2}_{t}}(x)\cdot  \frac{1}{\sqrt{x}} \exp\left(-\frac{1}{2}x\frac{c^{2}}{4} \right)dx\\
&=& \frac{4}{\sqrt{2\pi}c}\int_{\frac{4}{c^{2}(t+1)}t}^{\frac{4t}{c^{2}}}\frac{x^{t/2-1}}{\Gamma(t/2)2^{t/2}}e^{-\frac{x}{2}}\cdot  \frac{1}{\sqrt{x}} \exp\left(-\frac{1}{2}x\frac{c^{2}}{4} \right)dx\\
&\leq & \frac{2\Gamma((t-1)/2)}{\sqrt{\pi}c\Gamma(t/2)}\frac{1}{(1+\frac{c^{2}}{4})^{(t-1)/2}}\int_{0}^{\frac{4t}{c^{2}}(c^{2}/4+1)}P_{\chi^{2}_{t-1}}(x)\\
&\leq & \frac{2\Gamma((t-1)/2)}{\sqrt{\pi}c\Gamma(t/2)}\frac{1}{(1+\frac{c^{2}}{4})^{(t-1)/2}}\\
&\leq & \frac{8\cdot 2}{7\sqrt{\pi \frac{t-1}{2}}c}\frac{1}{(1+\frac{c^{2}}{4})^{(t-1)/2}}\label{eq:up_bound_1}
\end{eqnarray}
where we used the fact that $\frac{8}{7}\frac{1}{\sqrt{\alpha}}\geq \frac{\Gamma(\alpha)}{\Gamma\left(\alpha +\frac{1}{2}\right)} \geq \frac{1}{\sqrt{\alpha}}$ for $\alpha\geq 1$. Further, due to the inequalities $1-x\leq \frac{1}{1+x}\leq e^{-\frac{2x}{2+x}}$ for $x\geq 0$ we get:
\begin{align}
\frac{1}{(1+\frac{c^{2}}{4})^{\frac{t-1}{2}}} & \leq e^{-\frac{t-1}{2}\frac{2}{2+\frac{c^{2}}{4}}\frac{c^{2}}{4}}=e^{-\frac{t-1}{8}c^{2}\frac{1}{1+c^{2}/8}}\leq e^{-\frac{t-1}{8}(1-\frac{c^{2}}{8})c^{2}}\leq e^{+\frac{t-1}{64}c^{4}}\cdot e^{-\frac{t-1}{8}c^{2}}
\end{align}
Substituting the last bound back to \eqref{eq:up_bound_1} and using   $\frac{16\sqrt{2}}{7\sqrt{\pi}}\frac{1}{\sqrt{t-1}}\leq \frac{3}{\sqrt{t}}$ for $t\geq 2$, results in
\begin{align}
p_{t}(c)&\leq e^{-\frac{t}{4}}+\frac{3}{\sqrt{t}c}e^{\frac{t-1}{64}c^{4}} \cdot e^{-\frac{t-1}{8}c^{2}}\\
&\leq \left(1+\frac{\sqrt{t}c}{3}e^{-\frac{t-1}{64}c^{4}+\frac{t-1}{8}c^{2}-\frac{t}{4}}\right)\frac{3}{\sqrt{t}c}e^{\frac{t-1}{64}c^{4}} \cdot e^{-\frac{t-1}{8}c^{2}}
\end{align}
Noting that for $c\leq 1$ the exponent in the parenthesis is at most $e^{-9t/64}$ results in the desired bound. 
\subsubsection{Lower Bound}
Using the fact that $1-I(r,w)\leq 1$ for all $0\leq r\leq w$ and Lemma \ref{lem:sphere} we get:
\begin{eqnarray}
B &\geq&  \int_{\frac{4}{c^{2}(t+1)}t}^{\frac{4t}{c^{2}}}P_{\chi^{2}_{t}}(x)\cdot I\left(\frac{1}{2}\sqrt{\frac{x}{t}}c, t\right)dx\\
&\geq&\frac{\sqrt{t}}{3\sqrt{t+1}}
\int_{\frac{4}{c^{2}(t+1)}t}^{\frac{4t}{c^{2}}}P_{\chi^{2}_{t}}(x)\cdot \frac{1}{\sqrt{2\pi t}} \frac{1}{\frac{1}{2}\sqrt{\frac{x}{t}}c} \left(1-\left(\frac{1}{2}\sqrt{\frac{x}{t}}c\right)^{2}\right)^{t/2}dx\\
&= &\frac{\sqrt{t}}{3\sqrt{t+1}}
\frac{2}{\sqrt{2\pi} c}\int_{\frac{4}{c^{2}(t+1)}t}^{\frac{4t}{c^{2}}}P_{\chi^{2}_{t}}(x)\cdot  \frac{1}{\sqrt{x}} \left(1-\frac{x}{4t}c^{2}\right)^{t/2}dx\\
&=& \frac{\sqrt{t}}{3\sqrt{t+1}} \frac{2}{\sqrt{2\pi} c}
\int_{\frac{4}{c^{2}(t+1)}t}^{\frac{4t}{c^{2}}}P_{\chi^{2}_{t}}(x)\cdot \frac{1}{\sqrt{x}} \left(\frac{1}{1+\frac{\frac{x}{4t}c^{2}}{1-\frac{x}{4t}c^{2}}}\right)^{t/2}dx\\
&\geq & \frac{\sqrt{t}}{3\sqrt{t+1}}\frac{2}{\sqrt{2\pi} c}
\int_{\frac{4}{c^{2}(t+1)}t}^{\frac{4t}{c^{2}}}P_{\chi^{2}_{t}}(x)\cdot \frac{1}{\sqrt{x}} \exp\left(-\frac{t}{2}\frac{\frac{x}{4t}c^{2}}{1-\frac{x}{4t}c^{2}}\right)dx\\
&\geq &\frac{\sqrt{t}}{3\sqrt{t+1}}\frac{2}{\sqrt{2\pi} c}
\int_{\frac{4}{c^{2}(t+1)}t}^{\frac{R-1}{R}\frac{4t}{c^{2}}}P_{\chi^{2}_{t}}(x)\cdot \frac{1}{\sqrt{x}} \exp\left(-\frac{1}{2}x\frac{c^{2}}{4}R\right)dx\\
&= &\frac{\sqrt{t}}{3\sqrt{t+1}}\frac{\sqrt{2}\Gamma((t-1)/2)}{\sqrt{\pi}c\Gamma(t/2)}\frac{1}{\left(1+\frac{c^{2}}{4}R\right)^{(t-1)/2}} \int_{\frac{1}{t+1}(R+\frac{4}{c^{2}})t}^{(R-1)(1+\frac{4}{Rc^{2}})t}P_{\chi^{2}_{t-1}}(x)dx
\end{eqnarray}
where in the penultimate inequality we used $1-x\leq 1+ Rx$ for $x\leq (R-1)/R$ and $R>1$. To bound the integral we write:
\begin{eqnarray}
\int_{\frac{1}{t+1}(R+\frac{4}{c^{2}})t}^{(R-1)(1+\frac{4}{Rc^{2}})t}P_{\chi^{2}_{t-1}}(x)dx &=& 1 - \int_{0}^{\frac{1}{t+1}(R+\frac{4}{c^{2}})t}P_{\chi^{2}_{t-1}}(x)dx - \int_{\frac{R-1}{R}\frac{4t}{c^{2}}}^{\infty}P_{\chi^{2}_{t-1}}(x)dx
\end{eqnarray}
We set $R=1+\delta(c)=1+\frac{2}{\frac{4}{c^{2}}-2}$ and require that:
\begin{align}
(R-1)\frac{4}{Rc^{2}}\geq 2 &\Rightarrow \delta(c) \geq  \frac{2}{\frac{4}{c^{2}}-2}\\
\frac{1}{t+1}(R+\frac{4}{c^{2}})\leq \frac{1}{2} &  \Rightarrow  \frac{16}{t+3+\sqrt{t^{2}-10t-7}} \leq c^{2} \leq \frac{16}{t+3-\sqrt{t^{2}-10t-7}}
\end{align}
For $t\geq 12$ the inequalities are satisfied for all $ \frac{16}{t+7}\leq c \leq 1$. For such $c$, we apply Lemma \ref{lem:chi} for $\alpha_{>}=2$ and $\alpha_{<}=1/2$ and obtain the bound:
\begin{align}
\int_{\frac{1}{t+1}(R+\frac{4}{c^{2}})t}^{(R-1)(1+\frac{4}{Rc^{2}})t}P_{\chi^{2}_{t-1}}(x)dx \geq 1-2e^{-\frac{9t}{100}}
\end{align}
Next, we lower bound the main term $(1+c^{2}R/4)^{-(t-1)/2}$ using $(1+x)^{-1}\geq e^{-x}$:
\begin{align}
\frac{1}{(1+c^{2}R/4)^{-(t-1)/2}}\geq e^{-\frac{t-1}{2}\frac{c^{2}}{4}(1+\delta(c))} = e^{\frac{t-1}{8}(\frac{2}{c^{2}}-1)^{-1}c^{2}}\cdot e^{-\frac{t-1}{8}c^{2}}
\end{align}
To complete the proof, we note that $\frac{\sqrt{2t}}{3\sqrt{(t+1)\pi}}\frac{\Gamma((t-1)/2)}{\Gamma(t/2)}\geq \frac{1}{4\sqrt{t}}$.
\end{proof}
\subsubsection{Concentration of $\chi^{2}$-random variables}
\begin{proof}[Proof of Lemma \ref{lem:chi}]
The property we   use is that for all $\lambda<1/2$, $\E[e^{\lambda Z_{i}}]=(1-2\lambda)^{-1/2}$.
Let $\lambda_{-}= \frac{t}{2}(\alpha-1)>0$ for $\alpha>1$:
\begin{align}
\P[\sum_{i=1}^{t}Z_{i}\leq \alpha t ] & = \P[\frac{1}{t}\sum_{i=1}^{t}(\alpha-Z_{i})\geq 0]\\
&=\P\left[\exp\left(\left\{\sum_{i=1}^{t}\frac{\lambda_{-}}{t}(\alpha-Z_{i})\right)\right\} \geq 1 \right]\\
&\leq \E[\exp(\frac{\lambda_{-}}{t}(\alpha-Z_{1}))]^{t}\\
&=e^{\lambda_{-}\alpha-\frac{t}{2}\log(1+2\frac{\lambda_{-}}{t})}\\
& = e^{-\left(\alpha+\log(\frac{1}{\alpha})-1 \right)\frac{t}{2}}
\end{align}
Similarly, for $\alpha>1$ and $\lambda_{+}=\frac{t}{2}(1-\frac{1}{\alpha})<t/2$:
\begin{align}
\P[\sum_{i=1}^{t}Z_{i} \geq \alpha t] & = \P[\frac{1}{t}\sum_{i=1}^{t}(Z_{i}-\alpha)\geq 0]\\
&=\P\left[\exp\left(\left\{\sum_{i=1}^{t}\frac{\lambda_{+}}{t}(Z_{i}-\alpha)\right)\right\} \geq 1 \right]\\
&\leq \E[\exp(\frac{\lambda_{+}}{t}(Z_{1}-\alpha))]^{d'}\\
&=e^{-\alpha^{2}\lambda_{+}-\frac{t}{2}\log(1-2\frac{\lambda_{+}}{t})}\\
&=e^{-\left(\alpha-\log(\alpha)-1\right)\frac{t}{2}}
\end{align}
\end{proof}
\section{Fast Kernel-Matrix Vector Multiplication}\label{sec:matrix}

Given a kernel function $k:\R^{d}\times \R^{d}$ and a set of points $P=\{x_{1},\ldots, x_{n}\}$, let $K=\{K(x_{i},x_{j})\}_{i,j\leq n}$ denote the matrix with the pairwise evaluations of a kernel. Given a vector $z\in \R^{n}$, the problem of approximate Kernel Matrix-Vector Multiplication (aKMVM) is to obtain an approximation to $y = Kz$. Due to linearity, we can always rescale the vectors without changing the problem, thus we may assume without loss of generality that $\|z\|_{1}=1$. This problem is important as many machine learning applications involve the mutltiplication of a vector with a dense Kernel matrix. Very often this operation is the computational bottleneck.

Observe that if $z=\frac{1}{n}\1$ then the problem is equivalent to estimating the kernel density of all the points in the dataset $P$, a problem that we can solve relatively fast. We show how one can adapt the techniques from this paper to provide a solution to the aKMVM problem. 
\begin{theorem}\label{thm:matrix}
Given a $V$-bounded  HBE for a kernel $k$ with complexity $T$, there exists an algorithm that given  a dataset $P$  and a \emph{non-negative} weight vector $z$ such that $\|z\|_{1} = 1$  can compute a vector $\hat{y}$ in time $\tilde{O}(\frac{1}{\epsilon^{2}}V(\epsilon \tau)\cdot nT)$ using  space $\tilde{O}(\frac{1}{\epsilon^{2}}V(\epsilon \tau)\cdot nT)$  such that with probability at least $1-n^{-1}$ for all $i\in [n]$ it holds $|\hat{y}_{i}-y_{i}|\leq 3\epsilon\tau +\epsilon |y_{i}|$ and 
\begin{equation}
  \|\hat{y}-y\|_{p}\leq \epsilon \left(3\tau n^{1/p}+\|y\|_{p}\right)
\end{equation}
\end{theorem}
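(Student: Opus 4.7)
The plan is to reduce the computation of $y = Kz$ to $n$ independent weighted KDE queries, solve each using the AMR data structure of Theorem~\ref{thm:main}, and take a union bound.

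First, observe that for each $i$, $y_i = \sum_j z_j K(x_i, x_j)$. Since $z \geq 0$ and $\|z\|_1 = 1$, this is a weighted version of the KDE problem in which the empirical measure $\frac{1}{n}\sum_j \delta_{x_j}$ is replaced by $\sum_j z_j \delta_{x_j}$. I would modify the given HBE accordingly: sample $h \sim \nu$ and let $I$ be a uniformly chosen element of the bucket $H(x_i)$; the weighted estimator is
\[
\tilde Z_h(x_i) \;:=\; \frac{z_I\, K(x_i, x_I)}{p_I(x_i)}\, |H(x_i)|.
\]
Setting the generic HBE weights of Lemma~\ref{lem:var-basic} to $\tilde w_j := n z_j K(x_i, x_j)$ (so that $\frac{1}{n}\sum_j \tilde w_j = y_i$) gives $\E\tilde Z_h(x_i) = y_i$ and the standard second-moment bound $\E \tilde Z_h(x_i)^2 \leq \frac{1}{n^2}\sum_j (\tilde w_j^2/p_j(x_i))(j + \sum_{k>j} p_k/p_j)$.

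The main technical step, and the principal obstacle, is to verify that $\tilde Z_h(x_i)$ is $V$-bounded with mean $y_i$, i.e.\ $\E\tilde Z_h(x_i)^2 \leq y_i^2\, V(y_i)$. The proofs of Theorem~\ref{thm:two-points} and Theorem~\ref{thm:scale-free} bound the HBE second moment purely through (i) the worst-case two-weight/two-probability structure and (ii) H\"older-type inequalities that treat the weights abstractly, so the arguments carry over to the modified weights $\tilde w_j$; the bound $z_j \in [0,1]$ (forced by $z \ge 0$ and $\|z\|_1 = 1$) absorbs the slack arising from the fact that the collision probabilities $p_j$ scale with $K(x_i, x_j)$ rather than with $\tilde w_j$. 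This is where the bookkeeping has to be checked carefully.

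Given the $V$-bound, I would run the data structure of Theorem~\ref{thm:main} with accuracy parameter $\epsilon$, threshold $\tau' = \epsilon\tau$, and failure probability $\chi = 1/n^2$ at each query $x_i$. Theorem~\ref{thm:AMR} then gives: if $y_i \geq \epsilon\tau$ then $|\hat y_i - y_i| \leq \epsilon y_i$ with probability $\geq 1 - 1/n^2$; if $y_i < \epsilon\tau$ then $\hat y_i = 0$ (so $|\hat y_i - y_i| < \epsilon\tau$) with probability $\geq 1 - 1/n^2$. In either case $|\hat y_i - y_i| \leq 3\epsilon\tau + \epsilon|y_i|$, and a union bound over $i \in [n]$ yields overall failure probability at most $1/n$. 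The hash tables are shared across all queries, so the preprocessing time and space each scale as $\tilde O(\epsilon^{-2} V(\epsilon\tau)\, nT)$ (the space bound of Theorem~\ref{thm:main} at threshold $\epsilon\tau$); the per-query cost $\tilde O(\epsilon^{-2} V((y_i)_{\epsilon\tau})\, T)$ is at most $\tilde O(\epsilon^{-2} V(\epsilon\tau)\, T)$ by monotonicity of $V$, summing to the stated total time. Finally, the $\ell_p$ bound follows from the per-coordinate estimate and the triangle inequality,
\[
\|\hat y - y\|_p \;\leq\; \|3\epsilon\tau\, \mathbf 1\|_p + \|\epsilon y\|_p \;=\; 3\epsilon\tau\, n^{1/p} + \epsilon\,\|y\|_p.
\]
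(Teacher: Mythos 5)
There is a genuine gap, and it sits exactly where you flagged it: the claim that $V$-boundedness carries over to the reweighted estimator $\tilde Z_h$ is false in general. The second-moment bound of Lemma \ref{lem:var-basic} has the form $\frac{1}{n^2}\sum_j \frac{\tilde w_j^2}{p_j}\,\E[\,|H(x)|\mid j\in H(x)\,]$, and both factors are calibrated to the kernel values $K(x_i,x_j)$, not to the products $z_jK(x_i,x_j)$: the collision probabilities of a scale-free HBE satisfy $p_j\approx K(x_i,x_j)^{\beta}$, and the conditional bucket size is driven by \emph{all} points in the shared hash table, including those carrying almost no $z$-mass. Concretely, take $z_1=1-\epsilon'$ with $K(x_i,x_1)=\mu_0$ small, and put the remaining $n-1$ points at the query with $z_j=\epsilon'/(n-1)$, $\epsilon'\ll\mu_0$. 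Then $y_i\approx\mu_0$, but conditioned on point $1$ colliding with the query the bucket contains all $n$ points, so for $\beta=\tfrac12$ one gets $\E[\tilde Z_h^2]\gtrsim \frac{1}{n^2}\cdot\frac{(n\mu_0)^2}{\mu_0^{1/2}}\cdot n = y_i^2\cdot n\,\mu_0^{-1/2}$, a factor $\Theta(n)$ above $y_i^2 V(y_i)$. The constraint $z_j\in[0,1]$ does not absorb this: the blowup scales with the heterogeneity of the $z_j$'s, and an estimator with relative variance $n/\sqrt{\mu_0}$ is useless.

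The paper's proof avoids this by \emph{not} using a single shared hash table. It partitions $P$ into level sets $S_\ell=\{j: z_j\in[2^{-\ell},2^{-\ell+1})\}$ for $\ell=1,\dots,L=\log_2(n/\epsilon\tau)$ (discarding points with $z_j<\epsilon\tau/n$, which contributes only an additive $\epsilon\tau$), builds a separate data structure on each $S_\ell$, and writes $y_i$ as a $Z_\ell$-weighted sum of per-level weighted KDEs. Within a level the $z_j$'s agree up to a factor of $2$, so the reweighted second moment is at most $4$ times that of the unweighted KDE on $S_\ell$ and the $V$-bound transfers with only a constant loss; crucially, the bucket-size term now only sees points of $S_\ell$. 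The price is the $O(L)$ multiplicative overhead hidden in the $\tilde O$, plus the additive $3\epsilon\tau$ from the discarded points and the levels with total mass below $\epsilon\tau/L$. Your downstream steps (union bound over queries, per-coordinate error, triangle inequality for the $\ell_p$ bound) match the paper and are fine once the variance issue is repaired this way.
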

\begin{remark} Let $z_{+}=(z)_{0}$ and $z_{-}=(-z)_{0}$ be the positive and negative ``parts" of vector $z$.
We can apply this algorithm separately to estimate $y_{+}=Kz_{+}$ and $y_{-}=Kz_{-}$ to obtain a vector $\hat{y}=\hat{y}_{+}-\hat{y}_{-}$ that satisfies: $\|\hat{y} - K z\|_{p}\leq \epsilon (6\tau n^{1/p} + \|y_{-}\|_{p}+\|y_{+}\|_{p})$. 
\end{remark}
\begin{proof} 
We consider intervals $I_{\ell}=[2^{-\ell},2^{-\ell+1}]$ for $\ell=1,\ldots, L=\log_{2}(n/\tau^{'})$, where   $\tau^{'} = \epsilon\tau$, and we set $I_{0} = [0, \frac{\tau^{'}}{n})$. We  partition points according to which interval their corresponding ``weights'' $z_{i}$ belong to. Let $S_{0},S_{1},\ldots, S_{L}$ be the corresponding sets. For any point $i$, we have 
\begin{align}
y_{i}^{*}&=  \sum_{\ell=0}^{L} \sum_{j\in S_{\ell}} K(x_{i},x_{j}) z_{j}\nonumber\\
& = \sum_{\ell=1}^{L}Z_{\ell} \left(\sum_{j\in S_{\ell}}k(x_{i},x_{j})\frac{z_{j}}{\sum_{s\in S_{\ell}}z_{s}}\right) + \delta_{i}
\end{align}
where $\delta_{i} = \sum_{j \in S_{0}} K(x_{i},x_{j})z_{j} \leq \tau^{'}$ and  $Z_{\ell}=\sum_{s\in S_{\ell}}z_{s}$. By construction it holds that  $|S_{\ell}|2^{-\ell}\leq Z_{\ell}\leq |S_{\ell}|2^{-\ell+1}$  and  within each set $S_{\ell}$ the weights differ by at most a factor of two. The numbers $Z_{\ell}$ can be computed with a linear pass on the data. Thus, our problem can be expressed as a weighted version of $L$ KDE problems, where the $\ell$-th problem asks to compute an approximation to :
\begin{equation}
\mathrm{KDE}_{S_{\ell}}^{z}(x)=\frac{1}{Z_{\ell}}\sum_{y\in S_{\ell}} k(x,y) z_{y}
\end{equation}

Given a hashing scheme $\mathcal{H}$ with collision probabilities $p_{i}$ we evaluate the hash function during the preprocessing step only on $S_{\ell}$ and then for a query point $x\in P$ we define the following estimator. 
\begin{equation}
Z_{h} = \frac{z_{i}}{Z_{\ell}}\frac{k(x,y_{I})}{p_{I}}|H(x)|
\end{equation}
where as before I is a random index from  $H(x)\subseteq S_{\ell}$. The first two moments of the estimator are given by:
\begin{align}
\E[Z] &= \frac{1}{Z_{\ell}}\sum_{i\in S_{\ell}} k(x,y_{i})z_{i} \\
 \E[Z^{2}] &= \frac{1}{Z_{\ell}^{2}}\sum_{i\in S_{\ell}} \frac{k(x,y_{i})^{2}}{p_{i}}z_{i}^{2}\E[|H(x)||i\in H(x)]
\end{align}
Using the properties of the set $S_{\ell}$ we obtain the following upper bound on the variance of this estimator
\begin{equation*}
\E[Z^{2}] \leq \frac{2^{2\ell}}{|S_{\ell}|^{2}}2^{-2\ell+2} \sum_{i\in S_{\ell}} \frac{k(x,y_{i})^{2}}{p_{i}} \E[|H(x)||i\in H(x)]
\end{equation*}
Now observe that the variance is at most $4$ times larger than if we would be trying to estimate $\mathrm{KDE}_{S_{\ell}}(x)$. Hence, given any $V$-bounded HBE for the kernel $k$ and set $S_{\ell}$ we can use the above modification to get a $4V$-bounded HBE for $\mathrm{KDE}_{S_{\ell}}^{z}(x)$. Invoking Theorem \ref{thm:main} with parameters $(\epsilon,\tau^{'}, \chi/(nL))$ we can get a data structure that can estimate   $\mathrm{KDE}^{z}_{S_{\ell}}(x_{i})$ with probability at least $1-\chi$ either within multiplicative accuracy $\epsilon$ (when AMR* has non-zero output) or with absolute accuracy $\tau^{'}$ (when AMR* outputs $0$) for all $i\in[n]$ (by union bound). Let $\mathcal{L}$ denote the set of indices of $[L]$ such that $Z_{\ell} \geq \frac{\tau^{'}}{|L|}$. For all $\ell \in \mathcal{L}$ we instantiate the data structure given by Theorem \ref{thm:main} for the set $S_{\ell}$ and use it to query all points in $P$. For each query point thus we get estimates $z_{\ell}(x_{i})$ for $\ell \in \mathcal{L}$ and set $z_{\ell}(x_{i})=0$ for $\ell\notin \mathcal{L}$. The overhead per-query of the whole process is at most a multiplicative factor $L=O(\log(n/\tau^{'}))$ compared to the case that we were creating a single data structure for the same problem. For any query $x_{i}$, we aggregate the estimates $z_{\ell}(x_{i})$ in the following manner $Z(x_{i}) = \sum_{\ell\in \mathcal{L}} Z_{\ell} \cdot z_{\ell}(x_{i})$. For all $z_{\ell}(x_{i})$ with probability at least $1-\chi$ it holds
\[
|z_{\ell}(x_{i}) - \mathrm{KDE}_{S_\ell}^{z}(x_{i})|\leq \max\{\tau^{'},\epsilon\cdot \mathrm{KDE}_{S_\ell}^{z}(x_{i})\}
\]

This implies the following bounds:
\begin{align*}
|Z(x_{i})-y_{i}^{*}| &\leq \sum_{\ell\in [L]} Z_{\ell}\cdot  |z_{\ell}(x_{i})-\mathrm{KDE}_{S_{\ell}}^{z}(x_{i})| +\delta_{i}\\
&\leq \sum_{\ell \notin \mathcal{L}} Z_{\ell} + \epsilon \sum_{\ell \in \mathcal{L}}Z_{\ell}(\tau+\mathrm{KDE}_{S_{\ell}}(x_{i}))+\delta_{i}\\
&\leq 3\epsilon \tau + \epsilon |y_{i}^{*}| 
\end{align*}
Summing over all indices and using triangle inequality gives $\|\hat{y}-y^{*}\|_{p} \leq \epsilon (3\tau n^{1/p} + \|y^{*}\|_{p})$. 
\end{proof}

\section{Lower Bound for Kernel Density Estimation}\label{sec:lower}

The nature of the KDE problem is quite analytical as it involves a simple summation of $n$ terms that depend smoothly on the distances of the query from the dataset. The Gaussian kernel, though a smooth function of the distances, is rapidly decreasing and  can  be thought of being an approximation to the indicator function $\I\left\{\|x-y\|^{2}\leq \sigma^{2}\right\}$. In particular, for two distances $r_{1}=\sigma, r_{2}>\sqrt{C} \sigma$ the kernel value varies from $e^{-1}=\Omega(1)$ to $e^{-C}=o(1)$ for any $C=\omega(1)$. This is the basic observation motivating reducing the \emph{Approximate Nearest Neighbor Search} problem to the KDE problem. 

\begin{definition}[$(r,c)$-ANNS] Given a metric space $\mathcal{M}$ and parameters $c>1$ and $r>0$, and a dataset of $n$ points $x_{1},\ldots, x_{n}$, our goal is to distinguish between the case that $d(x_{i},y)\leq r$ for some $i\in[n]$ and the case where for all $i\in[n], d(x_{i},y)\geq cr$. The query algorithm must output $1$ in the former case, $0$ in the latter case and may report anything if neither of the two cases hold.
\end{definition}

The complexity of ANNS has been a topic of ongoing research over the past two decades~\cite{panigrahy2010lower,andoni2017optimal}. The most popular model of computation to prove lower bounds for is the cell-probe model. In this model, we are allowed an arbitrary amount of preprocessing but only allowed to keep $m$ cells each with $w$ bits of information. The query algorithm then queries (adaptively) $t$ cells  and is required to produce the output. We will refer to this model as the \emph{$(m,w,t)$-cell probe model}. For ANNS, lower bounds in this model impose constraints on $m,w,t$ depending on $n,c$. The most general result in this area is given by the following theorem proved by Panigrahy, Talwar and Wieder~\cite{panigrahy2010lower}, whose estimates where improved by Andoni et al.~\cite{andoni2017optimal}.

\begin{theorem}[\cite{panigrahy2010lower,andoni2017optimal}]\label{thm:cell-probe}
 There exists a distribution over $(r,c)$-ANNS instances and $\gamma\in[0,1]$ such that any randomized algorithm in the $(m,w,t)$-cell probe model which is correct with probability at least a half on these instances, satisfies:
\begin{align*}
\frac{m^{t}w}{n} &\geq  \sup_{(q-1)(p-1)=(1-\frac{1}{c})^{2}, p,q\geq1}\left\{(\frac{\gamma}{t})^{q}m^{t(1+\frac{q}{p}-q)}\right\}
\end{align*}
\end{theorem}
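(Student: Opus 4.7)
The plan is to follow the cell-sampling framework of Panigrahy-Talwar-Wieder, sharpened by the $L_p$-based LSH lower bound techniques underlying the Andoni et al.\ refinement. The argument has three parts: (i) a hard input distribution on $(r,c)$-ANNS instances, (ii) a \emph{cell-sampling} reduction that turns a $t$-probe algorithm into a zero-probe algorithm over a random subset of memory, and (iii) a sharp bound on how many queries a single cell can resolve, from which the $(p,q)$-Hölder constraint in the theorem will arise.

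For (i), I would use a planted instance on the Hamming cube $\{0,1\}^d$: the $n$ data points are drawn uniformly, and the query $y$ is either truly at Hamming distance $r$ from a uniformly chosen planted neighbor (YES) or an independent uniform point (NO), each with probability $1/2$. The parameter $\gamma$ in the statement is read off from this distribution; concretely it tracks the mutual information between a YES query and the identity of its planted near neighbor relative to a uniform point at distance $cr$, and it packages all the metric-dependent constants.

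For (ii), I would subsample the $m$ cells i.i.d.\ with retention rate $\rho \in (0,1)$, leaving cell contents untouched. For any fixed query, the probability that its $t$ probes are all contained in the sampled set is exactly $\rho^t$. By averaging, a $\rho^t$-fraction of queries (in expectation) have their outputs determined by the $\rho m$ sampled cells together with the query alone. Conditioning on a sufficiently representative sample, we are reduced to a deterministic map using only $\rho m \cdot w$ bits of memory and no adaptive probing that must solve $(r,c)$-ANNS correctly on a $\rho^t$-fraction of the query space.

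For (iii), I would bound the number of queries that such a zero-probe, $w$-bit-memory algorithm can answer. Viewing the decision region of each cell as a Boolean function on the query space, the LSH-style lower bound of O'Donnell-Wu-Zhou states that any such function with ``selectivity'' $s$ on YES queries has selectivity at least $s^{1/\rho^{*}(c)}$ on NO queries, and the tight exponent satisfies the Hölder-dual relation $(q-1)(p-1) = (1-1/c)^2$. Summing this over the $\rho m$ sampled cells, accounting for $w$ bits per cell, and matching against the $\rho^t n$ queries that must remain answerable yields an inequality of the form $\rho m w \gtrsim n \rho^t (\gamma/t)^q$; choosing $\rho$ optimally as a function of $m$ and $t$ and rearranging gives $m^t w/n \ge (\gamma/t)^q m^{t(1+q/p-q)}$, and taking the supremum over feasible $(p,q)$ completes the proof. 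The main obstacle is precisely this last step: establishing the tight single-cell LSH lower bound with exponent constraint $(q-1)(p-1)=(1-1/c)^2$ requires hypercontractivity on the $p$-biased cube, and this is the locus of the Andoni et al.\ improvement over Panigrahy-Talwar-Wieder; by contrast, steps (i) and (ii) reduce to bookkeeping once the hard distribution is fixed and $\rho$ is calibrated.
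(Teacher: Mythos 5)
You should first note that the paper does not prove this statement at all: Theorem~\ref{thm:cell-probe} is imported as a black box from Panigrahy--Talwar--Wieder and Andoni et al., and the only thing the paper does with it is instantiate the $(n,d,c)$-random instance (which matches your step (i)) inside the reduction of Theorem~\ref{thm:reduction}. So there is no ``paper's own proof'' to compare against; what you have written is a reconstruction of the external argument.

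As a reconstruction, your three-part architecture (hard planted distribution on the Hamming cube, cell sampling at rate $\rho$ so that a $\rho^t$-fraction of queries become zero-probe, and a per-cell ``resolving power'' bound) is the right skeleton of the PTW/ALRW proof. But there is a genuine gap at the technical core, and it is exactly where the theorem's content lives. Step (iii) invokes a single-cell bound with the exponent constraint $(q-1)(p-1)=(1-\frac1c)^2$ without deriving it; the correct source of that constraint is the \emph{two-function} hypercontractive inequality $\E[f(x)g(y)]\le \|f\|_p\|g\|_q$ for $\rho$-correlated $(x,y)$ on the cube, valid when $(p-1)(q-1)\ge\rho^2$, applied to bound the ``robust expansion'' of the query distribution --- not the O'Donnell--Wu--Zhou LSH-exponent lower bound, which is a one-function statement about hash families rather than about $w$-bit cells. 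Moreover, the final step ``choosing $\rho$ optimally as a function of $m$ and $t$ and rearranging gives $m^tw/n\ge(\gamma/t)^q m^{t(1+q/p-q)}$'' is asserted, not carried out; getting the specific exponents $t(1+\frac{q}{p}-q)$ and the $(\gamma/t)^q$ prefactor requires tracking how the robust-expansion bound interacts with the sample size $\rho m$ and the $w$ bits per cell, and also requires a small fix in step (ii) for adaptive probes (one reveals the sampled cells sequentially so that the probability a fixed query's adaptively chosen probe sequence stays inside the sample is still $\rho^t$). As written, the proposal is an accurate outline of where the proof lives, but the inequality in the theorem statement is not established by it.
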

The quality of those bounds deteriorate as $t$ the number of probes increases. The bound for $t=1$ is optimal and recently Andoni et al.\cite{andoni2017optimal} gave an optimal bound up to sub-polynomial factors for $t=2$ in the regime of $w=O(\log n)$. The distribution for which the lower bounds are proved is the following.

\begin{definition}[(n,d,c)-random instance]Let $P$ be $n$ random points from the boolean hypercube $\{0,1\}^{d}$ with $d=\omega(\log n)$. The query point $x$ is generated by picking a random point $y\in P$ and then generating a $\rho$-correlated point $y_{\rho}$ by keeping each bit of $y$ independently with probability $\rho = 1-\frac{1}{c}$.
\end{definition}

Our strategy of providing lower bounds for KDE is to show that for the specific distribution over instances used by Panigrahy et al. one can use an algorithm for KDE that would solve the ANNS problem with  more than $1/2$ probability.

\begin{theorem}[ANNS to KDE]\label{thm:reduction}
For $\mu \in (0, \frac{1}{4}]$, $\epsilon\in [\frac{\mu}{1+2\mu},\frac{1}{5}]$ and $\delta \in (\frac{1}{2}+\mu,1]$, any algorithm that solves the $(\mu,\epsilon, \delta)$-KDE problem  solves also the $\left(n,d,c\right)$-random instance of ANNS with $n=\lceil\frac{1}{\mu}\rceil$, $d=\Theta(\log^{3}(n))$,  and $c=\Theta(\frac{\log(n)}{\log(1/\epsilon)})$ with probability at least $1/2$.
\end{theorem}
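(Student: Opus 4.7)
The plan is to embed the boolean cube $\{0,1\}^d$ into $\R^d$ isometrically (so that Hamming distance $\|x-y\|_H$ equals the squared Euclidean distance) and to use Gaussian KDE as an approximate indicator of ``there is a near neighbor''. Given the $(n,d,c)$-random instance with $n=\lceil 1/\mu\rceil$, I would fix the bandwidth at $\sigma^2 = d/(2\log n)$, so that the Gaussian kernel $k(x,y)=\exp(-\|x-y\|_H/\sigma^2)$ equals $\mu$ at the typical far-pair Hamming distance $d/2$ and equals $\mu^{1/c}$ at the typical near-pair Hamming distance $d/(2c)$ (between $y$ and its $\rho$-correlated version $y_\rho$). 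I would then pick $c=\Theta(\log n/\log(1/\epsilon))$ with the constant tuned so that $\mu^{1/c}$ is bounded below by a sufficiently large absolute multiple of $\epsilon$. Combined with the hypotheses $\epsilon\geq \mu/(1+2\mu)$ and $\mu\leq 1/4$ (which force $\mu \leq 3\epsilon/2$), this ensures that the idealized KDE values
\[
\mu_{\mathrm{yes}} \;=\; \mu\bigl(1 + \mu^{1/c} - \mu\bigr) \qquad\text{and}\qquad \mu_{\mathrm{no}} \;=\; \mu
\]
differ by a factor strictly greater than $(1+\epsilon)/(1-\epsilon)$, so any $(1\pm\epsilon)$-accurate estimate of $\mathrm{KDE}_P(x)$ separates the two cases via a single threshold.

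The technical crux is concentration. Taking $d = C\log^3 n$ for a sufficiently large constant $C$ and applying Chernoff bounds to the binomial Hamming distances, I would show that with probability at least $1-O(\mu)$ over the random instance (yes \emph{or} no), every Hamming distance from $x$ to a point of $P$ deviates from its expectation by at most a multiplicative factor $1\pm O(1/\log n)$, union-bounded over the $n$ relevant pairs. Because the kernel exponent $\|x-y\|_H/\sigma^2$ equals $\log n\cdot(1\pm O(1/\log n))$ for the far pairs and $(\log n)/c \cdot (1\pm O(1/\log n))$ for the near pair, this Hamming concentration translates into a $1\pm o(\epsilon)$ multiplicative perturbation of each kernel value, and hence the actual $\mathrm{KDE}_P(x)$ lies within a $1\pm o(\epsilon)$ factor of $\mu_{\mathrm{yes}}$ or $\mu_{\mathrm{no}}$ as appropriate. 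In particular both values are at least $\mu$, so the $(\mu,\epsilon,\delta)$-KDE data structure is obliged to return a $(1\pm\epsilon)$-approximation.

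The ANNS decision rule is then: output \emph{yes} iff the returned estimate $\hat{\mu}$ exceeds an appropriate threshold (e.g.\ the midpoint of the two guaranteed intervals, around $(1+2\epsilon)\mu$). By the previous two steps this answer is correct whenever the KDE estimator and the concentration event both succeed, so a union bound gives overall ANNS success probability at least $1-\delta-O(\mu)\geq 1/2$ under the stated hypothesis on $\delta$. The main obstacle I anticipate is the delicate interplay between the $\epsilon$-gap and the concentration tolerance: because $\mu^{1/c}-\mu$ is only a constant multiple of $\epsilon$, the Hamming concentration must be sharp enough to produce $o(\epsilon/\log n)$ relative error on each distance, and it is precisely this requirement (combined with Chernoff's $\Omega(\log n)/\eta^2$ sample complexity and a union bound over $n$ pairs) that forces $d$ to grow as $\Theta(\log^3 n)$ rather than merely $\Theta(\log n)$.
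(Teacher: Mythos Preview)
Your approach is essentially the same as the paper's: embed the hypercube, tune the bandwidth so that far pairs contribute $\approx\mu$ each, tune $c$ so the planted near neighbor contributes $\Theta(\epsilon)$, and use Chernoff on the binomial Hamming distances (with $d=\Theta(\log^3 n)$) so that the actual KDE in the yes and no cases is separated by a factor exceeding $(1+\epsilon)/(1-\epsilon)$. The paper carries this out with explicit constants, showing in particular that every far-pair kernel value lands in the window $[(1-2\epsilon)\mu,(1-\epsilon)\mu]$ and the near-pair kernel value is at least $4\epsilon$, from which the $(1\pm\epsilon)\mu$ threshold follows directly.

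There is one genuine quantitative slip in your write-up that you should fix. You first assert that the Hamming distances concentrate to within a multiplicative $1\pm O(1/\log n)$ and that this yields a $1\pm o(\epsilon)$ perturbation of each kernel value. That inference is false: a $O(1/\log n)$ relative error on a distance whose kernel exponent is $\Theta(\log n)$ produces an $O(1)$ additive error in the exponent, hence only a \emph{constant-factor} perturbation of the kernel value, which is far too coarse to preserve a $\Theta(\epsilon)$ gap. You yourself diagnose this correctly in your final paragraph, where you note that the concentration must in fact be to within $o(\epsilon/\log n)$ relative error; that is the tolerance the paper enforces (its $d$ is really $\Theta(\epsilon^{-2}\log^3 n)$, the $\epsilon$-dependence being absorbed into the $\Theta$ in the theorem statement). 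So the plan is sound, but your middle paragraph should be rewritten with the sharper $O(\epsilon/\log n)$ deviation from the outset rather than the insufficient $O(1/\log n)$.

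A minor point: under the paper's definition of the $\rho$-correlated query (each bit flipped independently with probability $1/c$), the expected near-pair Hamming distance is $d/c$, not $d/(2c)$; this only shifts the constant in your choice of $c$ and does not affect the argument.
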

\begin{proof} We define the following parameters:  $\epsilon_{1}=\epsilon$, $\epsilon_{2}=2 \epsilon$, $r= \log(\frac{n}{\sqrt{(1-\epsilon_{1})(1-\epsilon_{2})}})$, $d= \lceil\frac{18 \log(n)}{\ln^{2}\left(\sqrt{\frac{1-\epsilon_{1}}{1-\epsilon_{2}}}\right)} r^{2}\rceil$, $\sigma^{2} = \frac{1}{r}\frac{d}{2}$,  $u=\sqrt{\frac{\log^{2}(\frac{1-\epsilon_{1}}{1-\epsilon_{2}})}{4\log(\frac{1}{4\epsilon})}}\leq \frac{1}{3}$ and $c = \frac{(1+u)}{\log(1/4\epsilon )}2r$. Using these parameters we bound the values of the kernel for different pair of points. Let $P$ be a collection of $n=\lceil \frac{1}{\mu}\rceil$ random points on the $d$-dimensional hypercube and $y_{\rho}$ denotes as before a $\rho$-correlated point. For simplicity we assume that $1/\mu$ is an integer.
\begin{claim}
For all $y\neq z\in P$, $(1-\epsilon_{2})\mu \leq k_{\sigma}(y_{\rho},z) \leq (1-\epsilon_{1})\mu$ and $k_{\sigma}(y_{\rho},y)\geq 4\epsilon$ with probability at least $1- n^{-1}$ over the randomness in $P$ and $y_{\rho}$.
\end{claim}
\begin{proof} For all $z\in P\setminus\{y\}$
let $X_{i}=\mathbb{I}[(y_{\rho})_{i}\neq z_{i}]$ with $\E[X_{i}]= \frac{1}{2}$ and define $X = \sum_{i=1}^{d}X_{i}$. In this case,  $k_{\sigma}(y_{\rho},z)= \exp(- \|z-y_{\rho}\|^{2}/\sigma^{2}) = \exp(-\sum_{i=1}^{d}\mathbb{I}[(y_{\rho})_{i}\neq z_{i}]/\sigma^{2})=\exp(-X/\sigma^{2})$.
\begin{align}
\P[k_{\sigma}(y_{\rho},z)> (1-\epsilon_{1})\mu]= \P\left[X<\left\{\frac{2\sigma^{2}}{d} \log(\frac{1}{(1-\epsilon_{1})\mu})\right\}\cdot \frac{d}{2}\right]
\end{align}
where $ \frac{2\sigma^{2}}{d} \log(\frac{1}{(1-\epsilon_{1})\mu}) = \frac{\log n + \log(\frac{1}{1-\epsilon_{1}})}{\log n + \frac{1}{2}\log(\frac{1}{1-\epsilon_{1}})+\frac{1}{2}\log(\frac{1}{1-\epsilon_{2}})} =  1- \frac{\log(\frac{1-\epsilon_{1}}{1-\epsilon_{2}})}{2\log(\frac{n}{\sqrt{(1-\epsilon_{1})(1-\epsilon_{2})})})}$. By Chernoff bounds we get:
\begin{align}
\P[k_{\sigma}(y_{\rho},z) > (1-\epsilon_{1})\mu]\leq \exp\left\{-\frac{\log^{2}(\frac{1-\epsilon_{1}}{1-\epsilon_{2}})}{3\cdot 4\log^{2}(\frac{n}{\sqrt{(1-\epsilon_{1})(1-\epsilon_{2})})})}\frac{d}{2}\right\}\leq n^{-3}
\end{align}
Similarly, 	 we have that $\P[k_{\sigma}(y_{\rho},z)> (1-\epsilon_{1})\mu]= \P\left[X>\left\{\frac{2\sigma^{2}}{d} \log(\frac{1}{(1-\epsilon_{2})\mu})\right\}\cdot \frac{d}{2}\right]$ 
where $\frac{2\sigma^{2}}{d} \log(\frac{1}{(1-\epsilon_{2})\mu})=1+ \frac{\log(\frac{1-\epsilon_{1}}{1-\epsilon_{2}})}{2\log(\frac{n}{\sqrt{(1-\epsilon_{1})(1-\epsilon_{2})})})}$. By Chernoff bounds we get:
\begin{align}
\P[k_{\sigma}(y_{\rho},z) > (1-\epsilon_{2})\mu]\leq \exp\left\{-\frac{\log^{2}(\frac{1-\epsilon_{1}}{1-\epsilon_{2}})}{3\cdot 4\log^{2}(\frac{n}{\sqrt{(1-\epsilon_{1})(1-\epsilon_{2})})})}\frac{d}{2}\right\}\leq n^{-3}
\end{align}
Next, let $Y_{i}=\mathbb{I}[(y_{\rho})_{i}\neq y_{i}]$ with $\E[Y_{i}]=\frac{1}{c}$ and define $Y=\sum_{i=1}^{d}Y_{i}$.  Then, 
\[
\P[k_{\sigma}(y_{\rho},y)<4\epsilon] = \P\left[Y> \left\{\frac{c\sigma^{2}}{d}\log(\frac{1}{4\epsilon})\right\}\frac{d}{c}\right]
\]
where $\frac{c\sigma^{2}}{d}\log(\frac{1}{4\epsilon})= \frac{1+u}{\log(1/4\epsilon)}\frac{\sigma^{2}}{d}\log(\frac{1}{4\epsilon})2r= 1+u$. By Chernoff bounds we get for all $n\geq  4$
\begin{align}
\P[k_{\sigma}(y_{\rho},z) < 4\epsilon]\leq \exp\left\{-\frac{u^{2}}{3}\frac{d}{c}\right\}\leq e^{-\frac{9}{4}\log^{2}(n)}\leq n^{-3}
\end{align}
Taking union bound for all $2\binom{n}{2}+n=n^{2}$ events completes the proof.
\end{proof}
Conditionally on these events, a no-instance of $c$-ANN problem, has kernel density at most $(1-\epsilon_{1}) \mu$, whereas a yes-instance has density at least $(1-\epsilon_{2})\frac{n-1}{n } \mu+\frac{1}{n}4\epsilon   = (1+2\epsilon(1+\frac{1}{n})-\frac{1}{n}))\mu\geq (1+\epsilon)\mu$ as long as $\epsilon(1+\frac{2}{n})\geq \frac{1}{n}$. We therefore, get that any data structure for the $(\mu,\epsilon,\delta)$-KDE problem can also solve the $c$-ANN problem for random instances with probability at least $1-\delta- n^{-1}$.
\end{proof}
\begin{corollary}For $\mu \in (0, \frac{1}{4}]$, $\epsilon\in [\frac{\mu}{1+2\mu},\frac{1}{5}]$ and $\delta \in (\frac{1}{2}+\mu,1]$, any algorithm that solves the $(\mu,\epsilon, \delta)$-KDE problem in the $(m,1,w)$-cell probe model must satisfy $
(m\cdot w) \geq \Omega(\frac{1}{\mu})$.
\end{corollary}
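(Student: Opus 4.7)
The plan is to compose the ANNS-to-KDE reduction of Theorem~\ref{thm:reduction} with the cell-probe lower bound of Theorem~\ref{thm:cell-probe}. First, I would apply Theorem~\ref{thm:reduction} to the hypothesized $(\mu,\epsilon,\delta)$-KDE algorithm, producing an algorithm that solves the $(n,d,c)$-random instance of ANNS on the Boolean hypercube with $n=\lceil 1/\mu\rceil$, $d=\Theta(\log^{3} n)$, and $c=\Theta(\log n/\log(1/\epsilon))$, with success probability at least $1-\delta-1/n\ge 1/2$ on the hard distribution. Because the reduction invokes the KDE data structure as a black box exactly once per query, the derived ANNS algorithm has the same number of cells $m$, the same word size $w$, and makes only a single probe, so it still lives in the one-probe cell-probe model.

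Second, I would apply Theorem~\ref{thm:cell-probe} with $t=1$. Setting $\rho=1-1/c$, this yields, for every pair $p,q\ge 1$ with $(p-1)(q-1)=\rho^{2}$,
\begin{equation*}
\frac{mw}{n}\ \ge\ \gamma^{q}\, m^{1+q/p-q}.
\end{equation*}

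Third, I would eliminate the $m$-dependence on the right by choosing parameters that drive the exponent $1+q/p-q$ to zero. Taking $q=1+a$ and $p=1+\rho^{2}/a$ (which meets the constraint) for a small $a>0$ gives
\begin{equation*}
1+\frac{q}{p}-q\ =\ \frac{a\,(1-\rho^{2})}{a+\rho^{2}},
\end{equation*}
which tends to $0$ as $a\to 0^{+}$. Since any realizable data structure has $m\ge 1$, the factor $m^{1+q/p-q}$ is at least $1$, so the inequality collapses to $mw\ge \gamma^{1+a}\, n$. Letting $a\to 0^{+}$ yields $mw\ge \gamma\, n=\Omega(1/\mu)$, as claimed.

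The main obstacle here is not analytical but bookkeeping: I would need to verify that the stated ranges of $(\mu,\epsilon,\delta)$ lie in the hypotheses of Theorem~\ref{thm:reduction} (in particular, that the combined failure probability $\delta+1/n$ stays below $1/2$ so that Theorem~\ref{thm:cell-probe} is applicable), and that the resulting $(n,d,c)$ fall in the regime where the Panigrahy--Talwar--Wieder bound holds. The only genuine manoeuvre is recognizing that $p,q$ are free parameters within the constraint $(p-1)(q-1)=\rho^{2}$ and can be tuned so the $m$-exponent becomes arbitrarily small, which is exactly what converts the suprema-style bound into a clean lower bound on the raw product $mw$.
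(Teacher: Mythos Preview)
Your proposal is correct and follows the same high-level route as the paper: compose the reduction of Theorem~\ref{thm:reduction} with the one-probe instance of Theorem~\ref{thm:cell-probe}. The only difference is the choice of the free parameters $(p,q)$. The paper fixes $p=\tfrac{1}{1-\rho^{2}}$, $q=2-\rho^{2}$, rearranges to obtain $m\ge (1/(\mu w))^{1/(\rho^{2}(2-\rho^{2}))}\gamma^{1/\rho^{2}}$, and then observes that the exponent $1/(\rho^{2}(2-\rho^{2}))\ge 1$. You instead take the family $q=1+a$, $p=1+\rho^{2}/a$ and drive the $m$-exponent $a(1-\rho^{2})/(a+\rho^{2})$ to zero, using only the trivial bound $m\ge 1$ to discard $m^{e}$. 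Both manoeuvres are legitimate consequences of the supremum in Theorem~\ref{thm:cell-probe}; yours is arguably cleaner because it avoids the implicit case split on whether $\mu w\le 1$, while the paper's fixed choice makes the dependence on $c$ (through $\rho$) fully explicit.
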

\begin{proof} 
Using the parameters defined in Theorem \ref{thm:reduction}, we invoke Theorem \ref{thm:cell-probe} with $p=\frac{1}{1-\rho^{2}}\geq 1$, $q=2-\rho^{2}\geq 1$ where $\rho^{2}=(1-\frac{1}{c})^{2}$ to obtain:
\begin{align}
m &\geq \left(\frac{1}{\mu w}\right)^{\frac{p}{p-1}\frac{1}{q}} \gamma^{\frac{p}{p-1}}\\
&=\left(\frac{1}{\mu w}\right)^{\frac{1}{\rho^{2}}\frac{1}{2-\rho^{2}}} \gamma^{\frac{1}{\rho^{2}}}\\
&=\left(\frac{1}{\mu w}\right)^{\frac{1}{1-\frac{2}{c}+\frac{1}{c^{2}}}\frac{1}{1+\frac{2}{c}-\frac{1}{c^{2}}}} \gamma^{\frac{1}{\rho^{2}}}\\
& =\left(\frac{1}{\mu w}\right)^{\frac{1}{1-\left(\frac{2}{c}-\frac{1}{c^{2}}\right)^{2}}} \gamma^{\frac{1}{\rho^{2}}}\\
&\geq \frac{1}{\mu w}\gamma^{\frac{1}{\rho^{2}}}
\end{align}
As $\gamma$ is a constant and $1/\rho^{2}=O(1)$ the  statement follows.
\end{proof}

\paragraph{Lower Bounds on Adaptive Core-sets} The above bound becomes more interesting if one considers the following class of estimators for KDE. Given $P$, we may perform any amount of preprocessing and only store $m$ sets $S_{1},\ldots,S_{m}$ (of arbitrary points) each of size at most $\frac{w}{d}$ . Given a query $x$, our estimator picks one of those sets $i(x)$ and produces an approximation to the Kernel Density by only a function of $(S_{i(x)},x)$, an example of such an estimate could also be something as simple as $\mathrm{KDE}_{S_{i(x)}}(x)$, or weighted versions such as $\sum_{y\in S_{i(x)}}
W(x,y)K(x,y)$. The above lower bound shows that either we must have many such sets (large space) or each set must be large itself (query time). Of particular interest is the case $m=1$ where there is a single set. In that case, we have a lower bound on the size of a core set. Hence, even for $t=1$ the lower bound gives non-trivial results for an interesting class of estimation algorithms.
\section{H\"{o}lder Inequalities}\label{sec:holder}
\begin{lemma}[H\"{o}lder's Inequality] Let $p,q>0$ such that $\frac{1}{p}+\frac{1}{q}=1$, then $|\sum_{i}x_{i}y_{i}|\leq \|x\|_{p}\|x\|_{q}$.
\end{lemma}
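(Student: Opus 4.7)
The plan is to prove the inequality by the standard route via Young's inequality and homogeneity. First I would dispose of the degenerate cases: if either $\|x\|_p = 0$ or $\|y\|_q = 0$, then one of the vectors is identically zero and both sides vanish, so the inequality is trivial. Otherwise both norms are strictly positive and I can exploit the homogeneity of the inequality: replacing $x$ by $x/\|x\|_p$ and $y$ by $y/\|y\|_q$ reduces the problem to proving that $\sum_i |x_i y_i| \le 1$ whenever $\|x\|_p = \|y\|_q = 1$.

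Next I would establish Young's inequality: for any $a,b\ge 0$ and conjugate exponents $p,q > 1$ with $\tfrac{1}{p}+\tfrac{1}{q}=1$,
\begin{equation*}
ab \le \frac{a^p}{p} + \frac{b^q}{q}.
\end{equation*}
The cleanest derivation uses concavity of $\log$: if $a,b > 0$, write $ab = \exp\bigl(\tfrac{1}{p}\log a^p + \tfrac{1}{q}\log b^q\bigr)$ and apply Jensen's inequality to the (concave) logarithm, i.e. $\tfrac{1}{p}\log a^p + \tfrac{1}{q}\log b^q \le \log\bigl(\tfrac{a^p}{p}+\tfrac{b^q}{q}\bigr)$, and then exponentiate. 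The cases $a=0$ or $b=0$ are immediate.

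Applying Young's inequality pointwise with $a = |x_i|$ and $b = |y_i|$ gives $|x_i y_i| \le \tfrac{|x_i|^p}{p} + \tfrac{|y_i|^q}{q}$. Summing over $i$ in the normalized setting yields
\begin{equation*}
\sum_i |x_i y_i| \le \frac{\|x\|_p^p}{p} + \frac{\|y\|_q^q}{q} = \frac{1}{p} + \frac{1}{q} = 1,
\end{equation*}
and since $|\sum_i x_i y_i| \le \sum_i |x_i y_i|$ by the triangle inequality, the proof is complete after undoing the normalization. The only boundary cases left are $p=1, q=\infty$ (or vice versa), where the claim reduces to the obvious bound $\sum_i |x_i y_i| \le (\max_j |y_j|)\sum_i |x_i|$; I would dispatch this separately. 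There is no real obstacle here — the whole argument hinges on Young's inequality, which is a one-line consequence of concavity of $\log$.
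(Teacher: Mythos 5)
Your proof is correct and complete: this is the standard derivation of H\"{o}lder's inequality via normalization and Young's inequality, and the paper itself states this lemma as a classical fact without proof, so there is nothing to compare against. One cosmetic note: the hypothesis $p,q>0$ with $\tfrac{1}{p}+\tfrac{1}{q}=1$ already forces $p,q>1$, so the boundary case $p=1,q=\infty$ you dispatch separately is not actually needed under the stated assumptions (and the right-hand side in the paper's statement has a typo, $\|x\|_{p}\|x\|_{q}$ for $\|x\|_{p}\|y\|_{q}$).
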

\begin{corollary}\label{col:holder} Let  $\beta \in [0,1]$ and $p\geq q>0$, then for all $x\in \R^{n}$
\begin{eqnarray} 
\|x\|_{\beta}^{\beta} & \leq & \|x\|_{1}^{\beta} \cdot n^{ 1-\beta }\\
\|x\|_{p}^{p} &\leq &\|x\|^{q}_{q}\cdot  \|x\|_{\infty}^{p-q}
\end{eqnarray}
\end{corollary}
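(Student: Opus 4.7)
My plan is to derive both inequalities as short and essentially routine consequences of the version of H\"{o}lder's inequality stated just above the corollary, so the proposal consists mainly of choosing the right conjugate exponents and, for the second inequality, exploiting a pointwise bound before summing.

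For the first inequality, I would write $\|x\|_{\beta}^{\beta} = \sum_{i} |x_{i}|^{\beta} \cdot 1$ and apply H\"{o}lder to the pairing of the vectors $(|x_{i}|^{\beta})_{i}$ and $(1)_{i}$ with conjugate exponents $r = 1/\beta$ and $s = 1/(1-\beta)$ (noting that $1/r + 1/s = 1$). This yields
\begin{equation*}
\sum_{i} |x_{i}|^{\beta} \leq \Bigl(\sum_{i} |x_{i}|\Bigr)^{\beta} \cdot \Bigl(\sum_{i} 1\Bigr)^{1-\beta} = \|x\|_{1}^{\beta}\, n^{1-\beta},
\end{equation*}
which is exactly the claimed bound. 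The endpoints $\beta = 0$ (where $\|x\|_{0}^{0}$ counts the number of nonzero entries and is trivially at most $n$) and $\beta = 1$ (where the inequality reduces to equality) can be checked separately in one line.

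For the second inequality, I would avoid H\"{o}lder entirely and argue pointwise: since $p \geq q > 0$, we have $|x_{i}|^{p-q} \leq \|x\|_{\infty}^{p-q}$ for every coordinate $i$, hence $|x_{i}|^{p} = |x_{i}|^{q} \cdot |x_{i}|^{p-q} \leq |x_{i}|^{q}\, \|x\|_{\infty}^{p-q}$. Summing over $i$ and pulling the constant factor $\|x\|_{\infty}^{p-q}$ out of the sum yields $\|x\|_{p}^{p} \leq \|x\|_{q}^{q}\, \|x\|_{\infty}^{p-q}$, as required.

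Neither step presents a real obstacle; the only mild subtlety is the choice of conjugate exponents in the first part, and one should be careful to state the boundary conventions (in particular the case $\beta = 0$) so that the inequality is well-defined across the full range $\beta \in [0,1]$. No other structural ingredients are needed beyond H\"{o}lder's inequality itself.
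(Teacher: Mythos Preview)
Your proposal is correct and follows essentially the same route as the paper: both apply H\"{o}lder with conjugate exponents $1/\beta$ and $1/(1-\beta)$ for the first inequality (the paper just normalizes to $\|x\|_{1}=1$ first), and for the second inequality the paper also splits $|x_{i}|^{p}=|x_{i}|^{p-q}\,|x_{i}|^{q}$ and invokes H\"{o}lder with the $(\infty,1)$ pair, which is exactly your pointwise bound written in H\"{o}lder form.
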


\begin{proof} The first inequality is proved by fixing $\|x\|_{1}=a>0$, performing the change of variables $u_{j} = \left(\frac{x_{j}}{a}\right)^{\beta}$, so that $\|u\|_{\frac{1}{\beta}}^{\frac{1}{\beta}} = 1 $, and finally applying H\"{o}lder's inequality for $p=\frac{1}{\beta},q=\frac{1}{1-\beta}$. The second inequality follows by applying H\"{o}lder on $\sum_{i=1}^{n} |x_{i}|^{p-q}|x_{i}|^{q}$.
\end{proof}

\begin{lemma}[two-sided Holder]
 Let  $v,w\in \R^{n}$ be strictly positive vectors, then for any two sets $S,S^{'}\subseteq [n]$:
\begin{equation}
\sum_{i\in S, j\in S^{'}}A_{ij}x_{i}x_{j} \leq  
   \|x\|_{v,1}\|x\|_{w,1} \cdot \max_{i\in S,j\in S^{'}}\left\{\frac{|A_{ij}|}{v_{i}w_{j}}\right\}
\end{equation}
\end{lemma}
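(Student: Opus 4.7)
The plan is to prove this by directly bounding the summand using the definition of the maximum, then factoring the resulting sum. Let $M := \max_{i \in S, j \in S'} |A_{ij}|/(v_i w_j)$, so by definition $|A_{ij}| \leq M \cdot v_i w_j$ for all $i \in S, j \in S'$. Since $v_i, w_j > 0$, this inequality is meaningful and tight at some index pair.

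First I would apply the triangle inequality to pull the absolute value inside:
\begin{equation*}
\sum_{i \in S, j \in S'} A_{ij} x_i x_j \leq \sum_{i \in S, j \in S'} |A_{ij}| \cdot |x_i| \cdot |x_j|.
\end{equation*}
Then I would substitute the pointwise bound $|A_{ij}| \leq M \cdot v_i w_j$ to get
\begin{equation*}
\sum_{i \in S, j \in S'} |A_{ij}| \, |x_i| \, |x_j| \leq M \sum_{i \in S, j \in S'} v_i w_j |x_i| |x_j| = M \left(\sum_{i \in S} v_i |x_i|\right)\left(\sum_{j \in S'} w_j |x_j|\right).
\end{equation*}

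Finally, I would observe that since all terms $v_i |x_i|$ and $w_j |x_j|$ are non-negative, extending the sums from $S$ and $S'$ to all of $[n]$ only increases them, so
\begin{equation*}
\left(\sum_{i \in S} v_i |x_i|\right)\left(\sum_{j \in S'} w_j |x_j|\right) \leq \|x\|_{v,1} \cdot \|x\|_{w,1},
\end{equation*}
which combined with the previous inequality yields the claim. There is no real obstacle here — unlike the classical Hölder inequality, this is essentially a rearrangement statement that factors a bilinear form over a rank-one dominating matrix $M v w^\top$. The only subtlety is noting that the maximum is taken over $S \times S'$ (matching the summation domain), so the pointwise bound applies exactly where it is needed, and non-negativity of $v_i |x_i|$ lets us freely enlarge the sums to the full weighted $\ell_1$-norms.
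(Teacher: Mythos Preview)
Your proof is correct and arguably cleaner than the paper's. The paper instead frames the left-hand side as a supremum over $x$ with $\|x\|_{w,1}=\alpha$, $\|x\|_{v,1}=\beta$, decouples the two copies of $x$ into independent variables $x,y$, and then performs the change of variables $\tilde{x}_i=w_ix_i/\alpha$, $\tilde{y}_j=v_jy_j/\beta$ to reduce to two applications of the $\ell_1/\ell_\infty$ form of H\"older. Your direct pointwise bound $|A_{ij}|\le M\,v_iw_j$ followed by factoring is exactly the same $\ell_1/\ell_\infty$ content without the detour through a sup and normalization; the paper's route makes the ``H\"older'' label more explicit, while yours makes the rank-one domination $|A|\le M\,vw^\top$ transparent.
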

\begin{proof} We start by setting w.l.o.g. $\|x\|_{w,1}=\alpha$ and $\|x\|_{v,1}=\beta$ with $\alpha,\beta >0$.
\begin{align*}
\sup_{\tiny \begin{matrix}
\|x\|_{w,1}= \alpha\\ \|x\|_{v,1}= \beta
\end{matrix}}\left\{x_{S}^{\top}A_{S,S^{
'}}x_{S^{'}} \right\} = \sup_{\tiny\begin{matrix}
\|x\|_{w,1}= \alpha\\
\|y\|_{v,1}= \beta\\
y=x
\end{matrix}}\left\{x_{S}^{\top}Ay_{S^{'}} \right\} \leq \sup_{\tiny\begin{matrix}
\|x\|_{w,1}= \alpha\\
\|y\|_{v,1}= \beta\\
\end{matrix}}\left\{x_{S}^{\top}Ay_{S^{'}} \right\}
\end{align*}
We next peform the change of variables $\tilde{x} = \frac{w_{i}}{\alpha} x_{i}$ and $\tilde{y} = \frac{v_{i}}{\beta} y_{i}$ so that $\|\tilde{x}\|_{1}=\|\tilde{y}\|_{1}=1$. We get
\begin{align*}
\sup_{\tiny\begin{matrix}
\|x\|_{w,1}= \alpha\\
\|y\|_{v,1}= \beta\\
\end{matrix}}\left\{x_{S}^{\top}Ay_{S^{'}} \right\}&= \alpha \beta  \cdot \sup_{\tiny\begin{matrix}
 \|\tilde{x}\|_{1}=1\\
\|\tilde{y}\|_{1}=1 
\end{matrix}}\left\{\sum_{i\in S}\tilde{x}_{i}\left(\sum_{j\in S^{'}}\tilde{y}_{j}\frac{A_{ij}}{w_{i}v_{j}} \right) \right\}\\
& \leq \alpha \beta  \cdot \sup_{\tiny\begin{matrix}
 \|\tilde{x}\|_{1}=1\\
\|\tilde{y}\|_{1}=1 
\end{matrix}}\left\{\|\tilde{y}_{S^{'}}\|_{1}\sum_{i\in S}\tilde{x}_{i}\max_{j\in S^{'}}\left(\frac{|A_{ij}|}{w_{i}v_{j}}\right)  \right\}\\
& \leq \alpha \beta  \cdot \sup_{
 \|\tilde{x}\|_{1}=1}\left\{\|\tilde{x}_{S}\|_{1}\max_{i\in S,j\in S^{'}}\left(\frac{|A_{ij}|}{w_{i}v_{j}}\right)  \right\}\\
& = \alpha \beta  \cdot \max_{i\in S,j\in S'}\left\{\frac{|A_{ij}|}{w_{i}v_{j}} \right\}
\end{align*}
where we applied H\"{o}lder's inequality in the second and third inequality.
\end{proof}
\begin{lemma}[Monotone H\"{o}lder] For every integer $n\geq 1$,   $\beta\in[\frac{1}{2},1]$,    and  for all $x\in \R^{n}$ such that   $|x_{1}|\geq |x_{2}|\geq \ldots \geq |x_{n}|$  , we have
\begin{equation}
 \sum |x_{i}|^{\frac{2-\beta}{\beta}}\left(i+\sum_{j>i}\frac{|x_{j}|}{|x_{i}|}\right) \leq n^{\beta}\cdot  \left(\sum_{i=1}^{n}|x_{i}|^{\frac{1}{\beta}}\right)^{2-\beta}
\end{equation}
where equality holds for $x^{*} = c\1$ for any $c\neq 0$.
\end{lemma}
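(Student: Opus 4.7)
Without loss of generality assume $x_i \geq 0$ (replacing $|x_i|$ by $x_i$), and write $X := \sum_i x_i$ and $M := \sum_i x_i^{1/\beta}$, so the target inequality reads $\sum_i x_i^{(2-\beta)/\beta}\bigl(i + \sum_{j>i} x_j/x_i\bigr) \leq n^\beta M^{2-\beta}$. The strategy is to first exploit the monotonicity of $(x_i)$ to collapse the parenthesized factor into a single scalar $X/x_i$, and then to bound the resulting expression by two applications of classical H\"{o}lder-type inequalities already used in the paper.

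\emph{Step 1 (collapse the parenthesis).} Since $x_k \geq x_i$ for every $k\leq i$, one has $i\cdot x_i \leq \sum_{k\leq i} x_k$, and therefore
\begin{equation*}
i + \sum_{j>i}\frac{x_j}{x_i} \;=\; \frac{i\cdot x_i + \sum_{j>i} x_j}{x_i} \;\leq\; \frac{\sum_{k\leq i} x_k + \sum_{j>i} x_j}{x_i} \;=\; \frac{X}{x_i}.
\end{equation*}
Pulling this into the outer sum and simplifying exponents, the left-hand side is bounded by $X \cdot \sum_i x_i^{(2-2\beta)/\beta}$.

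\emph{Step 2 (two one-shot H\"{o}lder estimates).} Bound each factor separately in terms of $M$ and $n$. Jensen's inequality for the convex map $t\mapsto t^{1/\beta}$ (convex because $1/\beta\geq 1$) gives $X/n \leq (M/n)^\beta$, i.e.\ $X \leq n^{1-\beta}M^\beta$. For the second factor, apply the power-mean inequality with exponents $p=(2-2\beta)/\beta$ and $q=1/\beta$; the assumption $\beta\geq 1/2$ is precisely what ensures $p\leq q$, and one obtains $\sum_i x_i^p \leq n^{1-p/q}\,M^{p/q} = n^{2\beta-1}M^{2-2\beta}$. Multiplying the two bounds,
\begin{equation*}
\text{LHS} \;\leq\; X\cdot\sum_i x_i^{(2-2\beta)/\beta} \;\leq\; \bigl(n^{1-\beta}M^\beta\bigr)\cdot\bigl(n^{2\beta-1}M^{2-2\beta}\bigr) \;=\; n^\beta M^{2-\beta},
\end{equation*}
which is the claimed inequality. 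Each of the three individual estimates saturates at $x=c\mathbf{1}$, which recovers the stated equality case.

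The only mildly delicate issue is making sure $\beta\in[1/2,1]$ really gets used where promised. It enters in exactly two places: convexity of $t\mapsto t^{1/\beta}$ (and hence the direction of Jensen) requires $\beta\leq 1$, while the power-mean comparison $\|\cdot\|_p\leq n^{1/p-1/q}\|\cdot\|_q$ in the direction we need requires $p\leq q$, which is equivalent to $\beta\geq 1/2$. Outside this range the second inequality would flip and the chain would break, so the hypothesis is essential rather than cosmetic. No other step involves any calculation beyond bookkeeping of exponents; the only real idea is the monotonicity-driven decoupling of the parenthesized factor in Step 1, after which standard H\"{o}lder does all the work.
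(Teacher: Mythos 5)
Your proof is correct, and it takes a genuinely different and substantially simpler route than the paper. The paper proves this lemma by induction on $n$: it peels off the largest coordinate, reduces to a one-variable optimization over $\gamma = x_1^{1/\beta}/a$, and then shows monotonicity of an auxiliary function $g_n(\gamma)$ on $[\frac{1}{n+1},1]$ by a term-by-term comparison of the positive and negative parts of $g_n'$ — a couple of pages of delicate calculus in which the constraint $\beta\in[\frac12,1]$ enters through the monotonicity of several functions $h_1,h_2,h_3$ of $\beta$. Your argument replaces all of this with one observation — monotonicity of the $x_i$ gives $i + \sum_{j>i}x_j/x_i \leq \big(\sum_k x_k\big)/x_i$, decoupling the double sum into a product $X\cdot\sum_i x_i^{(2-2\beta)/\beta}$ — followed by two applications of the standard comparison $\|y\|_p^p \leq n^{1-p/q}\|y\|_q^p$ for $p\leq q$ (which, after the substitution $y_i = x_i^{1/\beta}$, is exactly the first inequality of the paper's own Corollary \ref{col:holder}). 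The exponent bookkeeping checks out: $n^{1-\beta}\cdot n^{2\beta-1}=n^\beta$ and $M^\beta\cdot M^{2-2\beta}=M^{2-\beta}$, and the hypothesis $\beta\geq\frac12$ is used exactly where you say, to ensure $(2-2\beta)/\beta \leq 1/\beta$. All three inequalities are equalities at $x=c\mathbf{1}$, so the tightness claim survives. What the paper's heavier induction buys is unclear — the statement proved is identical, and it is the statement (not any intermediate quantity from the induction) that is invoked in the variance bound of Theorem \ref{thm:scale-free} — so your argument appears to be a strict simplification. The only caveat worth recording is the degenerate case $x_i=0$ (where $|x_j|/|x_i|$ is formally undefined), but the paper glosses over this too, and in the intended application the $x_i$ are strictly positive weights.
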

\begin{proof}  Without loss of generality we may assume that $x\in \R_{+}^{n}$ and that $\|x\|_{\frac{1}{\beta}}^{\frac{1}{\beta}}=a$ for $a> 0$. The claim is equivalent to proving the following inequality for all $n\geq 1$:
\begin{equation}
f^{\beta}_{n}(a) := \sup_{\begin{matrix}
\|x\|^{\frac{1}{\beta}}_{\frac{1}{\beta}}=a \\ 
x_{1}\geq \ldots \geq x_{n}\geq 0\\
\end{matrix}}\left\{\sum x_{i}^{\frac{2-\beta}{\beta}}\left(i+\sum_{j>i}\frac{x_{j}}{x_{i}}\right) \right\} \leq n^{\beta} a^{2-\beta}
\end{equation}
We will use induction on $n$ to prove our claim. The base case is trivial. We next perform the inductive step. 
\begin{align*}
f_{n+1}^{\beta}(a) & = \sup_{\begin{matrix}
\|x\|^{\frac{1}{\beta}}_{\frac{1}{\beta}}=a\\ 
x_{1}\geq \ldots \geq x_{n+1}\\
1\geq x_{i} \geq 0
\end{matrix}}\left\{x_{1}^{\frac{2-\beta}{\beta}} + x_{1}^{\frac{2(1-\beta)}{\beta}} \sum_{j=1}^{n} x_{j+1}+ \sum_{i=1}^{n} x_{i+1}^{\frac{2(1-\beta)}{\beta}}\left(i+1+\sum_{j>i}\left(\frac{x_{j+1}}{x_{i+1}}\right) \right) \right\}\\
& \leq  \sup_{\begin{matrix}
\|x\|^{\frac{1}{\beta}}_{\frac{1}{\beta}}=a\\ 
x_{1}\geq \ldots \geq x_{n+1}\\
1\geq x_{i} \geq 0
\end{matrix}}\left\{x_{1}^{\frac{2-\beta}{\beta}} + x_{1}^{\frac{2(1-\beta)}{\beta}} \|x_{-1}\|_{1}+ x_{1}^{\frac{1}{\beta}-1}\sum_{i=1}^{n} x_{i+1}^{\frac{1}{\beta}}+ f^{\beta}_{n}(a-x_{1}^{\frac{1}{\beta}}) \right\}\\
& \leq  \sup_{\tiny\begin{matrix} x_{1}\in \left[ \left(\frac{a}{n+1}\right)^{\beta}, \min\left\{\left(a - n\zeta \right)^{\beta},1\right\}\right]
\end{matrix}}\left\{ x_{1}^{\frac{2-\beta}{\beta}} + x_{1}^{\frac{2(1-\beta)}{\beta}} n^{1-\beta}(a - x_{1}^{\frac{1}{\beta}})^{\beta}+ x_{1}^{\frac{1}{\beta}-1}(a - x_{1}^{\frac{1}{\beta}})+  Cn^{\beta}(a-x_{1}^{\frac{1}{\beta}})^{2-\beta}\right\}
	\end{align*}
where in the last inequality we used H\"{o}lder's inequality (twice) and the inductive hypothesis. We perform the change of variables $x_{1}= (a\gamma)^{\beta}$ to obtain
\begin{align*}
f_{n+1}^{\beta}(a)&\leq (n+1)^{\beta}a^{2-\beta} \cdot  \sup_{\tiny\begin{matrix} \gamma\in\left[ \frac{1}{n+1}, 1\right] 
\end{matrix}}\left\{  \frac{1}{(n+1)^{\beta}}\gamma^{2-\beta}  +   \frac{n^{1-\beta}}{(n+1)^{\beta}} \gamma^{2-2\beta}  \left(1 - \gamma\right)^{\beta}\right.\\
&\qquad\qquad\qquad\qquad\qquad\qquad\qquad +\left. \frac{1}{(n+1)^{\beta}}\gamma^{1-\beta}\left(1 - \gamma\right) +  C\left(\frac{n}{n+1}\right)^{\beta}(1 - \gamma)^{2-\beta}\right\}
\end{align*}
To prove the desired inequality we  are going to show that the function
\[
g_{n}(\gamma) :=  \frac{1}{(n+1)^{\beta}}\gamma^{2-\beta}  +   \frac{n^{1-\beta}}{(n+1)^{\beta}} \gamma^{2-2\beta}  \left(1 - \gamma\right)^{\beta} + \frac{1}{(n+1)^{\beta}}\gamma^{1-\beta}\left(1 - \gamma\right) +  C\left(\frac{n}{n+1}\right)^{\beta}(1 - \gamma)^{2-\beta}
\] is monotone in the given interval. Its derivative is given by
\begin{align*}
g_{n}^{'}(\gamma) &=  \frac{1-\beta}{(n+1)^{\beta}}\gamma^{1-\beta} + 2(1-\beta)\frac{n^{1-\beta}}{(n+1)^{\beta}}\frac{(1-\gamma)^{\beta}}{\gamma^{2\beta -1}}+\frac{1-\beta}{(n+1)^{\beta}}\frac{1-\gamma}{\gamma^{\beta}}\\
& - \beta\frac{n^{1-\beta}}{(n+1)^{\beta}}\frac{\gamma^{2-2\beta}}{(1-\gamma)^{1-\beta}} - (2-\beta)C\frac{n^{\beta}}{(n+1)^{\beta}}(1-\gamma)^{1-\beta}
\end{align*}

Comparing negative and positive terms we get the following relationships:
\begin{align*}
  \beta\frac{n^{1-\beta}}{(n+1)^{\beta}}\frac{\gamma^{2-2\beta}}{(1-\gamma)^{1-\beta}}  \geq  \frac{1-\beta}{(n+1)^{\beta}}\gamma^{1-\beta}, \qquad\forall &\gamma\geq \frac{\left(\frac{1-\beta}{\beta}\right)^{\frac{1}{1-\beta}}}{n+\left(\frac{1-\beta}{\beta}\right)^{\frac{1}{1-\beta}}} \\
\frac{1}{2}\cdot (2-\beta)C\frac{n^{\beta}}{(n+1)^{\beta}}(1-\gamma)^{1-\beta}  \geq 2(1-\beta)\frac{n^{1-\beta}}{(n+1)^{\beta}}\frac{(1-\gamma)^{\beta}}{\gamma^{2\beta -1}}, \qquad &\forall\gamma\geq \frac{1}{1+\left(\frac{(2-\beta)C}{(1-\beta)4}\right)^{\frac{1}{2\beta-1}}n}\\
\frac{1}{2}\cdot (2-\beta)C\frac{n^{\beta}}{(n+1)^{\beta}}(1-\gamma)^{1-\beta} \geq \frac{1-\beta}{(n+1)^{\beta}}\frac{1-\gamma}{\gamma^{\beta}}, \qquad&\forall  \gamma\geq \frac{1}{1+\left(\frac{(2-\beta)C}{(1-\beta)2}\right)^{\frac{1}{\beta}}n}
\end{align*}
Our goal is to show that each lower bound on $\gamma$  is always less or equal to $\frac{1}{n+1}$ and hence that the function $g_{n}(\gamma)$ is monotone in the interval of interest. Starting, from the first equation we observe that the function $x/(1+x)$ is increasing in $x\geq 0$, so to maximize the lower bound we need to find the maximum of the function:
\[
h_{1}(\beta) = \left(\frac{1-\beta}{\beta}\right)^{\frac{1}{1-\beta}} = e^{\frac{\ln(1-\beta)-\ln(\beta)}{1-\beta}}
\]
The function is decreasing for any $\beta \in [\frac{1}{2},1]$ so we have that $\sup_{\beta\in[0.5,1]}h_{1}(\beta)=h_{1}(0.5) = 1$. This proves that the first relationship holds for all $\gamma \geq \frac{1}{n+1}$. To handle the other two inequalities, we observe that the function $1/(1+x)$ is decreasing for $x\geq 0$ and hence we need to find the minimum of the functions:
\begin{align}
h_{2}(\beta)  &= \left(r_{2}\frac{(2-\beta)}{1-\beta}\right)^{\frac{1}{2\beta-1}} = e^{\frac{\ln(2-\beta)-\ln(1-\beta)+\ln r_{2}}{2\beta-1}}\\
h_{3}(\beta) & = \left(r_{3}\frac{(2-\beta)}{1-\beta}\right)^{\frac{1}{\beta}} = e^{\frac{\ln(2-\beta)-\ln(1-\beta)+\ln r_{3}}{\beta}}
\end{align}
where $r_{2},r_{3}>0$ are constants. The function $h_{2}(\beta)$ is increasing in the interval $[0.5,1]$, hence we get the desired inequality $ h_{2}(\beta)\geq h_{2}(0.5) = 1$ . Similarly, the function $h_{3}(\beta)$ satisfies $h_{3}(\beta)\geq h_{3}(0.5) = (3r_{3})^{2}=(3C/2)^{2}$, which is greater than one for $C=1$. This shows that the function $g_{n}(\gamma)$ is decreasing and consequently that
\[
f_{n+1}^{\beta}(a) \leq a^{2-\beta}(n+1)^{\beta}\cdot g_{n}\left(\frac{1}{n+1}\right)= a^{2-\beta}(n+1)^{\beta}
\]
completing the inductive step.
\end{proof} 
\section*{Acknowledgements}
We thank Peter Bailis for valuable conversations that initiated our research on this problem. This research was supported by NSF grants CCF-1617577, CCF-1302518 and a Simons Investigator
Award. Paris Siminelakis is partially supported by an Onassis Foundation Scholarship.

\bibliography{kde}
\bibliographystyle{abbrv}
\appendix
\section{Chernoff Bounds}
For completeness we state here the version of Chernoff bounds that we use (e.g.~\cite{mitzenmacher2005probability}).
\begin{theorem}
Let $X_{1},\ldots, X_{n}$ be i.i.d Bernouli random variables with mean $\E[X_{i}]=\bar{x}$, then for $\delta\in (0,1]$
\begin{align}
\P[\sum_{i=1}^{n}X_{i} \leq (1-\delta)n\bar{x}] &\leq e^{-\frac{\delta^{2}n\bar{x}}{2}}\\
\P[\sum_{i=1}^{n}X_{i} \geq (1+\delta)n\bar{x}] &\leq e^{-\frac{\delta^{2}n\bar{x}}{3}}
\end{align}
\end{theorem}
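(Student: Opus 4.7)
The plan is the standard exponential moment (Chernoff) argument, carried out symmetrically for the upper and lower tails. First I would observe that for any Bernoulli random variable $X_i$ with mean $\bar{x}$ and any $t\in\R$,
\[
\E[e^{tX_i}] \;=\; 1-\bar{x}+\bar{x}e^{t} \;=\; 1+\bar{x}(e^{t}-1) \;\le\; \exp\bigl(\bar{x}(e^{t}-1)\bigr),
\]
using $1+u\le e^{u}$. By independence, $\E[e^{tS}] \le \exp\bigl(n\bar{x}(e^{t}-1)\bigr)$ where $S=\sum_{i=1}^n X_i$.

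For the upper tail, I would apply Markov's inequality to $e^{tS}$ with $t>0$:
\[
\Pr[S\ge(1+\delta)n\bar{x}] \;\le\; e^{-t(1+\delta)n\bar{x}}\,\E[e^{tS}] \;\le\; \exp\Bigl(n\bar{x}\bigl(e^{t}-1-t(1+\delta)\bigr)\Bigr).
\]
Optimizing in $t$ by setting $t=\ln(1+\delta)>0$ gives the bound $\bigl(e^{\delta}/(1+\delta)^{1+\delta}\bigr)^{n\bar{x}}$. The remaining step is the scalar inequality
\[
\frac{e^{\delta}}{(1+\delta)^{1+\delta}} \;\le\; e^{-\delta^{2}/3} \qquad\text{for all } \delta\in(0,1],
\]
which one verifies by comparing logarithms: let $f(\delta)=\delta-(1+\delta)\ln(1+\delta)+\delta^{2}/3$, check $f(0)=0$, $f'(0)=0$, and show $f'(\delta)\le 0$ on $(0,1]$ via the Taylor bound $\ln(1+\delta)\ge \delta-\delta^{2}/2+\delta^{3}/3-\ldots$ (equivalently one shows $(1+\delta)\ln(1+\delta)\ge \delta+\delta^{2}/2-\delta^{3}/6$ on $[0,1]$, which is elementary calculus).

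For the lower tail, symmetrically I would take $t<0$, write $t=-s$ with $s>0$, and obtain
\[
\Pr[S\le(1-\delta)n\bar{x}] \;\le\; e^{s(1-\delta)n\bar{x}}\cdot\exp\bigl(n\bar{x}(e^{-s}-1)\bigr).
\]
Choosing $s=-\ln(1-\delta)>0$ yields $\bigl(e^{-\delta}/(1-\delta)^{1-\delta}\bigr)^{n\bar{x}}$, and one then verifies the scalar inequality
\[
\frac{e^{-\delta}}{(1-\delta)^{1-\delta}} \;\le\; e^{-\delta^{2}/2} \qquad\text{for all }\delta\in(0,1],
\]
again by the sign of $g(\delta)=-\delta-(1-\delta)\ln(1-\delta)+\delta^{2}/2$ at $0$ and on $(0,1]$ via Taylor expansion of $\ln(1-\delta)$. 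There is no real obstacle here; the only mildly delicate step is the two scalar inequalities comparing the optimized Chernoff ratio with the clean Gaussian-type exponent, and these follow by routine one-variable calculus on $[0,1]$.
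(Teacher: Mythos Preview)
Your proof is correct and is the standard Chernoff--Hoeffding argument. Note, however, that the paper does not actually prove this theorem: it merely states the bounds in an appendix ``for completeness'' with a citation to Mitzenmacher--Upfal, so there is no in-paper proof to compare against. Your argument is precisely the textbook derivation one would find in that reference: bound the MGF via $1+u\le e^{u}$, optimize the Markov/Cram\'er bound at $t=\ln(1\pm\delta)$, and then reduce to the two scalar inequalities $e^{\delta}/(1+\delta)^{1+\delta}\le e^{-\delta^{2}/3}$ and $e^{-\delta}/(1-\delta)^{1-\delta}\le e^{-\delta^{2}/2}$ on $(0,1]$, each of which is verified by elementary calculus exactly as you outline.
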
 
\section{Random Fourier Features Estimator}\label{sec:RFF}
The basis of using random Fourier features is Bochner's theorem~\cite{bochner1933monotone}, that for the gaussian kernel us equivalent to.
\begin{fact} Let $\omega\sim \mathcal{N}(0,I_{d})$, 
$\E[\cos(\omega^{\top}(x-y))]=e^{-\|x-y\|^{2}}$
\end{fact}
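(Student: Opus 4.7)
The statement is a special case of the standard characteristic function computation for Gaussian random vectors, so my plan is to reduce it to a one-dimensional Gaussian characteristic function evaluated at $t=1$. Set $v := x-y \in \R^d$, which is deterministic. Because $\omega = (\omega_1,\dots,\omega_d)$ has independent components $\omega_i \sim \mathcal N(0,1)$ (taking the convention implicit in the statement so that the output exponent matches $-\|v\|^2$; if one starts from $\mathcal N(0,I_d)$ the same argument yields $e^{-\|v\|^2/2}$ and the statement is invoked with the appropriate normalization of $\omega$), the inner product
\begin{equation*}
Z \;:=\; \omega^\top v \;=\; \sum_{i=1}^d \omega_i v_i
\end{equation*}
is a linear combination of independent centered Gaussians, hence itself a centered Gaussian with variance $\sum_i v_i^2 = \|v\|^2$.

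Next I would invoke the Gaussian moment generating/characteristic function identity: for $Z\sim\mathcal N(0,\sigma^2)$ and any real $t$, $\E[e^{itZ}] = e^{-t^2\sigma^2/2}$. This can itself be proved in one line by completing the square inside the Gaussian integral, or by recognizing $\E[e^{itZ}]$ as an analytic continuation of $\E[e^{tZ}] = e^{t^2\sigma^2/2}$. Plugging in $\sigma^2 = \|v\|^2$ and the appropriate $t$, I obtain
\begin{equation*}
\E\bigl[e^{i\,\omega^\top(x-y)}\bigr] \;=\; e^{-\|x-y\|^2}.
\end{equation*}
Taking real parts and using that $\omega^\top(x-y)$ has a symmetric distribution (so $\E[\sin(\omega^\top(x-y))]=0$) gives $\E[\cos(\omega^\top(x-y))] = e^{-\|x-y\|^2}$.

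There is essentially no obstacle here: the only subtlety is bookkeeping of the factor of $2$ between the variance of $\omega$ and the exponent of the kernel, which must be chosen consistently so that $\tfrac12 \Var(\omega^\top v) = \|v\|^2$. Once the convention is fixed, the proof is a direct application of the Gaussian characteristic function, and the fact that $\E[\sin(\cdot)]$ vanishes by symmetry lets us extract $\cos$ from $e^{i(\cdot)}$ without any additional work. This fact is then combined by the authors with the product-to-sum identity $2\cos A\cos B = \cos(A-B)+\cos(A+B)$ and the uniform phase $\theta\sim U[0,\pi]$ (which kills the $\cos(A+B)$ term in expectation) to show that the RFF estimator is unbiased for the Gaussian kernel.
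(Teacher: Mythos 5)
Your proof is correct. The paper itself gives no proof of this fact---it simply cites Bochner's theorem---and your direct computation (reduce $\omega^\top(x-y)$ to a one-dimensional centered Gaussian of variance $\|x-y\|^2$, apply the Gaussian characteristic function, and drop the sine term by symmetry) is exactly the standard instantiation of Bochner's theorem for the Gaussian kernel, so there is nothing to compare against. You are also right to flag the normalization: with $\omega\sim\mathcal N(0,I_d)$ literally, the computation yields $e^{-\|x-y\|^2/2}$, so the statement as written implicitly rescales $\omega$ (or the kernel bandwidth) by a factor of $\sqrt 2$; this is a genuine loose end in the paper's statement, not in your argument.
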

The \emph{random fourier features} basic estimator is given by
\[
Z_{RFF}(x) := \frac{2}{|P|}\sum_{y \in P} \cos(\omega^{\top}x +b) \cos(\omega^{\top}y+b)
\] where $\omega\sim \mathcal{N}(0,I_{d})$ and $b\sim U[0,2\pi]$
\begin{proposition}\label{prop:RFF-var}
The first two moments of the RFF estimator are given by
\begin{eqnarray}
\E[Z_{RFF}(x)] &=& \frac{1}{|P|}\sum_{y\in P}k(x,y)\\
\E[Z_{RFF}^{2}]&=&  \frac{1}{|P|} + \frac{1}{|P|}\sum_{y\in P} k(2x,2y) + \frac{1}{|P|^{2}}\sum_{y\neq z\in P}k(y,z) + \frac{1}{|P|^{2}}\sum_{y\neq z\in P} k(2x,y+z)
\end{eqnarray}
\end{proposition}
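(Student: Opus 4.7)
The plan is to reduce both moment computations to direct applications of Bochner's theorem (the fact stated just before the proposition) by integrating first over $b\sim U[0,2\pi]$ and then over $\omega\sim\mathcal{N}(0,I_d)$. The key observation is that, for any integer $k\neq 0$, $\E_b[\cos(\text{(stuff independent of }b) + kb)] = 0$, so after taking $\E_b$ only the ``DC'' components (i.e.\ those terms in which $b$ cancels) survive; each such surviving term is of the form $\cos(\omega^\top v)$ for some vector $v$ independent of $b$, and Bochner's theorem then identifies its $\omega$-expectation as $e^{-\|v\|^2}$, which is a kernel value of the form $k(a,a')$ for appropriate $a,a'$.

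For the first moment I would apply the product-to-sum identity
\begin{equation*}
2\cos(\omega^\top x + b)\cos(\omega^\top y + b) \;=\; \cos\bigl(\omega^\top(x-y)\bigr) + \cos\bigl(\omega^\top(x+y) + 2b\bigr)
\end{equation*}
termwise. The second term averages to zero under $\E_b$, and Bochner converts the first term to $k(x,y)$, yielding $\E[Z_{RFF}(x)] = \frac{1}{|P|}\sum_{y\in P}k(x,y)$.

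For the second moment I would expand
\begin{equation*}
Z_{RFF}^2 \;=\; \frac{4}{|P|^2}\cos^2(\omega^\top x + b)\!\sum_{y,z\in P}\!\cos(\omega^\top y + b)\cos(\omega^\top z + b)
\end{equation*}
and split the double sum into the diagonal part ($y=z$) and the off-diagonal part ($y\neq z$). In each summand I would use $\cos^2\theta = \tfrac12(1+\cos 2\theta)$ together with the product-to-sum identity (applied once on the $\cos\beta\cos\gamma$ factor, then again on any resulting cross-term against $\cos 2\alpha$) to write the integrand as a linear combination of cosines whose arguments are either (i) linear in $\omega$ with all $b$-dependence cancelled, or (ii) of the form $(\text{affine in }\omega) + 2b$ or $(\ldots)+4b$. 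Category (ii) vanishes under $\E_b$. Category (i) yields exactly four types of surviving cosines: a pure constant (contributing the leading $\tfrac{1}{|P|}$ after collecting the diagonal prefactors), $\cos(2\omega^\top(x-y))$ on the diagonal (giving $k(2x,2y)$ by Bochner), $\cos(\omega^\top(y-z))$ off the diagonal (giving $k(y,z)$), and $\cos(\omega^\top(2x-y-z))$ off the diagonal (giving $k(2x,y+z)$, since $\|2x - (y+z)\|^2 = \|(2x)-(y+z)\|^2$). Assembling the four surviving contributions with their coefficients reproduces the four-term formula in the statement.

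The main obstacle is not conceptual but purely bookkeeping: there are a good dozen trigonometric cross-terms generated by combining $\cos^2$-expansions with product-to-sum identities, and one must systematically classify each as category (i) or (ii) and then collect multiplicative constants carefully. Once the categorization is done, every remaining step is a one-line application of Bochner's theorem.
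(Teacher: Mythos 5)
Your proposal is correct and follows essentially the same route as the paper: the paper packages the $\E_b$-averaging into its ``Random angle'' identities (Proposition \ref{prop:angles}), which are exactly the product-to-sum computations you carry out inline, and then applies Bochner's theorem to the surviving $b$-free cosines. The only caveat is the same constant bookkeeping the paper itself glosses over (e.g.\ the $\tfrac12$ factors in front of $k(2x,2y)$ and $k(2x,y+z)$), which does not affect the $O(1)$ variance bound the proposition is used for.
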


To prove the lemma we require the following trigonometric identities.
\begin{proposition}[Random angle]\label{prop:angles}
 Let $B\sim U[0,2\pi]$ and $a,b,c\in \R$, we have the following identities:
\begin{eqnarray}
\E_{B}[\cos(a+B)\cos(c+B)] &=& \frac{1}{2}\cos(a-c)\label{eq:A}\\
\E_{B}[\cos^{2}(a+B)\cos^{2}(c+B)] &=& \frac{1}{4}+\frac{1}{8}\cos(2a-2c)\label{eq:B}\\
\E_{B}[\cos^{2}(a+B)\cos(b+B)\cos(c+B)] &=&\frac{1}{4}\cos(b-c)+\frac{1}{8}\cos(2a-(b+c))\label{eq:C}
\end{eqnarray}
\end{proposition}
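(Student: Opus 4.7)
The plan is to establish each of the three identities by expanding the products of cosines into sums of cosines with linear phases in $B$, and then invoking the elementary fact that $\E_B[\cos(kB+\varphi)] = 0$ for every integer $k \neq 0$ and every phase $\varphi\in\R$ when $B \sim U[0,2\pi]$; this immediately kills every term whose argument retains a multiple of $B$, leaving only the $B$-independent pieces that make up the right-hand sides.

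For identity \eqref{eq:A}, I would apply the product-to-sum formula $\cos(\alpha)\cos(\beta) = \tfrac{1}{2}[\cos(\alpha-\beta) + \cos(\alpha+\beta)]$ with $\alpha = a+B$ and $\beta = c+B$. This gives $\tfrac{1}{2}\cos(a-c) + \tfrac{1}{2}\cos(a+c+2B)$, and the second summand vanishes in expectation because its argument has a nonzero coefficient ($k=2$) on $B$.

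For identity \eqref{eq:B}, I would first apply the power-reduction identity $\cos^2(\theta) = \tfrac{1}{2}(1+\cos(2\theta))$ to both squared factors, yielding $\tfrac{1}{4}(1+\cos(2a+2B))(1+\cos(2c+2B))$. Expanding produces four terms: the constant $\tfrac{1}{4}$, two terms linear in $\cos(2a+2B)$ or $\cos(2c+2B)$ which die in expectation, and a cross term $\tfrac{1}{4}\cos(2a+2B)\cos(2c+2B)$ that I would again split via product-to-sum into $\tfrac{1}{8}\cos(2a-2c) + \tfrac{1}{8}\cos(2a+2c+4B)$. Only the constant $\tfrac{1}{4}$ and the $B$-free $\tfrac{1}{8}\cos(2a-2c)$ survive. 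Identity \eqref{eq:C} is analogous: reduce $\cos^2(a+B)$ by power reduction and combine $\cos(b+B)\cos(c+B)$ by product-to-sum, producing a product of two two-term sums. Expanding yields four terms; the $B$-independent surviving contribution is $\tfrac{1}{4}\cos(b-c)$, and the cross term $\tfrac{1}{4}\cos(2a+2B)\cos(b+c+2B)$ contributes $\tfrac{1}{8}\cos(2a-(b+c))$ after one more product-to-sum expansion, the companion $\tfrac{1}{8}\cos(2a+b+c+4B)$ being annihilated by the expectation.

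There is no substantive obstacle: the entire argument is mechanical trigonometric bookkeeping. The only thing to verify carefully is that every surviving $B$-dependent cosine has a nonzero integer multiple of $B$ in its phase, which follows directly from $\int_0^{2\pi}\cos(kB+\varphi)\,\mathrm{d}B = 0$ for $k \in \Z\setminus\{0\}$. Identity \eqref{eq:A} will then be used in proving the unbiasedness claim of Proposition~\ref{prop:RFF-var}, while \eqref{eq:B} and \eqref{eq:C} handle the diagonal and off-diagonal contributions to $\E[Z_{RFF}^2]$ respectively.
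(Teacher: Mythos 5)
Your proof is correct, and it takes a cleaner, more systematic route than the paper's. The paper derives \eqref{eq:A} indirectly: it uses integration by parts together with $2\pi$-periodicity to show $\E_{B}[\sin(a+B)\sin(c+B)]=\E_{B}[\cos(a+B)\cos(c+B)]$, then plugs this into $\cos(A)\cos(C)=\cos(A-C)-\sin(A)\sin(C)$ to get a self-referential equation $\E_{B}[\cos(a+B)\cos(c+B)]=\cos(a-c)-\E_{B}[\cos(a+B)\cos(c+B)]$ and solves for the expectation; identities \eqref{eq:B} and \eqref{eq:C} are then reduced to \eqref{eq:A} via power reduction (implicitly using that $2B \bmod 2\pi$ is again uniform). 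You instead expand everything by product-to-sum and kill all $B$-dependent terms with the single observation $\int_0^{2\pi}\cos(kB+\varphi)\,\mathrm{d}B=0$ for $k\in\Z\setminus\{0\}$. Your version buys two things: it avoids the integration-by-parts detour entirely, and it makes the computation for \eqref{eq:C} fully explicit, whereas the paper's proof only sketches \eqref{eq:B} and dispatches \eqref{eq:C} with ``invoking the first part completes the proof.'' The only point worth stating explicitly in a final write-up is the vanishing lemma itself, which you already flag; with that, the argument is complete.
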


\begin{proof}[Proof of Proposition \ref{prop:RFF-var}]
We have that $\E[Z_{RFF}(x)] = \textrm{KDE}_{P}(x) =  \frac{1}{|P|}\sum_{y\in P} k(x,y)$  Further
\[
\E[Z^{2}_{RFF}] = \frac{1}{|P|^{2}}\E\left\{\left|\sum_{y\in P}Z(x,y)\right|^{2}\right\} 
= \frac{1}{|P|^{2}}\left\{\sum_{y\in P} \E[Z(x,y)^{2}] + \sum_{y\neq z\in P} \E[Z(x,y)Z(x,z)]\right\}\]
where $Z(x,y) := 2 \cos(\omega^{\top}x+b)\cos(\omega^{\top}y+b)$.  We start with the first term:
\begin{align*}
\E[Z(x,y)^{2}] & = 4\E[ \cos^{2}(\omega^{\top}x+b)\cos^{2}(\omega^{\top}y+b)] = 1 + \frac{1}{2}k(2x,2y)
\end{align*}
Next, we analyze the second term 
\begin{equation}
\E[Z(x,y)Z(x,z)] = 4\E[\cos^{2}(\omega^{\top}x+b)\cos(\omega^{\top}y+b)\cos(\omega^{\top}z+b)] = k(y,z) + \frac{1}{2}k(2x,y+z)
\end{equation}
The second moment thus becomes:
\begin{align}
\E[Z^{2}_{RFF}] = \frac{1}{|P|} + \frac{1}{|P|}\sum_{y\in P} k(2x,2y) + \frac{1}{|P|^{2}}\sum_{y\neq z\in P}k(y,z) + \frac{1}{|P|^{2}}\sum_{y\neq z\in P} k(2x,y+z)
\end{align}
Imagine, that $P$ consists of $n$ points all placed at distance $\sqrt{\log(1/\mu)}$ of $x$. In this case the term $\frac{1}{|P|}\sum_{y\neq z\in P}k(y,z)$ is constant independent of $\mu$. So, when the dataset is highly correlated the variance of the estimator is very large.
\end{proof}
\begin{proof}[Proof of Proposition \ref{prop:angles}] Using partial integration and the fact that $\sin,\cos$ are periodic functions with period $2\pi$, we get
\begin{eqnarray*}
\E_{B}[\sin(a+B)\sin(c+B)]&=& - \left[\cos(a+B)\sin(c+B)\right]_{0}^{2\pi}+\E_{B}[\cos(a+B)\cos(c+B)]\\
&=& \E_{B}[\cos(a+B)\cos(c+B)]
\end{eqnarray*}
The first equation then follows from the identity $\cos(A)\cos(B)=\cos(A-B)-\sin(A)\sin(B)$. To get the second equation we use the identities  $\cos^{2}(a)=\frac{1}{2}(1+\cos(2a))$  and  $\E_{B}[\cos(2a+2B)]=0$ to get
\[
\E_{B}[\cos^{2}(a+B)\cos^{2}(c+B)]=\frac{1}{4}+\frac{1}{4}\E_{B}[\cos(2a+2B) \cos(2c+2B)]
\]\
Finally, invoking the first part completes the proof.
\end{proof}	
\paragraph{Cost of Linearity} Looking at the RFF estimator one might think that it requires linear time per query. However, observe that if we spend linear time once to compute  $\phi_{P} = \frac{1}{|P|}\sum_{y\in P} \sqrt{2} \cos(\omega^{\top}y+\theta)$, the estimate $Z_{RFF}(x) = \sqrt{2} \cos(\omega^{\top}x) \cdot \phi_{P}$ can be computed in $O(1)$ time per query. Perhaps counter-intuitively this shows that a simple rescaling gives an unbiased estimate for all queries $x$! Thus, it is not surprising that the variance of the estimator is essentially as large as possible. If we want to reduce the variance, we can take the average of $m$ such estimates, that would be equivalent to constructing a randomized embedding $\phi:\R^{d}\to \R^{m}$  of the points into a euclidean space such that $\mathrm{KDE}_{P}(x) \approx \langle\phi(x), \frac{1}{|P|}
\sum_{y\in P}\phi(y)\rangle$. This approach is quite general and can be extended to a large range of kernels \cite{rahimi2007random,kar2012random,pham2013fast}. However, the inherent variance of the randomized embedding suggests that this approach cannot beat even the  Random Sampling estimator.

\end{document}